\newtheorem{theorem}{Theorem}
\declaretheorem[name=Theorem,sibling=theorem]{thm-restate}
\newtheorem{lemma}{Lemma}
\newtheorem{claim}{Claim}
\newtheorem{corollary}{Corollary}
\theoremstyle{definition}
\newtheorem{definition}{Definition}
\newcommand{\Prob}[2]{\mathrm{Pr}_{#1}\left[#2\right]}
\newcommand{\EX}[2]{\underset{#1}{\mathbb{E}}\left[ #2 \right]}
\newcommand{\R}{\mathbbm R}
\newcommand{\calD}{\mathcal{D}}
\newcommand{\calF}{\mathcal{F}}
\newcommand{\calM}{\mathcal{M}}
\newcommand{\calW}{\mathcal{W}}
\DeclareMathOperator*{\argmin}{arg\,min}
\title{Automating Food Drop: The Power of Two Choices for Dynamic and Fair Food Allocation}
\date{}
\author{%
  Marios Mertzanidis \\
  Purdue University\\
  \texttt{mmertzan@purdue.edu}\\ 
  \and
  Alexandros Psomas \\
  Purdue University\\
  \texttt{apsomas@cs.purdue.edu}\\
  \and
  Paritosh Verma \\
  Purdue University\\
  \texttt{verma136@purdue.edu}
}
\begin{document}

\maketitle

\begin{abstract}
Food waste and food insecurity are two closely related pressing global issues. Food rescue organizations worldwide run programs aimed at addressing the two problems. In this paper, we partner with a non-profit organization in the state of Indiana that leads \emph{Food Drop}, a program that is designed to redirect rejected truckloads of food away from landfills and into food banks. The truckload to food bank matching decisions are currently made by an employee of our partner organization. In addition to this being a very time-consuming task, as perhaps expected from human-based matching decisions, the allocations are often skewed: a small percentage of the possible recipients receives the majority of donations. Our goal in this partnership is to completely automate Food Drop.
In doing so, we need a matching algorithm for making real-time decisions that strikes a balance between ensuring fairness for the food banks that receive the food and optimizing efficiency for the truck drivers.
In this paper, we describe the theoretical guarantees and experiments that dictated our choice of algorithm in the platform we built and deployed for our partner organization.
Our work also makes contributions to the literature on load balancing and balls-into-bins games, that might be of independent interest. Specifically, we study the allocation of $m$ weighted balls into $n$ weighted bins, where each ball has two non-uniformly sampled random bin choices, and prove upper bounds, that hold with high probability, on the maximum load of any bin.
\end{abstract}

\section{Introduction}


Food waste, the act of discarding safe, high-quality food, instead of eating it, is a massive, worldwide issue. Overall, about a third of the world's food is thrown away~\cite{gustavsson2011global}.
In the United States, between 30 and 40 percent of food is wasted~\cite{USDA}, which roughly equals 149 billion meals~\cite{feedingAmericaWaste}. At the same time, almost paradoxically, food insecurity, defined by the United States Department of Agriculture as ``limited or uncertain availability of nutritionally adequate and safe foods or limited or uncertain ability to acquire acceptable foods in socially acceptable ways''~\cite{bickel2000guide}, persists as a pressing global challenge. In 2021, close to 12 percent of the global population, or, equivalently, 928 million people, were food insecure~\cite{unicef2021state}. In the United States, 44.2 million people lived in food-insecure households in 2022~\cite{usdaSecurity2023}, affecting every single state, and, in fact, every single county in the country~\cite{feedingAmericaHunger}.

Food rescue organizations worldwide are leading programs aimed at addressing food waste and food insecurity. \emph{Food Drop} is such a program, run by the non-governmental organization \emph{Indy Hunger Network} (IHN), in the state of Indiana. Food Drop matches truck drivers with rejected truckloads of food---food that would otherwise be discarded at a landfill---to food banks. The average amount of food matched per month is 10,447 lbs, with some months having as many as 62,000 lbs of food. Matching decisions are currently made by the Food Assistance Programs Manager of IHN, who also needs to manually check availability and willingness to accept each donation from the food bank's side, facilitate the exchange of contact information between the food bank and truck driver, and so on. In addition to this being a very time-consuming task, similarly to other allocation problems in the food space, e.g.,~\cite{lee2019webuildai}, the allocations of Food Drop are often skewed: a small percentage of the possible recipients receives the majority of donations. In this work, we partner with IHN with the goal of completely automating Food Drop.

The main focus of this paper is to present the model, theoretical results, and experiments that dictated our choice of algorithm for matching drivers to food banks for the platform we built for IHN. Our platform was deployed in April 2024~\cite{FoodDrop}. Before we describe these results, we briefly describe the platform for automated Food Drop.

\paragraph{Our platform}

\begin{figure}[hbt]
  \centering
  \includegraphics[width=1\linewidth]{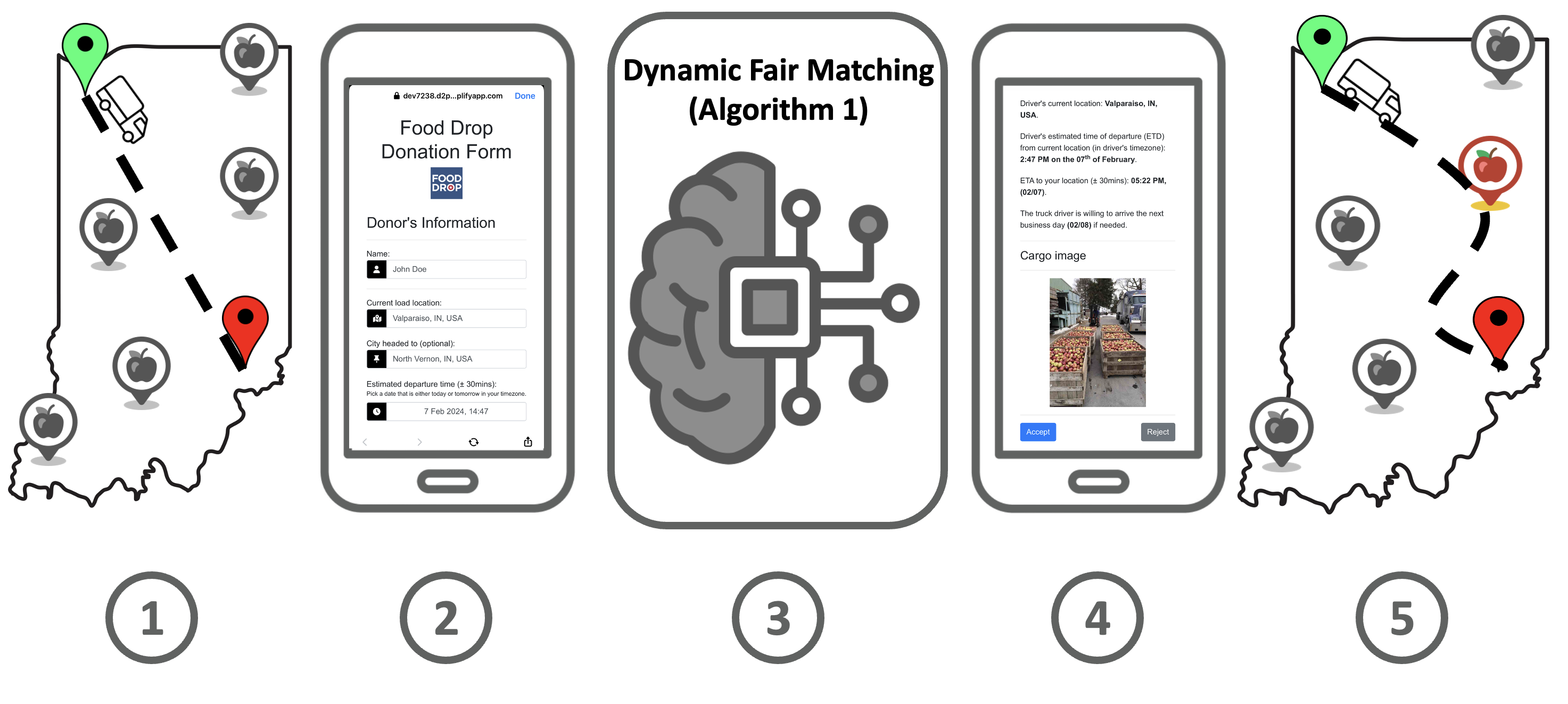} 
  \caption{A typical use case of our platform. 1) A truck driver with a  rejected food load wants to go from location A to location B. 2) They fill out a form in the online platform providing relevant details (see screenshot from our platform). 3) The back end runs~\Cref{ourAlgorithm} and matches the request to a food bank. 4) The representative of the matched food bank receives a notification on their phone, via SMS, through which they can access the information regarding the food load (see screenshot from our platform). 5) Once the food bank accepts the delivery, the truck driver is notified and given the contact information of the matched food bank, via SMS.}
  \label{fig:example}
\end{figure}

Our platform works as follows (see~\Cref{fig:example}). A driver that wants to make a donation fills out a form we designed, which elicits their information, including current location, destination, estimated departure time, weight and type of food they want to donate, as well as the reason the food load was rejected and a picture of the load. Our back end, currently hosted at Amazon AWS, runs an algorithm to match the request to a food bank. A representative from the food bank receives a notification on their phone, via SMS, which includes a link they can follow to access the information regarding the food load, as well as accept or reject. If they reject, we identify a new recipient, until one accepts.\footnote{In the current implementation, we default to a human after the fifth rejection.} Once a match is made, the truck driver receives an SMS notification that includes the contact information of their matched food bank. The code of our platform is open source, and can be found in a GitHub repository~\cite{IHNRepo}.
The one question that remains unanswered, the main question we address in this paper, is which algorithm should we use to match drivers with food banks.

\paragraph{A new problem in dynamic fair division}

In order to formally argue about our choice of algorithm, we develop the following, new model for dynamic fair division. There is an undirected graph of $n$ nodes, corresponding to counties, and each node $v$ has a population of food-insecure individuals $p_v$. Every edge in the graph has a weight, corresponding to the distance between counties; these distances satisfy the triangle inequality. A subset $\mathcal{F}$ of the nodes have food banks that can accept food donations, and every food-insecure individual is served by the food bank closest to them. Drivers appear over time, one in each step, for a total of $m$ steps, and need to be matched, immediately and irrevocably, to a food bank. Drivers have an origin node and a destination node, that are picked independently from a distribution $\calD$. We assume that $\calD$ selects node $v$ with probability \emph{approximately} $p_v / \sum_{u} p_u$, and specifically, probability in the range $[\frac{p_v}{\alpha \cdot \sum_{u} p_u} , \frac{\beta \cdot  p_v}{\sum_{u} p_u}]$. We call this an $(\alpha, \beta)$-biased distribution. Our fairness guarantees depend on $\alpha$ and $\beta$, which in practice we expect to be small. In our platform we expect drivers to have an origin and destination at large grocery stores and distribution centers; the number of these is proportional to the total population of a county, which, in turn, correlates strongly with the food-insecure population. The validity of this assumption is further strengthened by our data analysis presented in \cref{section:alphaBeta}.
Finally, drivers carry a donation of food, whose value is drawn from a distribution $\calW$.
When a driver delivers a donation to a food bank, the value of the donation is split equally among all the individuals the food bank serves (i.e., all the individuals for whom this is the nearest food bank).

As opposed to other problems in dynamic fair division, here we need to account for two-sided preferences: the preferences of the drivers and the preferences of the individuals served by food banks.
On one hand, we would like for drivers to not travel a lot. We say that a matching algorithm is $\delta$-driver efficient if the total distance a driver is asked to travel is within a $\delta$ factor of their optimal route, defined as the shortest path from the driver's origin to the driver's destination that goes through some food bank. On the other hand, we would like to provide fairness guarantees to the individuals served by the food banks. We say that a matching algorithm is $\varepsilon$-multiplicatively envy-free\footnote{We include ``multiplicatively'' to our definition because approximate envy-freeness typically has an additive, not a multiplicative, error.} if the total value received by an individual served by food bank $f$ at the end of the process is at least $1/\varepsilon$ times the value received by an individual served by a different food bank $f'$.

Our first result is a lower bound (\Cref{thm: lower bound}): even for $(1,1)$-biased distributions and donations with unit value, no matching algorithm is $(3-\delta)$-driver efficient, for $\delta>0$, and $\varepsilon$-multiplicatively envy-free, for $\varepsilon < n$. 
That is, non-trivial fairness guarantees are incompatible with $(3-\delta)$-driver efficiency.
Our main result (\Cref{thm: main result}) is a \emph{matching} upper bound.
We give a $3$-driver efficient algorithm that is $(1 + \varepsilon)$-multiplicatively envy-free with high probability, where, under some mild technical assumptions, $\varepsilon \in O \left( \frac{N \log N}{m - N \log N} \right)$ for unit value donations, and $\varepsilon \in O \left( \frac{N \log N}{m - N\sqrt{m}} \right)$ for donations whose values are drawn from a distribution $\calW$ such that $\EX{}{e^{\lambda \calW}} < \infty$, for some constant $\lambda$, and $\EX{}{\calW}=1$ (the latter assumption is without loss of generality).

Our algorithm itself,~\Cref{ourAlgorithm}, is quite simple to describe and implement, which is another very compelling reason for using it in our platform. It works as follows: for each driver, find the food bank $f_o$ that is closest to the driver's origin and the food bank $f_d$ that is closest to the driver's destination. Match the driver to the one of $f_o$ and $f_d$ whose served individuals have received less value so far.
The $3$ approximation guarantee of this algorithm follows from the triangle inequality. The fairness guarantee boils down to a new statement about the standard balls-into-bins problem. Since this result might be of independent interest, we describe it using balls-into-bins terminology. We start by giving a brief background on balls-into-bins processes.

\paragraph{Balls-into-bins: background}

Balls-into-bins processes represent a fundamental paradigm in probability theory, with a myriad of applications in computer science and operations research. The literature on these processes is extensive; here, to provide some necessary context, we briefly mention some of the results that are closely related to our work. For an in-depth survey, we refer the reader to~\cite{sitaraman2001power}.


Consider $n$ balls thrown into $n$ bins. It is well understood that assigning each ball to a uniformly random bin yields an allocation in which the heaviest bin won't have more than $(1+O(1)) \ln n / \ln \ln n $ balls, with high probability~\cite{gonnet1981expected}. In seminal work,~\cite{azar1994balanced} prove that choosing two bins uniformly at random and placing the ball in the least loaded bin yields an exponential improvement: the heaviest bin won't have more than $\ln \ln n/ \ln 2 +O(1)$ balls, with high probability.  
When generalizing to $m > n$ balls into $n$ bins, the so-called ``heavy loaded case,'' in a breakthrough result,~\cite{berenbrink2006balanced} prove that the maximum load is bounded by $m/n + O(\ln \ln n)$, with high probability. Amazingly, the deviation from $m/n$ does not depend on the number of balls inserted, a major difference with the random selection process which yields a $m/n + O(\sqrt{m \ln n/n})$ bound. A simpler and slightly weaker analysis was later given by~\cite{talwar2014balanced}. 

The aforementioned results assume that each bin is unit-sized and has an equal likelihood of being selected. Furthermore, balls are also of unit size. In many applications, these assumptions are quite restrictive.~\cite{byers2004geometric}, and later on,~\cite{wieder2007Heterogeneous}, study the problem under non-uniform selection probability of the bins.~\cite{Berenbrink2014non-uniform} consider non-uniform weights on bins. Regarding the weight of the inserted balls,~\cite{talwar2007weighted} and~\cite{yuval2015} study the problem where the balls' weights are sampled from a distribution. All these works prove similar bounds to the classic setting, illustrating the robustness of the paradigm.

\paragraph{Balls-into-bins: our contribution}
In this paper, we investigate the balls-into-bins process within the framework of non-uniformly weighted bins, non-uniform selection probabilities, and random weighting of each ball, drawn from an arbitrary (under mild assumptions) probability distribution.
Our work combines elements from all the aforementioned works. Specifically, we draw intuition from the potential function arguments of~\cite{yuval2015} (and \cite{talwar2014balanced}) and the coupling arguments of~\cite{wieder2007Heterogeneous} and \cite{Berenbrink2014non-uniform}. Unfortunately, these elements cannot be trivially combined to give results for our problem.~\cite{Berenbrink2014non-uniform}'s results on weighted bins heavily rely on a coupling and majorization argument, which only works when each bin is selected with a probability exactly proportional to its weight. On the other hand,~\cite{wieder2007Heterogeneous}'s coupling and majorization technique for handling non-uniform selection of bins cannot be used when the balls have different weights. Finally, the potential function arguments of~\cite{yuval2015}, that study weighted balls and the so-called $(1 + \beta)$-choice process, heavily rely on the fact that each bin is uniformly weighted. In this work, we introduce non-trivial steps that allow us to combine all three approaches. On a high level, we bound the change in every step for a potential function that depends on the load of each bin, borrowing from~\cite{yuval2015}. To handle weighted bins, we think of a bin of weight $N_i$ as being composed of $N_i$ slots. Instead of bounding the potential function for each bin, we bound it for each slot, using a  weaker, intermediate process that adds each ball to the selected slot but does not equally distribute the load among the slots of the corresponding bin. Finally, regarding the $(\alpha, \beta)$-biased sampling distribution, we prove upper and lower bounds on the probability we select the $i^{th}$ most loaded slot; we can delicately plug these bounds in our analysis without occurring a super-constant loss.

\paragraph{Our experiments}

In~\Cref{sec: experiments} we evaluate our algorithm, along with some natural benchmarks, across different states, using Feeding America's Map the Meal Gap 2023 dataset~\cite{MapMeal23} on food insecurity, and the locations of Feeding America's partner food banks. We see that our algorithm performs consistently well. For example, in the state of Indiana, the maximum multiplicative envy quickly converges to a number close to $1$, while on average drivers are only asked to drive 12\% more compared to their optimal route. We compare our algorithm's performance with the performance of a natural parameterized family of algorithms that interpolates between the optimal for fairness and optimal for driver efficiency algorithms.
We observe that, even though in some states in the US there are algorithms within this parameterized family that achieve comparable results to ours, in several other states, e.g. in the state of California, our algorithm achieves superior trade-offs between fairness and efficiency. That is, algorithms with comparable fairness guarantees are a lot less efficient, and algorithms with comparable efficiency guarantees are a lot less fair. This suggests that our approach offers robustness across diverse geographical contexts.

\subsection{Additional Related Work}

Beyond the aforementioned literature on the balls-into-bins process, our work contributes to the growing literature on dynamic or online fair division~\citep{kash2014no,aleksandrov2015online,friedman2015dynamic,friedman2017controlled,benade2018make,he2019achieving,zeng2020fairness,gkatzelis2021fair,barman2022universal,gorokh2021monetary,vardi2021dynamic,benade2022dynamic,benade2023fair}. Our point of departure is the fact that we have two-sided constraints: we must guarantee efficiency to the drivers and fairness to food banks. A related, recent thread in fair division studies two-sided~\cite{freeman2021two,gollapudi2020almost,igarashi2022fair,patro2020fairrec} and allocator's preferences~\cite{bu2023fair}; a major difference with our work is that we have to make matching decisions online.

Several works in Economics and Computer Science study or collaborate with real-world food rescue organizations. For example,~\cite{prendergast2017food} discusses the transition to the market-based solution that Feeding America uses to allocate food to food banks.~\cite{lee2019webuildai,kahng2019statistical} develop a recommendation system to assist a non-profit organization in matching food donations to recipients.
~\cite{aleksandrov2015online} develop a simple algorithm that they use to help Food Bank Australia collect and deliver donated food.
~\cite{benade2023achieving} partner with a food rescue platform and develop a framework for more equitable distribution of donations.

\section{Preliminaries}

We consider an undirected graph $G = (V,E)$, with $n = |V|$ nodes, corresponding to counties. A subset of these nodes, denoted by $\calF \subseteq V$, have food banks. For every node $v \in V$ there are $p_v \geq 0$ individuals that are food insecure and reside in that node. The total population is $N = \sum_{v \in V} p_v$. 
Each edge $e = (u,v) \in E$ has a weight $d(u,v)$ corresponding to the distance between $u$ and $v$; we assume that the distances satisfy the triangle inequality. The residents of each node are served from their closest food bank node. Let $f_{v} \in \mathcal{F}$ be the food bank that is closest to node $v \in V$, and let $N_{f} = \sum_{v \in V: f_{v} = f} p_v$ be the total number of individuals that food bank $f \in \mathcal{F}$ serves.

Drivers with food donations/loads arrive over time, one per round, for a total of $m$ rounds. A food load $\ell \in V^2 \times \mathbb{R}_{\geq 0}$ is defined as a triplet $\ell = (x, y, w)$ where $x \in V$ is the \emph{origin} of the driver, $y \in V$ is the \emph{destination} of the driver, and $w \geq 0$ is the \emph{value} of the food items that the driver is carrying (i.e., the value of the donation). Each food donation $\ell = (x,y,w)$ must be matched with a food bank $f \in \calF$ immediately and irrevocably upon its arrival.

\paragraph{Efficiency and Fairness.} Our goal is twofold: we want to minimize the distance drivers travel, while at the same time, we want the donations to be distributed fairly among the food banks.

We say that a matching algorithm is \emph{$\delta$-driver efficient} if, for all food donations $\ell = (x,y,w)$ matched to a food bank $f_{\ell} \in \calF$, it holds that $d(x,f_{\ell})+d(f_{\ell}, y)\le \delta \min_{f' \in \calF}{\{d(x,f')+d(f', y)\}}$.  That is, no driver travels more than a multiplicative $\delta$ factor of their optimal route, noting that this optimal route also passes through \emph{some} food bank.

We assume that food donations that a food bank receives are equally divided among the individuals it serves, and that each individual only cares about the total value of the food they receive. That is, we assume that individuals have identical preferences for food. Ideally, we would like an \emph{envy-free} allocation, i.e. every individual values the food they received more than they value the food a different person received. Let $w_f(t)$ be the total value of the food received by a food bank $f \in \calF$ at time $t$. We say that a matching algorithm is \emph{$\varepsilon$-multiplicatively envy-free}, or $\varepsilon$-mEF, if for all $f \in \calF$, $\varepsilon \, \frac{w_{f}(m)}{N_{f}} \geq \max_{f' \in \calF}{\left\{\frac{w_{f'}(m)}{N_{f'}}\right\}}$. That is, the total value received by an individual served by food bank $f$ at the end of the process (the total value $w_{f}(m)$ received by the food bank, divided by the population $N_f$ served by the food bank) is at least $1/\varepsilon$ times the value received by an individual served by a different food bank $f'$. For $\varepsilon=1$, the matching algorithm always outputs envy-free allocations, while, in general, $\varepsilon > 1$. We note that we add the quantifier ``multiplicative'' since the error in the standard notion of approximate envy-freeness is additive.

\paragraph{Stochastic arrivals, destinations, and values.} 
We assume that the parameters of each donation are stochastic. Let $\calD$ and $\calW$ be two probability distributions supported over $V$ and $\mathbb{R}_{\geq 0}$ respectively. We assume that for each incoming food donation $\ell = (x,y,w)$ the nodes $x,y$ are sampled independently from $\calD$, and the value $w$ of the donation is sampled from $\calW$. 

By considering worst-case choices for $\calD$ it is not difficult to convince ourselves that we can construct highly adversarial scenarios where no matching algorithm can simultaneously achieve non-trivial driver efficiency and non-trivial fairness. 
We thus make the following assumption on $\calD$. We say that the distribution $\calD$ is $(\alpha, \beta)$-biased if for every node $v \in V$ it holds that $\frac{p_v}{\alpha \cdot N} \le \Prob{u \sim \calD}{u=v} \le \frac{\beta \cdot p_v}{N}$. That is, a node $v$ is selected as the origin/destination node with probability almost proportional to $p_v/N$, the percentage of the population that resides in $v$. This assumption is reasonable, since in practice we expect most drivers to have their loads rejected at counties with a larger population (since those have more stores), and a larger population is also closely related to a larger food-insecure population.

Our positive results hold under the following less restrictive assumption. Let $S_{f} = \{v: v \in V, f_v = f\}$ be the set of nodes served by a food bank $f \in \calF$, and let $\calD'$ be the probability distribution over sets $S = \{S_{f}: f \in \calF\}$ induced by $\calD$ (i.e. $\calD$ samples a node, and $\calD'$ samples the set $S_{f}$ that contains that node). Then, for any $f \in \calF$, $\frac{N_{f}}{\alpha \cdot N} \le \Prob{S_{f^*} \sim \calD'}{S_{f^*}=S_{f}} \le \frac{\beta \cdot N_{f}}{N}$. The first assumption implies the second, but not vice-versa. Thus the second assumption is more general, and it will also be more convenient for our analysis.

\section{Dynamic Fair Division}\label{sec: theory}

In this section, we give a tight characterization of the fairness-efficiency trade-offs in our dynamic fair division problem.
Specifically, in~\Cref{sec: lower bound} we prove that $(3-\epsilon)$-driver efficiency is incompatible with non-trivial fairness guarantees.
In~\Cref{sec: hungrier or two} we present a tight upper bound: we give a $3$-driver efficient and $(1+\varepsilon)$-multiplicatively envy-free algorithm.
Our analysis crucially relies on a new balls-into-bins result, that we present separately in~\Cref{sec: balls and bins}.

\subsection{Lower bounds}\label{sec: lower bound}

We first show limits on the fairness guarantees of efficient matching algorithms.

\begin{theorem}\label{thm: lower bound}
Even for $(1,1)$-biased distributions and unit-value donations, there is no matching algorithm that is $(3-\delta)$-driver efficient and $\varepsilon$-multiplicatively envy-free, where $\delta>0$ and $\varepsilon < n$.
\end{theorem}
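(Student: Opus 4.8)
The plan is to construct a small, explicit graph instance on which any $(3-\delta)$-driver efficient algorithm is forced into a single choice for every donation, and then to argue this forced behavior yields terrible envy. First I would take a ``star-like'' construction: a central food bank $f_0$ serving a huge population, and $n-1$ peripheral food banks $f_1,\dots,f_{n-1}$ each serving a tiny population (say, a single food-insecure individual each, with $f_0$ serving many). The geometry is the key: I want every driver's origin and destination to be placed so that the optimal route $\min_{f'}\{d(x,f')+d(f',y)\}$ passes through a peripheral food bank $f_i$, while routing through $f_0$ costs exactly (or just under) a factor of $3$ more. Concretely, place $x$ and $y$ both co-located at (or adjacent to) some peripheral node served by $f_i$, so the optimal detour through $f_i$ has length roughly $0$ (or $2\epsilon$), whereas any other food bank, including $f_0$, is at distance $\geq 1$, making the detour $\geq 2$; scaling distances so the ratio is forced to be $< 3-\delta$ only for $f_i$. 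Then a $(3-\delta)$-driver efficient algorithm \emph{must} match this donation to $f_i$.

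Next I would choose the arrival sequence and the $(1,1)$-biased distribution $\calD$ so that the adversary's hand is realized with certainty (or with the guarantee being deterministic, I can just fix the instance — note the lower bound statement need not involve the stochastic model at all, since $(1,1)$-biased merely means $\calD$ is exactly proportional to populations, and I can make $\calD$ put all its mass on the peripheral nodes by giving $f_0$ its population at a node that is never an origin/destination, or by simply observing the theorem only requires existence of a bad instance). The cleanest route: make all $m$ donations land at the \emph{same} peripheral food bank $f_1$ (or, to respect $(1,1)$-biasedness with nontrivial populations elsewhere, distribute origins/destinations across peripheral nodes but in a way that still starves $f_0$). Since $f_0$ serves, say, $N - (n-1)$ individuals and receives $0$ value, while $f_1$'s individual receives positive value, the multiplicative envy is unbounded — in particular at least $n$ — because an individual at $f_1$ has received $w_{f_1}(m)/N_{f_1} > 0$ while an individual at $f_0$ has received $0$, so no finite $\varepsilon$ works, and certainly not $\varepsilon < n$. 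Alternatively, to get exactly the threshold $n$, I would balance the populations so that each peripheral bank serves $1$ individual and $f_0$ serves $1$ individual too, but there are $n-1$ peripheral banks all forced to receive donations while $f_0$ receives none — then compare the per-capita value.

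I expect the main obstacle to be the geometry: getting the distances to simultaneously satisfy the triangle inequality, force the factor-$3$ bound to be tight (so that $3-\delta$ rules out $f_0$ but $3$ would allow it), and keep $\calD$ genuinely $(1,1)$-biased with respect to the chosen populations. The standard trick is a ``path'' gadget: put $f_i$ at distance $1$ on one side and $f_0$ at distance $1$ on the other side of a node $z$, with $x=y=z$; then routing through $f_i$ costs $2$ and routing through $f_0$ also costs $2$ — that does \emph{not} separate them. So I need to be more careful: place $x$ and $y$ at distance $\epsilon$ from $f_i$ but on the ``far side'' so that $d(x,f_i)+d(f_i,y) = 2\epsilon$ tiny, while $d(x,f_0) \geq 1$ forcing $d(x,f_0)+d(f_0,y) \geq 2 - \epsilon$ (if $x,y$ are within $\epsilon$ of each other), giving ratio $\approx (2-\epsilon)/(2\epsilon) \to \infty$, which is even stronger than factor $3$. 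To make the threshold exactly $3$ rather than $\infty$, I instead put $x$ adjacent to $f_i$ (distance $\epsilon$) and $y$ at distance $1$ from $f_i$ on the opposite side of $f_0$, arranged on a line $x - f_i - f_0 - y$ with $d(x,f_i)=\epsilon$, $d(f_i,f_0)=1$, $d(f_0,y)=1$: then the $f_i$-route is $\epsilon + (1+1+\epsilon)$... this still needs tuning. The honest resolution is to realize the factor $3$ comes from the classic fact that the best food bank for a $(x,y)$ pair can be forced to be one where $d(x,f)=d(f,y)=1$ while a competitor at the \emph{midpoint} of a unit-length $x$–$y$ path has cost $1$ — a $3{:}1$ ratio appears when $x=y$ and the forced bank is at distance $1$ while an alternative sits essentially on top of $x=y$; but since here \emph{every} bank must be at distance $\geq$ something, I will calibrate a gadget where the unique $(3-\delta)$-feasible bank for donation $t$ rotates among the $n-1$ peripheral banks, each getting $m/(n-1)$ donations, while $f_0$ — at the ``center,'' at distance exactly achieving ratio $3$ — is never selected. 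Writing down one such metric explicitly (e.g., peripheral banks on a cycle, $f_0$ in the middle, donation $t$'s endpoints near peripheral bank $t \bmod (n-1)$) and verifying the triangle inequality and the $(1,1)$-biasedness is the bulk of the routine work, but the conceptual content is exactly the forced-choice-plus-starvation argument above.
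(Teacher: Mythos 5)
Your high-level plan---geometrically isolate one food bank so that $(3-\delta)$-driver efficiency (almost) forbids deliveries to it, then compare its per-capita value to its fair share---is the same idea as the paper's proof. But two steps of your execution do not work as stated. First, you cannot starve the isolated food bank $f_0$ \emph{completely}, and hence cannot get unbounded envy. For the mEF comparison to be meaningful, $f_0$ must serve some node $z$ with $p_z>0$; under a $(1,1)$-biased distribution that node is then both the origin and the destination with probability $(p_z/N)^2>0$, and in that event routing through $f_0$ (the nearest food bank to $z$) has cost $2d(z,f_0)=\min_{f'}\{d(z,f')+d(f',z)\}$, i.e.\ it is $1$-driver efficient, so no efficiency constraint can forbid it. Relatedly, you do not get to ``fix the arrival sequence'': the theorem's hypothesis forces origins and destinations to be sampled i.i.d.\ proportionally to populations, so the adversary only controls the graph, the populations, and the food bank locations. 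This is exactly why the threshold in the theorem is $\varepsilon<n$ rather than $\infty$: the paper's argument is that the isolated bank $A$ can be served only when \emph{both} endpoints land on the single node $v_1$ it is attached to (probability $\le 1/n^2$ per round, hence $\le m/n^2$ expected deliveries), whereas its fair share is $m/n$, giving a ratio of $n$.

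Second, you never produce a working metric, and you say so yourself (``this still needs tuning''); your collinear and cycle gadgets are left unverified, and the collinear one you analyze does not separate the two banks. The paper's gadget is worth internalizing: a hub $B$ at distance $1$ from nodes $v_1,\dots,v_n$ (each of population $1$), food banks at $v_2,\dots,v_n$ and at an extra zero-population node $A$ with $d(A,v_1)=2-\delta/4$, so that $A$ serves $v_1$. Any donation whose endpoints are not both $v_1$ has an endpoint that already hosts a food bank, so its optimal route has length $2$, while detouring to $A$ costs at least $6-\delta/2>(3-\delta)\cdot 2$; this is the concrete factor-$3$ separation your sketch is missing. To repair your proof you would need to (i) write down such a metric and verify the triangle inequality, and (ii) replace the ``$f_0$ receives nothing'' claim with a probabilistic bound of the form ``$f_0$ is eligible only on an event of probability $O(1/n^2)$,'' then compare expected deliveries to the fair share $m/n$.
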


\begin{proof}

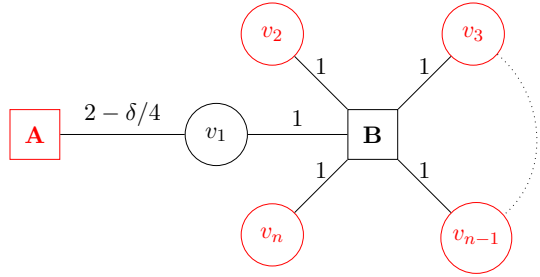
\begin{figure*}[h]
    \centering
    \resizebox{0.45\textwidth}{!}{
    \begin{tikzpicture}[node distance=2cm, auto]
        \node[rectangle, minimum size=8mm, draw, red,  font=\bfseries] (A) {A};
        \node[circle, minimum size=10mm,draw,  right= 20mm of A] (B) {$v_1$};
        \node[rectangle, minimum size=8mm, draw, right= 16mm of B, font=\bfseries] (C) {B};
        \node[circle, minimum size=10mm, draw, above left= 12 mm of C, red] (D) {$v_2$};
        \node[circle, minimum size=10mm, draw, above right= 12 mm of C, red] (E) {$v_3$};
        \node[circle, minimum size=10mm, draw, below right= 12mm of C, red] (F) {$v_{n-1}$};
        \node[circle, minimum size=10mm,  draw, below left= 12mm of C, red] (G) {$v_{n}$};
        
        \draw (A) -- node[above] {$2-\delta/4$} (B);
        \draw (B) -- node[above] {$1$} (C);
        \draw (C) -- node[above] {$1$} (D);
        \draw (C) -- node[above] {$1$} (E);
        \draw (C) -- node[above] {$1$} (F);
        \draw (C) -- node[above] {$1$} (G);
        \draw[dotted] (E) to[out=-40,in=40,looseness=1] (F);
    \end{tikzpicture}
    }
    \caption{An example of a hard instance. Nodes colored red, the set $\{A, v_2, \cdots, v_{n}\}$, are the locations of the food banks. The square nodes ($A$ and $B$) have zero population; all other nodes have a population of $1$.}
    \label{figure:counterExample}
\end{figure*}

    Consider the example in~\Cref{figure:counterExample}, with $V = \{A, B, v_1, \cdots, v_{n}\}$ and $\calF = \{A, v_2, \cdots, v_{n}\}$. Nodes $A, B$ have zero population, i.e., $p_A = p_B = 0$, while nodes $v_1, \cdots, v_{n}$ have population one, i.e., $p_{v_1} = \cdots = p_{v_{n}} = 1$. Therefore, $N = n-2$. We assume that all donations have unit value, i.e. $\calW$ takes the value $1$ with probability $1$, and that $\calD$ is $(1,1)$ biased, i.e. the origin/destination node is $v$ with probability exactly $p_v/N$. Finally, The distances between nodes are shown in~\Cref{figure:counterExample}, i.e., $d(A, v_1) = 2-\frac{\delta}{4}$ and $\forall i \in [n]: d(v_i, B) = 1$.

    Consider a $(3-\delta)$-driver efficient algorithm $\calM$. When $\calM$ delivers a donation to node $A$, it must be that both the origin node and the destination node of that donation were $v_1$, noting that $A$ and $B$ cannot be the origin/destination because they have zero population. 
    When $v_1$ is not both the origin and destination, then the origin or destination has a food bank, so the optimal distance is 2. Therefore, if we wanted to redirect the driver through $A$, the driving distance would be at least $6 - \frac{\delta}{2}$. Since $\calM$ is $(3-\delta)$-driver efficient this cannot be the case. 
    The probability with which $\calM$ serves $A$ is then at most $\frac{1}{n^2}$. The expected number of deliveries made to $A$ is therefore at most $\frac{m}{n^2}$. An envy-free allocation would give exactly $\frac{m}{n}$ deliveries to $A$. Thus, $\calM$ is at best $n$-mEF.
    \end{proof}

\subsection{The Power of Two Choices in Food Allocation}\label{sec: hungrier or two}

In this section we present our algorithm,~\Cref{ourAlgorithm}, as well as state and prove its guarantees.

\begin{algorithm}[t] 
         \SetAlgoLined
	     \KwIn{Graph $G = (V,E)$, distance metric $d(\cdot, \cdot)$, population $p_v$ for each node $v \in V$, set of food banks $\calF \in V$, an set of deliveries $\{\ell_j = (x_j, y_j, w_j)\}_{j \in [m]}$.}
	     \KwOut{An allocation of food deliveries to food banks.}
      
      \hrulefill
      
	   \For{each each time step $t$}{
	       1: Find the food banks $f_o$ and $f_d$ that are closest to the origin $x_t$ and the destination $y_t$, respectively. \\
            2: Assign the delivery to either $f_o$ or $f_d$, based on which one has received the least total value of food in the first $t-1$ steps, i.e., $f^* = \argmin_{f \in \{f_o, f_d\}} w_f(t-1)$.
	   }    
        \caption{Two Choices in Food Allocation}
        \label{ourAlgorithm}
\end{algorithm}

\begin{theorem}\label{thm: main result}
\Cref{ourAlgorithm} is $3$-driver efficient and $(1 + \varepsilon)$-multiplicatively envy-free with high probability, where
\begin{itemize}[leftmargin = *]
    \item If donations have unit value, and $2 \ln\left( \frac{\alpha \beta -1}{\alpha \beta - \beta} \right) > \ln\left( \frac{\alpha \beta -1}{\alpha -1} \right)$, then $\varepsilon \in O \left( \frac{N \log N}{m - N \log N} \right)$.
    \item If $\alpha < \sqrt{\frac{5}{4}}$, $\beta < \sqrt{\frac{4}{3}}$, $\EX{}{e^{\lambda \calW}} < \infty$, for some constant $\lambda$, and $\EX{}{\calW}=1$, then $\varepsilon \in O \left( \frac{N \log N}{m - N \sqrt{m}} \right)$.
\end{itemize}
\end{theorem}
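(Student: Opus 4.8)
The 3-driver efficiency is immediate from the triangle inequality: if $\calM$ matches $\ell=(x,y,w)$ to $f_o$ (the food bank closest to $x$), then the route $x \to f_o \to y$ has length $d(x,f_o) + d(f_o,y) \le d(x,f_o) + d(f_o,x) + d(x,y) \le 2\,d(x,f_o) + d(x,y)$, and since the optimal route through some food bank $f'$ has length at least $\max\{d(x,f'), d(f',y)\} \ge d(x, f_o)$ (as $f_o$ is the closest food bank to $x$) and at least $d(x,y)$, we get a factor of $3$; the case where $\ell$ is matched to $f_d$ is symmetric. So the real content is the fairness guarantee, and the plan is to reduce it to a balls-into-bins statement and invoke the result proved in~\Cref{sec: balls and bins}.

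The reduction works as follows. Identify each food bank $f \in \calF$ with a bin of \emph{weight} $N_f$ (the number of individuals it serves), so that $\sum_f N_f = N$. Each arriving donation is a ball: by the less restrictive assumption in the Preliminaries, the pair $(S_{f_o}, S_{f_d})$ of ``chosen bins'' is exactly a pair of two i.i.d.\ samples from the $(\alpha,\beta)$-biased distribution $\calD'$ over the bins (weighted by $N_f$), and \Cref{ourAlgorithm} places the ball in whichever of the two has smaller \emph{load} $w_f(t-1)/N_f$ — this is precisely the weighted two-choice process with non-uniform selection probabilities. The ball's weight is $w_j$: in the first bullet $w_j \equiv 1$, and in the second bullet $w_j \sim \calW$ with $\E[\calW]=1$ and a finite exponential moment. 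The balls-into-bins theorem then yields, with high probability, a bound on the maximum normalized load: $\max_f w_f(m)/N_f \le m/N + \Gamma$, where $\Gamma \in O(\log N)$ in the unit-value case (under the stated condition $2\ln\frac{\alpha\beta-1}{\alpha\beta-\beta} > \ln\frac{\alpha\beta-1}{\alpha-1}$ on the bias parameters) and $\Gamma \in O(\sqrt m)$ in the random-weight case (under $\alpha < \sqrt{5/4}$, $\beta < \sqrt{4/3}$). Symmetrically — the same process run ``upside down,'' or a direct argument — the \emph{minimum} normalized load is at least $m/N - \Gamma$ with high probability; alternatively, since $\sum_f w_f(m) = \sum_j w_j$, which concentrates around $m$, one bin being far below $m/N$ would force another far above, so a one-sided max bound combined with concentration of the total already pins the minimum from below.

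Given these two bounds, for any $f, f'$ we have
\[
\frac{w_f(m)/N_f}{w_{f'}(m)/N_{f'}} \le \frac{m/N + \Gamma}{m/N - \Gamma} = 1 + \frac{2\Gamma}{m/N - \Gamma} = 1 + \frac{2N\Gamma}{m - N\Gamma},
\]
so $\calM$ is $(1+\varepsilon)$-mEF with $\varepsilon = \frac{2N\Gamma}{m - N\Gamma}$, which is $O\!\left(\frac{N\log N}{m - N\log N}\right)$ in the unit case and $O\!\left(\frac{N\log N}{m - N\sqrt m}\right)$ in the random-weight case (absorbing the $\sqrt m$ error and writing the bound uniformly; note $N\sqrt m$ dominates $N\log N$). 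A minor technical point to handle: the total $\sum_j w_j$ must concentrate around $m$, which follows from the finite exponential moment of $\calW$ by a Chernoff bound, and this is also what lets us convert the max-load bound into the envy bound cleanly.

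The main obstacle is entirely in the balls-into-bins analysis (\Cref{sec: balls and bins}), not in this reduction: as the introduction explains, one must fuse the potential-function argument of~\cite{yuval2015} with the slot-decomposition needed for weighted bins and with explicit two-sided bounds on the probability of hitting the $i$-th most loaded slot under $(\alpha,\beta)$-biased sampling, ensuring the bias only costs a constant factor rather than a super-constant one — and this is exactly where the arithmetic conditions on $\alpha,\beta$ in the theorem statement enter (they are what guarantee the relevant potential drift is negative). Within the present proof, the only thing requiring care is making sure the ``minimum load $\ge m/N - \Gamma$'' half is actually delivered by, or derivable from, the cited balls-into-bins result rather than silently assumed.
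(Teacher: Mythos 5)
Your driver-efficiency argument and your reduction of the fairness guarantee to the weighted balls-into-bins process are exactly the paper's, and the overall architecture (a gap bound plus concentration of the total weight) is the right one. But there is a concrete error in where the $\sqrt{m}$ enters. You assert that the balls-into-bins theorem gives a deviation $\Gamma \in O(\sqrt m)$ in the random-weight case; in fact \Cref{theorem:gapBoundWeightedBalls} bounds $Gap(t)=\max_f v_f(t)-\min_f v_f(t)$ by $O(\log N)$ with high probability in \emph{both} cases --- the $\sqrt m$ has nothing to do with the gap. It enters only because the total weight $X=\sum_t w(t)$ is now random: using $\E[e^{\lambda\calW}]<\infty$ to bound the variance and applying Chebyshev, $X\ge m-N\sqrt{m(2S-1)}$ with high probability, and this lower-bounds the \emph{denominator} via $\max_{f'} w_{f'}(m)/N_{f'}\ge X/N$. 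The correct chain is $1-\frac{v_f(m)}{\max_{f'}v_{f'}(m)}\le\frac{Gap(m)}{\max_{f'}v_{f'}(m)}\le\frac{O(\log N)}{\left(m-N\sqrt m\right)/N}$, which gives $\varepsilon\in O\bigl(\frac{N\log N}{m-N\sqrt m}\bigr)$.

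Your own algebra does not deliver the stated bound: with $\Gamma=O(\sqrt m)$, the formula $\varepsilon=\frac{2N\Gamma}{m-N\Gamma}$ evaluates to $O\bigl(\frac{N\sqrt m}{m-N\sqrt m}\bigr)$, which is weaker than the claim by a factor of $\sqrt m/\log N$ in the numerator; the parenthetical about ``absorbing the $\sqrt m$ error'' is where this discrepancy is hidden. The fix is to keep the gap at $O(\log N)$ and use the concentration of $X$ only to control the denominator, i.e., do not fold the two error sources into a single symmetric $m/N\pm\Gamma$ statement. (In the unit-weight case $X=m$ exactly, so there your formulation is fine and matches the paper.)
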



\begin{proof}

    First, we prove the guarantee on driver efficiency. Let $\ell = (x,y,w)$ be a food donation, and $f_o \in \calF$ and $f_d \in \calF$ be the food banks closest to the origin and destination respectively. Let  $f^* \in \argmin_{f' \in \calF} \{ d(x,f')+d(f',y) \}$ be a food bank that minimizes the driving distance. We have
    \begin{align*}
        d(x,f_o) + d(f_o,y) &\le  d(x,f_o)+ d(f_o,x)+d(x,y) &(\text{triangle inequality}) \\
        &= 2 d(x,f_o) + d(x,y) & \\
        &\leq 2 d(x,f_o) + d(x,f^*) + d(f^*,y) &(\text{triangle inequality}) \\
        &\le 3 d(x,f^*) + d(f^*, y) & (d(x,f_o) \le d(x, f^*))\\
        &\le 3\left(d(x,f^*) + d(f^*, y)\right).
    \end{align*}
    Similarly, we can show that $d(x,f_d) + d(f_d,y) \le 3\left(d(x,f^*) + d(f^*, y)\right)$. This concludes the proof of the $3$-driver efficiency guarantee.

    The fairness guarantee is a direct implication of~\Cref{theorem:gapBound} and~\Cref{theorem:gapBoundWeightedBalls}, our balls-into-bins results. 
    Donations correspond to balls, with the value of a donation corresponding to the weight of the ball, while food banks correspond to bins, and $N_f$, the population a food bank serves, corresponds to the weight of a bin. At some step, the probability that a food bank is one of the two food banks considered by our algorithm is in $[\frac{N_{f}}{\alpha \cdot N}, \frac{\beta \cdot N_{f}}{N}]$; the probability that a bin is selected in balls-into-bins is the same. The balls-into-bins results give high probability bounds on the difference in value (defined as the total weight of balls in a bin, normalized by the bin's weight) between the heaviest and lightest bin, after throwing $m$ balls, i.e. bounds on $\max_{f \in |\calF|}\left \{ \frac{w_f(m)}{N_f} \right \} - \min_{f \in |\calF|} \left \{\frac{w_f(m)}{N_f} \right\}$, which imply bounds on $\varepsilon$-mEF.

    For example, assume that donations have unit value and use~\Cref{theorem:gapBound}. 
    For every $f \in \calF$ we have that, with high probability:
    \[ \max_{f' \in \calF}{\left\{\frac{w_{f'}(m)}{N_{f'}}\right\}} - \frac{w_{f}(m)}{N_{f}} \le \max_{f' \in |\calF|}\left \{ \frac{w_{f'}(t)}{N_{f'}} \right \} - \min_{f' \in |\calF|} \left \{\frac{w_{f'}(t)}{N_{f'}} \right\} \leq^{(\Cref{theorem:gapBound})} O(log N).\]
    Dividing by $\max_{f' \in \calF}{\left\{\frac{w_{f'}(m)}{N_{f'}}\right\}} > 0$ and rearranging we get that:
    \begin{align*}
        \frac{\frac{w_{f}(m)}{N_{f}}}{\max_{f' \in \calF}{\left\{\frac{w_{f'}(m)}{N_{f'}}\right\}}} \ge 1 -  \frac{O(log N)}{\max_{f' \in \calF}{\left\{\frac{w_{f'}(m)}{N_{f'}}\right\}}} \ge 1 - O\left( \frac{N\log N}{m} \right),
    \end{align*}
    were the last inequality holds due to the fact that  $\max_{f' \in \calF}{\left\{\frac{w_{f'}(m)}{N_{f'}}\right\}} \ge \frac{m}{N}$.
    Equivalently, we have $\frac{w_{f}(m)/ N_{f}}{\max_{f' \in \calF}{\left\{ w_{f'}(m)/ N_{f'} \right\}}} \geq \frac{1}{1 + \varepsilon}$, for $\varepsilon \in O \left( \frac{N \log N}{m - N \log N} \right)$.

    Similarly, using~\cref{theorem:gapBoundWeightedBalls} we can prove the same bounds for weighted balls, under the given assumptions.
    \begin{equation}
    \label{equation:Bound on envy}
      \max_{f' \in \calF}{\left\{\frac{w_{f'}(m)}{N_{f'}}\right\}} - \frac{w_{f}(m)}{N_{f}}  \leq^{(\cref{theorem:gapBoundWeightedBalls})} O(log N).  
    \end{equation}

    Since $M(\lambda) = \EX{}{e^{\lambda \calW}} < \infty$ for some $\lambda>0$, we have that for any $|z| < \lambda/2$, $M''(z) \le 2S$ for some constant $S \ge 1$. Thus $Var(\calW) = \EX{}{\calW^2}-\EX{}{\calW}^2 = M''(0)-1 \le 2S-1$.

    Let $X = \sum_{t=1}^m w(t)$ (were $w(t)$ is the weight of the ball at round $t$). Since the draws at each round are independent we have that $Var(X) = \sum_{t=1}^m Var(w(t)) \leq m(2S-1)$. From linearity of expectation, we also have that $\EX{}{X} = m$. 
    Chebyshev's inequality implies that $\Prob{}{X \le m - t} \le \frac{m(2S-1)}{t^2}$. Thus, by choosing $t= \sqrt{m(2S-1)N^k}$ we get that $\Prob{}{X \le m - N^{k/2}\sqrt{m(2S-1)}} \le N^{-k}$, and then, with high probability we have $X \ge m - N^{k/2}\sqrt{m(2S-1)}$. Taking $k=2$, $X \ge m - N\sqrt{m(2S-1)}$ with high probability. This implies that, with high probability, $\max_{f' \in \calF}{\left\{\frac{w_{f'}(m)}{N_{f'}}\right\}} \ge \frac{m-N\sqrt{m(2S-1)}}{N}$. 
    
    By rearranging \cref{equation:Bound on envy} and using the fact that $S$ is a constant, we get:

    \begin{align*}
        \frac{\frac{w_{f}(m)}{N_{f}}}{\max_{f' \in \calF}{\left\{\frac{w_{f'}(m)}{N_{f'}}\right\}}} \ge 1 -  \frac{O(log N)}{\max_{f' \in \calF}{\left\{\frac{w_{f'}(m)}{N_{f'}}\right\}}} \ge 1 - O\left( \frac{N\log N}{m - N\sqrt{m(2S-1)}} \right) = 1 - O\left( \frac{N\log N}{m - N\sqrt{m}} \right).
    \end{align*}
    This is equivalent to showing that $\frac{\frac{w_{f}(m)}{N_{f}}}{\max_{f' \in \calF}{\left\{\frac{w_{f'}(m)}{N_{f'}}\right\}}} \geq \frac{1}{1 + \varepsilon}$, for $\varepsilon \in O \left( \frac{N \log N}{m - N\sqrt{m} - N \log N} \right)$. Notice that the asymptotic behaviors are the same for both cases: as $m \rightarrow \infty$, $\varepsilon \rightarrow 0$.
\end{proof}

\section{Balls-into-bins}\label{sec: balls and bins}

Here, we present our balls-into-bins results. For the convenience of the reader who is solely interested in this set of results, as well as to streamline the usage of findings from the related work, we switch to balls-into-bins terminology for the remainder of this section. 

\subsection{Preliminaries}

There are $n$ bins and for each bin $i \in [n]$ we have a positive, integer weight $N_i$. Let $N = \sum_{i \in [n]} N_i$ be the total weight of bins. We are going to conduct the following experiment. For $m$ rounds we are going to sample independently two bins with replacement from a distribution $\calD$. $\calD$ is going to be $(\alpha, \beta)$-biased with respect to the weights $\{N_i\}_{i \in [n]}$, i.e., $\frac{N_{i}}{\alpha \cdot N} \le \Prob{i^* \sim \calD}{i^*=i} \le \frac{\beta \cdot N_{i}}{N}$. We are also going to sample the weight of the ball $w(t)$ at round $t$ from a probability distribution $\calW$. Throughout this section, we impose the mild restriction that there exists $\lambda$ such that $\EX{}{e^{\lambda \calW}} < \infty$. Also, for ease of analysis, and without loss of generality, we also assume that $\EX{}{\calW}=1$. 

The value of a ball of weight $w$ when allocated to bin $i$ is $\frac{w}{N_i}$. 
Let $w_i(t-1)$ denote the total weight of balls allocated to bin $i$ up until (and including) round $t-1$, and let $v_i(t-1) = \frac{w_i(t-1)}{N_i}$ denote the total value of balls allocated to bin $i$ up until (and including) round $t-1$. We allocate a fresh ball to one of two bins sampled independently from $\calD$, and specifically, we allocate the $t^{th}$ ball to the bin with the smaller (out of the two bins we sampled) $v_i(t-1)$ value. Let $Gap(t)$ denote the gap between the maximum and the minimum value at round $t$, i.e., $Gap(t) = \max_{i \in [n]} v_i(t) - \min_{j \in [n]} v_j(t)$.

In the analysis, it will be convenient to imagine that a bin of weight $N_i$ is comprised of $N_i$ unit slots. Let $s_{i,j}$ be the $j^{th}$ slot in bin $i$. When a ball of weight $w(t)$ arrives, the weight of that ball is equally distributed among all slots of the selected bin. Thus, the value of a bin is equal to the total weight allocated to any one of its slots. 
We can equivalently think of the overall process as first choosing two slots, where slot $s_{i,j}$ is picked probability $\frac{1}{N_i}\Prob{i^* \sim \calD}{i^*=i} \in [\frac{1}{\alpha \cdot N}, \frac{\beta}{N}]$, select the least loaded of the two slots, and then equally distribute the weight of the ball among the slots that belong to the same bin. Let $\psi_i(t)$ be the probability that, at round $t$, we select one of the $i$ most loaded slots and add the ball to the corresponding bin.

\subsection{Main Results}

We prove two theorems. The first result,~\Cref{theorem:gapBound}, only works for unit-weight balls. The second result,~\Cref{theorem:gapBoundWeightedBalls}, is our most general result. The analysis for the unit-weight paradigm is both simpler and tighter. However, in both cases, we ultimately provide the same asymptotic guarantees.

\begin{theorem}
\label{theorem:gapBound}
For weighted bins of a total weight of $N$, an $(\alpha, \beta)$-biased probability distribution $\calD$ such that $0< \epsilon \le \frac{2\cdot \ln\left( \frac{\alpha \beta -1}{\alpha \beta - \beta} \right)}{\ln\left( \frac{\alpha \beta -1}{\alpha -1} \right)}-1$, and unit weight balls, there exist constants $a$ and $c$ that only depend on $\epsilon$, such that for any constant $k>0$ and time $t \ge 0$: $\Prob{}{Gap(t) \ge \frac{k}{a} \, \log\left( \frac{c}{2a} \, N \right) } \le N^{-(k-1)}$.
\end{theorem}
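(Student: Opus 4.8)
The plan is to run a potential-function (exponential weights) argument in the style of Talwar--Wieder and Peres--Talwar--Wieder, but carried out at the level of \emph{slots} rather than bins, so that the heterogeneous bin weights do not break the symmetry the standard argument relies on. Concretely, order the $N$ slots by load and write $\nu_1(t) \ge \nu_2(t) \ge \dots \ge \nu_N(t)$ for the sorted slot loads (each slot's load equals the value $v_i(t)$ of the bin it belongs to). Define a two-sided exponential potential $\Phi(t) = \sum_{j=1}^{N} \bigl( e^{a(\nu_j(t) - \bar\nu(t))} + e^{-a(\nu_j(t)-\bar\nu(t))} \bigr)$ for a small constant $a>0$ to be chosen, where $\bar\nu(t)$ is the (weighted) average load; bounding $\E[\Phi(t)]$ by a linear-in-$N$ quantity gives, by Markov and a union bound, the stated tail on $Gap(t)$.

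The key steps, in order: (1) Control the selection probabilities. Using the $(\alpha,\beta)$-biased assumption, each slot is picked (as one of the two independent samples) with probability in $[\tfrac{1}{\alpha N}, \tfrac{\beta}{N}]$, so I can sandwich $\psi_i(t)$ — the probability the ball lands among the $i$ most loaded slots — between the two quantities one gets from uniform slot-probabilities $\tfrac{1}{\alpha N}$ and $\tfrac{\beta}{N}$; this yields $\psi_i(t) \le \bigl(\tfrac{\beta i}{N}\bigr)^2 + (\text{something}) $ on the high side and a matching lower bound of the $\bigl(\tfrac{i}{\alpha N}\bigr)^2$-type on the low side. The algebraic condition $0<\epsilon \le 2\ln\!\bigl(\tfrac{\alpha\beta-1}{\alpha\beta-\beta}\bigr)/\ln\!\bigl(\tfrac{\alpha\beta-1}{\alpha-1}\bigr) - 1$ is exactly what is needed so that, after plugging these bounds in, the "upper" part of the potential contracts by a multiplicative factor strictly less than one per step in expectation (a $(1+\epsilon)$-type bias in favor of the lighter slot, rather than an exact $2$-choice bias). (2) Per-step drift for the upper potential $\Phi^+$: condition on the configuration at time $t-1$; since the ball has unit weight and is spread over the $N_i$ slots of the chosen bin, each affected slot's load increases by at most $1$ (in fact by $1/N_i \le 1$), and using $e^{ax}\le 1+ax+a^2x^2$ for small $ax$ together with the selection-probability bounds, show $\E[\Phi^+(t)\mid \mathcal F_{t-1}] \le (1 - \tfrac{a\epsilon'}{N})\Phi^+(t-1) + c$ for constants $a, c, \epsilon'$ depending only on $\epsilon$. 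The subtlety here — and the reason we work with slots — is that a single ball touches $N_i$ slots at once, all of the same bin, so I must track the joint effect on those $N_i$ equal coordinates; the "intermediate process" that puts the whole ball weight on the single selected slot (and does \emph{not} redistribute) dominates the true process in the convex order and makes this bookkeeping clean, after which redistribution only decreases $\Phi^+$ by convexity. (3) Symmetric (and easier) drift for the lower potential $\Phi^-$. (4) Iterate the recursion from $t=0$: $\E[\Phi(t)] \le (1-\tfrac{a\epsilon'}{N})^t \Phi(0) + c N/(a\epsilon') \le O(N)$ uniformly in $t$. (5) Finish: $\Pr[Gap(t)\ge z] \le \Pr[\Phi(t) \ge e^{az}] \le \E[\Phi(t)] e^{-az} \le \tfrac{c}{2a} N e^{-az}$; setting $z = \tfrac{k}{a}\log(\tfrac{c}{2a}N)$ gives the bound $N^{-(k-1)}$.

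The main obstacle I expect is step (2): establishing the per-step multiplicative contraction of $\Phi^+$ with constants depending only on $\epsilon$. Two things have to be handled simultaneously and they interact — the $(\alpha,\beta)$ bias degrades the "power of two choices" advantage (so one must verify the stated inequality on $\alpha,\beta,\epsilon$ is precisely the threshold at which a net inward drift survives), and the bin weights mean the ball is not a unit increment on one coordinate but a spread-out increment on $N_i$ coordinates, so the usual scalar Taylor-expansion bound on $e^{a\Delta}$ must be replaced by a careful argument (ideally via the dominating single-slot intermediate process, then convexity) that the redistribution step cannot increase the exponential potential. Getting both of these to produce clean, $\epsilon$-only constants $a$ and $c$ is the crux; steps (1), (3), (4), (5) are then routine.
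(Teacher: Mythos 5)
Your overall architecture (work at the level of slots, use an intermediate process that dumps the whole ball on one slot and then argue redistribution only helps by convexity/majorization) is sound, and is in fact what the paper does --- but for its \emph{weighted-balls} theorem, not for this one. The genuine gap is in your steps (1)--(2): you assert that the hypothesis $0<\epsilon \le 2\ln\bigl(\tfrac{\alpha\beta-1}{\alpha\beta-\beta}\bigr)/\ln\bigl(\tfrac{\alpha\beta-1}{\alpha-1}\bigr)-1$ is ``exactly what is needed'' for the exponential potential to contract multiplicatively per step, but you never derive this, and it is almost certainly false for the full parameter range of the theorem. A direct drift computation needs two-sided control of $p_i$, the probability of hitting the $i$-th most loaded slot: an upper bound of roughly $\beta^2\cdot\tfrac{2i-1}{N^2}$ to contract the upper potential and a lower bound of roughly $\tfrac{1}{\alpha^2}\cdot\tfrac{2i-1}{N^2}$ to contract the lower one, and the paper's own execution of this argument (Lemmas~\ref{lemma:Phi_1} and~\ref{lemma:Psi_1}) only closes when $\beta^2<\tfrac{4}{3}$ and $\tfrac{1}{\alpha^2}>\tfrac{4}{5}$. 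The hypothesis of \Cref{theorem:gapBound} admits much larger $\alpha,\beta$ (e.g.\ $\beta=1.26>\sqrt{4/3}$ still satisfies it), so your route would prove a strictly weaker statement. The stated condition on $\epsilon$ is instead precisely the threshold for a \emph{coupling} fact (Claim 2.5 of Wieder): it guarantees $\psi_i(t)\le\phi_i=(i/N)^{1+\epsilon}$, i.e.\ that the biased two-choice process on slots is majorized, step by step, by the clean $(1+\epsilon)$-choice process on $N$ uniform unit bins.

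That is the paper's actual proof: establish the coupling/majorization $Gap_{1+\epsilon}(t)\succeq Gap(t)$ (\Cref{lemma:majorization}, using your splitting-by-averaging observation to absorb the redistribution step into the majorization), and then apply the Peres--Talwar--Wieder potential bound as a black box to the $(1+\epsilon)$-choice process, where uniformity makes the drift computation unproblematic. To repair your proposal you would either have to (a) insert this reduction before running the potential argument on the surrogate process, or (b) carry out the drift estimate honestly for the biased process and accept the stronger hypotheses on $\alpha,\beta$ --- at which point you have reproved \Cref{theorem:gapBoundWeightedBalls} restricted to unit balls rather than \Cref{theorem:gapBound}. Your step (5) also loses a factor of $2$ in the exponent ($Gap\ge z$ only forces $\Phi\ge e^{az/2}$, not $e^{az}$), which is why the paper's bound carries the $2a$ inside the logarithm; this is cosmetic but worth fixing.
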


Intuitively,~\Cref{theorem:gapBound} states that the difference in value between the most loaded bin and the least loaded bin (recalling that the value is total weight normalized by the weight of the bin) is at most logarithmic in $N$, the total weight of bins, and independent of $m$, the number of balls. Theorem 1.3 of~\cite{wieder2007Heterogeneous} implies that our analysis is tight in terms of $\alpha$ and $\beta$. That is, if $2 \ln\left( \frac{\alpha \beta -1}{\alpha \beta - \beta} \right) < \ln\left( \frac{\alpha \beta -1}{\alpha -1} \right)$, then there exist weights for the bins such that with high probability $Gap(t)$ will depend on the number of balls $m$.

Our more general theorem is stated as follows.

\begin{theorem}
\label{theorem:gapBoundWeightedBalls}
For weighted bins of a total weight of $N$, an $(\alpha, \beta)$-biased probability distribution $\calD$ such that $\alpha < \sqrt{\frac{5}{4}}$ and that $\beta < \sqrt{\frac{4}{3}}$, and weighted balls, with weight sampled from a distribution $\calW$ such that $\EX{}{e^{\lambda \calW}} < \infty$ and $\EX{}{\calW}=1$, there exist constants $a$, $\tilde{c}$ that depends only on $\calW$, $\alpha$, and $\beta$, such that for any constant $k>0$ and time $t \ge 0$: $\Prob{}{Gap(t) \ge \frac{2k}{a} \cdot \log\left( \frac{\tilde{c}}{a^{1/k}} \cdot N \right) } \le N^{-(k-1)}$.
\end{theorem}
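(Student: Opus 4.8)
\textbf{Proof proposal for \Cref{theorem:gapBoundWeightedBalls}.}

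The plan is to run a potential-function argument in the spirit of Talwar--Wieder and Peres--Talwar--Wieder, but adapted to the ``slot'' decomposition introduced in the preliminaries. First I would set up the exponential potential $\Phi(t) = \sum_{j} \left(e^{a(u_j(t) - \bar u(t))} + e^{-a(u_j(t) - \bar u(t))}\right)$, where the sum ranges over all $N$ slots, $u_j(t)$ is the load of slot $j$ under the intermediate process (the one that puts the whole ball weight on the chosen slot, \emph{without} redistributing inside its bin), $\bar u(t)$ is the average slot load, and $a$ is a small constant to be fixed. The key structural point, which I would establish first, is a \emph{domination} lemma: the true per-slot load (ball weight spread over $N_i$ slots) is majorized by the intermediate process's slot loads, so an upper bound on $Gap$ for the intermediate process transfers to the original process. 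This is where the slot trick earns its keep — it lets me pretend every bin has weight $1$ while keeping the selection probabilities in the clean range $[\frac{1}{\alpha N}, \frac{\beta}{N}]$.

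Next I would prove the one-step drift inequality $\E[\Phi(t+1)\mid \Phi(t)] \le \Phi(t)(1 - \frac{a}{cN}) + c'$ for suitable constants $c, c'$ depending on $\calW$, $\alpha$, $\beta$. The calculation splits $\Phi$ into its ``up'' part $\sum_j e^{a(u_j - \bar u)}$ and ``down'' part $\sum_j e^{-a(u_j-\bar u)}$. For a slot ranked $i$-th from the top, the probability it receives the ball is controlled by $\psi_i(t) - \psi_{i-1}(t)$, and the $(\alpha,\beta)$-biasedness gives two-sided bounds: roughly, this probability lies between $\frac{1}{\alpha N}\cdot(\text{something like } 2\tfrac{i}{N} \cdot \tfrac{1}{\alpha})$-type lower bounds and $\frac{\beta}{N}\cdot(2\tfrac{i}{N}\beta)$-type upper bounds. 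The conditions $\alpha < \sqrt{5/4}$ and $\beta < \sqrt{4/3}$ are exactly what is needed so that, after multiplying by $e^{\pm a\cdot(\text{ball weight})}$ and using $\E[e^{\lambda\calW}]<\infty$ to control the moment generating function of a ball's weight (expanding $e^{ax} \le 1 + ax + \text{const}\cdot a^2 x^2 \cdot(\dots)$ for small $a$), the expected increase in the up-part is strictly dominated by the ``smoothing'' toward the mean, and symmetrically for the down-part. I would choose $a = \Theta(1)$ small enough (depending on $\lambda$, $\alpha$, $\beta$) that both the quadratic error term from the ball-weight MGF and the slack in the $\alpha,\beta$ inequalities are absorbed.

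From the drift inequality, iterating and taking expectations gives $\E[\Phi(t)] \le \frac{c'cN}{a} + \Phi(0) \le O(N)$ uniformly in $t$ (using $\Phi(0) = 2N$). Then Markov's inequality yields $\Pr[\Phi(t) \ge \frac{\tilde c}{a^{1/k}} N \cdot N^{k-1} \cdot (\text{const})] \le N^{-(k-1)}$; since $Gap(t) \le \frac{2}{a}\log\Phi(t)$ (because $e^{a\, Gap(t)/2}$ is at most one of the summands after a shift, or more carefully $\max_j|u_j - \bar u| \le \frac1a \log \Phi(t)$ and $Gap$ is at most twice that), unpacking the logarithm gives precisely $Gap(t) \le \frac{2k}{a}\log\!\left(\frac{\tilde c}{a^{1/k}} N\right)$ off the bad event. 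Transferring back through the domination lemma finishes the proof.

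\textbf{Main obstacle.} The delicate part is the one-step drift bound: one must show the two-sided $(\alpha,\beta)$ bounds on the ranked selection probabilities $\psi_i - \psi_{i-1}$ can be plugged into the potential computation \emph{simultaneously} with the ball-weight MGF expansion without incurring a super-constant factor. Concretely, the cross-terms mixing the ``probability is at least $p_{\min}$'' estimate (needed for the downward drift of heavy slots) and the ``probability is at most $p_{\max}$'' estimate (needed to bound the upward drift) must leave a genuinely negative coefficient on $\Phi(t)$; this is exactly why the clean $\alpha,\beta$ thresholds appear, and getting the constants to line up — especially reconciling the weighted-ball fluctuations with the bias — is the crux. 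Everything else (the majorization/slot reduction and the Markov-to-log extraction) is routine once the drift inequality is in hand.
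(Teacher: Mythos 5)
There is a genuine gap, and it sits exactly at the point you flag as ``the slot trick earning its keep.'' You define the potential on $u_j(t)$, the slot loads of the intermediate process that never redistributes a ball inside its bin, and you plan to (i) run the drift argument on that process for all $t$ and (ii) transfer the resulting $Gap$ bound back via a domination lemma. Neither reading of this plan works. If the intermediate process makes its own placement decisions, you need a multi-step coupling under which majorization is preserved at every round; for weighted balls this fails outright (a single heavy ball landing in different positions in the two coupled chains destroys majorization --- this is precisely the obstruction that rules out the coupling approach here, and is why the unit-ball argument of \cref{lemma:majorization} does not extend). If instead the intermediate process is slaved to the actual algorithm's choices, the domination $u(t)\succeq x^s(t)$ does hold, but the drift inequality for $\sum_j e^{a(u_j-\bar u)}$ is false: the two-choice feedback that pushes load away from heavy items operates at the \emph{bin} level (on normalized bin values), not at the level of individual slots, so a single slot can keep absorbing whole balls and drift $\Theta(m/N)$ above the average. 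Then $\EX{}{\Phi(u(t))}$ grows with $m$ and no $m$-independent gap bound follows. The fix, which is what the paper does, is to define the potential on the \emph{redistributed} slot vector $x^s(t)$ (every slot of bin $i$ carries the value $x_i(t)$), and to use the non-redistributing step only as a one-round device: starting from the actual state $x^s(t)$, the vector $x^*(t+1)$ obtained by dumping the whole ball on the sampled slot majorizes $x^s(t+1)$ (\cref{claim:splitting majorization}), and convexity of $e^{ax}$ gives $\EX{}{\Phi(x^s(t+1))\mid x^s(t)}\le \EX{}{\Phi(x^*(t+1))\mid x^s(t)}$. The chain you analyze must remain the true one at every step.

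A secondary but real issue: the uniform one-step drift $\EX{}{\Phi(t+1)\mid\Phi(t)}\le (1-\tfrac{a}{cN})\Phi(t)+c'$ that you posit for the ``up'' part alone is not provable by a single calculation, and is in fact false for skewed configurations --- when most of the mass of $\Phi$ sits on a few overloaded slots that are nevertheless not so overloaded that the selection probabilities save you, $\Phi$ can increase in expectation. The argument of \cite{yuval2015}, and the paper's adaptation of it, requires a case analysis on the quantiles $x^s_{N/4}(t)$ and $x^s_{3N/4}(t)$ (Lemmas~\ref{lemma:Phi_1}--\ref{lemma:Psi_2}): one shows that whenever $\Phi$ fails to contract, either $\Psi$ contracts fast enough to compensate or $\Gamma=\Phi+\Psi$ is already $O(N)$. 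Your $(\alpha,\beta)$ bookkeeping ($\tfrac{1}{\alpha^2}p^U_i\le p_i\le\beta^2 p^U_i$, the MGF expansion with a small constant $a$, and the final Markov-plus-logarithm extraction) matches the paper and is fine; it is the structure of the drift argument and the role of the intermediate process that need to be repaired.
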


\subsection{Unit-sized balls and the proof of~\Cref{theorem:gapBound}}

We start by proving~\Cref{theorem:gapBound}. The following process will be useful in our analysis.

\begin{definition}[$(1+\epsilon)$-choice with $N$ bins]
$m$ unit-weight balls are thrown into $N$ unit-sized bins. At each step, each ball is allocated to the least loaded of ``$1+\epsilon$'' bins selected uniformly at random. That is, each ball is added to one of the $i$ most loaded bins with probability $\phi_i \coloneqq \left(\frac{i}{N}\right)^{1+\epsilon}$. Let $Gap_{1+\epsilon}(t) \coloneqq \max_{i \in [N]} w_i(t)- \min_{i \in [N]} w_i(t) $ be the difference in weights at round $t$ between the most loaded and the least loaded bin of this process.
\end{definition}

In~\Cref{lemma:majorization} we prove that the gap in the $(1+\epsilon)$-choice with $N$ bins is at least the gap of our process. Before we state the lemma, we define the notions of majorization and stochastic dominance.

\begin{definition}[Majorization]\label{dfn: majorization}
        For two vectors $x, y$ each with $n$ elements we say that $y$ is majorized by $x$, denoted by $x \succeq y$, if $\forall j \le n$ it holds that $\sum_{i=1}^j x_{\pi(i)} \ge \sum_{i=1}^j y_{\pi'(i)}$, where $\pi$ and $\pi'$ are permutations such that $\forall i,j: i \le j$, $x_{\pi(i)} \ge x_{\pi(j)}$ and $y_{\pi'(i)} \ge y_{\pi'(j)}$.
    \end{definition}

\begin{definition}[Stochastic Dominance]
    For two random variables $X$ and $Y$, we say that $X$ stochastically dominates $Y$, denoted by $X \succeq Y$ if $\forall a$ it holds that $\Prob{}{X \le a} \le \Prob{}{Y \le a}$.
\end{definition}

\begin{claim}
    \label{claim:splitting majorization}
    Consider two vectors $y \in \R^n$ and $x \in \R^n$ and a set $S \subseteq [n]$, such that $\forall i \notin S, x_i = y_i$, and $\forall i \in S, y_i = \frac{1}{|S|}\sum_{i \in S} x_i$. Then $y$ is majorized by $x$ (i.e. $x \succeq y$)
\end{claim}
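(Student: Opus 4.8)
The plan is to prove Claim~\ref{claim:splitting majorization} directly from the definition of majorization, by comparing partial sums of the sorted versions of $x$ and $y$. Write $\sigma = \sum_{i \in S} x_i$ and $a = \sigma/|S|$, so that $y$ agrees with $x$ outside $S$ and equals the constant $a$ on every coordinate of $S$. The key structural observation is that $y$ is obtained from $x$ by an \emph{averaging} (or \emph{smoothing}) operation on the block $S$, and averaging a sub-multiset of coordinates is the prototypical majorization-decreasing move; this is essentially the ``$T$-transform'' characterization of majorization (Hardy--Littlewood--Pólya), but I would give a self-contained argument rather than invoke it.

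First I would fix a permutation $\pi$ sorting $x$ in nonincreasing order and a permutation $\pi'$ sorting $y$ in nonincreasing order, and I must show $\sum_{i=1}^{j} x_{\pi(i)} \ge \sum_{i=1}^{j} y_{\pi'(i)}$ for every $j \le n$. The equality case $j = n$ is immediate since $\sum_i y_i = \sum_i x_i$ (the block $S$ contributes $|S| \cdot a = \sigma$ in both). For a general prefix, let $T_j$ be the index set of the $j$ largest coordinates of $y$ (i.e.\ $T_j = \{\pi'(1),\dots,\pi'(j)\}$), so $\sum_{i=1}^{j} y_{\pi'(i)} = \sum_{i \in T_j} y_i$. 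The plan is to bound $\sum_{i \in T_j} y_i \le \sum_{i \in T_j} x_i$ and then note $\sum_{i \in T_j} x_i \le \sum_{i=1}^{j} x_{\pi(i)}$, the latter simply because the right-hand side is the maximum of $\sum_{i \in A} x_i$ over all $j$-subsets $A$. So everything reduces to the claim that for \emph{every} subset $A \subseteq [n]$ with $|A| = j$ we have $\sum_{i \in A} y_i \le \sum_{i \in A} x_i$ — equivalently $\sum_{i \in A \cap S} y_i \le \sum_{i \in A \cap S} x_i$, since $x$ and $y$ agree off $S$.

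This last inequality is the crux, and it is where I'd expect to spend the actual (small) effort. Writing $k = |A \cap S|$, the left side is $k \cdot a = k\sigma/|S|$, while the right side is the sum of some $k$ of the coordinates $\{x_i\}_{i \in S}$. This is \emph{not} true for an arbitrary $k$-subset of $S$ — but I only need it for the specific subset $A \cap S$ arising from a prefix of the $y$-sorted order. Here is the resolution: because $y$ is constant on $S$, when we sort $y$ nonincreasingly the coordinates of $S$ are interchangeable, so we are free to choose the sorting permutation $\pi'$ so that, among the indices of $S$ it lists, it lists them \emph{in nonincreasing order of $x$-value}. With that choice, $A \cap S$ (being a prefix of how $\pi'$ enumerates $S$, intersected appropriately) consists of the $k$ \emph{largest} $x$-coordinates within $S$ — and the average of the $k$ largest elements of a multiset is always at least the average of all of them, i.e.\ $\tfrac{1}{k}\sum_{i \in A\cap S} x_i \ge \tfrac{1}{|S|}\sum_{i\in S} x_i = a$, which rearranges to exactly what we need. (One must check the mild point that $T_j \cap S$ really is a prefix of that enumeration: since all of $S$ has the same $y$-value, any ``tie-breaking'' order is consistent with some valid $\pi'$, and we pick the favorable one.)

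The main obstacle, then, is purely the bookkeeping around ties: making precise that we may assume the sorting permutation $\pi'$ breaks ties within $S$ by decreasing $x$-value, and that under this assumption $T_j \cap S$ is an initial segment of the $x$-sorted list of $S$. Once that is pinned down, the numerical step — average of the top $k$ dominates the overall average — is a one-line consequence of the pigeonhole/rearrangement fact that a max-$k$ subset has sum at least $(k/|S|)$ times the total. I would present it in this order: (1) reduce to showing $\sum_{i\in A} y_i \le \sum_{i \in A} x_i$ for all $j$-subsets $A$; (2) reduce that to $\sum_{i \in A\cap S} y_i \le \sum_{i\in A\cap S} x_i$; (3) handle the worst case of this over the relevant $A$ via the tie-breaking choice and the top-$k$-average bound; (4) conclude the majorization inequality for prefixes, with the $j=n$ case giving equality.
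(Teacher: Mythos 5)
Your proof is correct, but it takes a genuinely different route from the paper's. The paper proves the claim in one step by invoking the Hardy--Littlewood--P\'olya/Marshall--Olkin characterization that $x \succeq y$ iff $y = Px$ for some doubly stochastic $P$, and then exhibiting $P$ explicitly (the identity off $S$, the constant $\tfrac{1}{|S|}$ block on $S \times S$). You instead argue directly from the prefix-sum definition of majorization (the paper's Definition~\ref{dfn: majorization}): the $j=n$ case is equality, and for a general prefix you reduce to $\sum_{i \in T_j \cap S} x_i \ge k a$, which you secure by choosing the tie-breaking of $\pi'$ within $S$ to follow nonincreasing $x$-value, so that $T_j \cap S$ is exactly the top-$k$ $x$-values of $S$ and the ``average of the top $k$ dominates the overall average'' inequality applies. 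This is legitimate because the prefix sums $\sum_{i \le j} y_{\pi'(i)}$ do not depend on how ties are broken, so you are free to pick the favorable $\pi'$. Your step (1) as literally stated (``for \emph{every} $j$-subset $A$'') is false and you correctly retract it; only the specific subset $T_j$ is needed, and your final argument uses only that. The trade-off: the paper's proof is two lines but leans on an external structural theorem; yours is self-contained, matches the definition actually stated in the paper, and makes the tie-breaking subtlety explicit, at the cost of more bookkeeping. Both are sound.
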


\begin{proof}
    It is a well-known fact (e.g.,~\cite{marshall1979inequalities}) that a vector $x$ majorizes a vector $y$ if and only if there exists a doubly stochastic matrix $P$ such that $y = Px$. Consider the following matrix $P$. For all $i \notin S$ we have that $P_{i, i} = 1$. For all combinations $i, j \in S \times S$  we have that $P_{i, j} = \frac{1}{|S|}$. Everywhere else the matrix is zero. By construction, $P$ is doubly stochastic. Furthermore, it is easy to verify that, by definition $y = P x$.
\end{proof}

\begin{lemma}
\label{lemma:majorization}
If $0 < \epsilon \le \frac{2\cdot \ln\left( \frac{\alpha \beta -1}{\alpha \beta - \beta} \right)}{\ln\left( \frac{\alpha \beta -1}{\alpha -1} \right)}-1$ then $Gap_{1+\epsilon}(t) \succeq Gap(t) $.
\end{lemma}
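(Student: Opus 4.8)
I would couple the process of~\Cref{ourAlgorithm}, viewed at the level of the $N$ unit slots and including the within-bin redistribution, with the $(1+\epsilon)$-choice process on $N$ unit bins, so that at every round the sorted slot-load vector $u(t)$ of the former is majorized by the sorted load vector $v(t)$ of the latter. In the unit-weight setting both vectors have length $N$ and total weight $t$, so $u(t)\preceq v(t)$ forces $\max u(t)\le\max v(t)$ and $\min u(t)\ge\min v(t)$, hence $Gap(t)\le Gap_{1+\epsilon}(t)$ pointwise on the coupled probability space, which is exactly the stochastic dominance $Gap_{1+\epsilon}(t)\succeq Gap(t)$. I would establish the invariant $u(t)\preceq v(t)$ by induction on $t$, with the all-zero configuration as the base case.

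\textbf{Inductive step.} Assume $u(t-1)\preceq v(t-1)$. Our process produces $u(t)$ from $u(t-1)$ in two stages: (i) it puts the round-$t$ ball entirely into one of the two sampled slots, namely a slot of the bin with smaller value, giving a vector $u^{+}$; (ii) it spreads that bin's added weight equally over the bin's slots, giving $u(t)$. Stage (ii) is a majorization step: by~\Cref{claim:splitting majorization}, with $S$ the set of slots of the bin that received the ball, $u(t)\preceq u^{+}$, and since majorization is transitive it is enough to couple stage (i) with the $(1+\epsilon)$-step so that $u^{+}\preceq v(t)$. For this I would use (and, since it is not stated in the paper, prove) the following majorization fact: if $u\preceq v$ and one adds the same weight to a rank-chosen coordinate of each, where for every $i$ the probability that the $v$-rule hits one of the $i$ most loaded coordinates is at least the corresponding probability for the $u$-rule, then the two placements can be coupled so that the result for $u$ is still majorized by the result for $v$. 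The proof couples the placements so that whenever the $u$-rule bumps the coordinate of rank $j$ the $v$-rule bumps a coordinate of rank at most $j$; a short exchange/partial-sum argument shows that raising a more loaded coordinate of the more spread-out vector keeps it more spread out. Thus the induction closes provided the hypothesis $\psi_i(t)\le\phi_i$ holds for every $i\in[N]$ and every configuration reachable at round $t$ (here $\phi_i=(i/N)^{1+\epsilon}$ and $\psi_i(t)$ is the probability our process adds the ball to one of the $i$ most loaded slots).

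\textbf{Verifying $\psi_i(t)\le\phi_i$.} Since the ball is placed in the \emph{less} loaded of the two independently sampled slots, it lands among the $i$ most loaded slots only if \emph{both} samples do; hence $\psi_i(t)=q_i^2$, where $q_i$ is the total selection probability of the set of the $i$ most loaded slots. Every slot is sampled with probability in $[\tfrac1{\alpha N},\tfrac{\beta}{N}]$ and these probabilities sum to $1$, so the probability mass of any $i$ of them is at most $\min\!\big(\tfrac{\beta i}{N},\,1-\tfrac{N-i}{\alpha N}\big)$, giving $\psi_i(t)\le\big(\min\!\big(\tfrac{\beta i}{N},1-\tfrac{N-i}{\alpha N}\big)\big)^2$. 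Writing $x=i/N\in(0,1]$, an elementary one-variable optimization shows that $\big(\min(\beta x,\,1-\tfrac{1-x}{\alpha})\big)^2\le x^{1+\epsilon}$ for all such $x$ precisely when $1+\epsilon\le \frac{2\ln\!\big((\alpha\beta-1)/(\beta(\alpha-1))\big)}{\ln\!\big((\alpha\beta-1)/(\alpha-1)\big)}$: the worst case is $x^{\star}=\tfrac{\alpha-1}{\alpha\beta-1}$, at which the two arguments of the $\min$ coincide, and one checks separately that for $x\ge x^{\star}$ (where $1-\tfrac{1-x}{\alpha}$ is the active term) the relevant ratio stays above its value at $x^{\star}$. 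This is exactly the hypothesis of the lemma, so $\psi_i(t)\le\phi_i$, and the induction --- hence the lemma --- follows.

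\textbf{Main obstacle.} The crux is the coupling in the inductive step: proving the abstract ``raise a higher-ranked coordinate of the more spread-out vector'' majorization lemma cleanly, making it work when the $u$-side placement probabilities are configuration-dependent and only \emph{bounded} (rather than equal to a fixed rule), and correctly interleaving it with the redistribution stage through~\Cref{claim:splitting majorization} and transitivity. The remaining work --- the calculus that pins down the $\epsilon$-threshold, including the check on the range $x\ge x^{\star}$ --- is routine but is exactly where the precise form of the hypothesis on $\epsilon$ comes from.
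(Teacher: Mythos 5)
Your proposal follows essentially the same route as the paper's proof: pass to the slot-level view, maintain the majorization invariant $x(t)\succeq s(t)$ by induction via a quantile coupling with the $(1+\epsilon)$-choice process, handle the within-bin redistribution with the splitting-majorization claim and transitivity, and reduce everything to the pointwise bound $\psi_i(t)\le\phi_i$. The only difference is that you re-derive inline the two ingredients the paper imports from~\cite{wieder2007balanced} (the rank-coupling majorization step and the $\psi_i\le\phi_i$ calculus identifying the crossover point $x^\star=\tfrac{\alpha-1}{\alpha\beta-1}$ and the resulting $\epsilon$-threshold), and your derivations of both are correct in outline.
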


\begin{proof}
    Since $\calD$ is $(\alpha, \beta)$-biased then the sampling probability of \emph{slots} is also $(\alpha, \beta)$-biased. Recall that $\psi_i(t)$ is the probability that, at round $t$, we select one of the $i$ most loaded slots and add the ball to the corresponding bin. This is simply the probability that two $(\alpha, \beta)$-biased draws from $\calD$ both pick one of the $i$-th most loaded slots. Thus, we can use the following claim from~\cite{wieder2007balanced} (slightly rephrased for our purposes) that analyzes $(\alpha, \beta)$-biased $d$-choice processes:
    \begin{claim}[Special case of Claim 2.5 of~\cite{wieder2007balanced}]
    For every $k > 0$, if $\epsilon \le \frac{2\cdot \ln\left( \frac{\alpha \beta -1}{\alpha \beta - \beta} \right)}{\ln\left( \frac{\alpha \beta -1}{\alpha -1} \right)}-1$, then for every $i \leq N$ and $t \geq 0$ it holds that $\psi_i(t) \le \phi_i$.
    \end{claim}

    Note that although our process differs from the one analyzed in \cite{wieder2007balanced} we can readily use this claim. The claim compares the selection probability of bins in two different scenarios. In terms of selection probability bins and slots are equivalent since the way the ball is distributed after the insertion does not affect the initial selection probabilities.

    Let $s(t)$ denote the sorted vector of slot weights at round $t$ for our process and $x(t)$ denote the sorted weight vector of bins for the $(1+\epsilon)$-choice process. The allocation probabilities of each process define Markov chains $\left(s(t)\right)_{t}$ and $\left(x(t)\right)_{t}$. We prove our result by showing that there exists a coupling between the two Markov chains for which, at every round $t$, $x(t) \succeq s(t)$.

    We prove the claim by induction. The fact that $x(0) \succeq s(0)$ is trivial. Assume that $x(t-1) \succeq s(t-1)$. We will show how to couple the bin sampling between the two processes. First, pick a number $\lambda$ uniformly at random from $[0,1)$. Let $i$ be such that $\phi_{i-1} \le \lambda \le \phi_i$ and $j$ such that $\psi_{j-1}(t) \le \lambda \le \psi_{j}(t)$. We are going to choose the $i^{th}$ bin in the $(1+\epsilon)$-choice process, and the $j^{th}$ slot in our process. It is easy to verify that this is indeed a valid coupling since the marginal selection probabilities respect the selection probabilities of the individual processes. Now, let $e_i$ be the vector with $1$ the $i^{th}$ coordinate and 0 everywhere else. First we will show that $x(t-1) +  e_i  \succeq s(t-1) + e_j$. From our induction hypothesis, we have that $x(t-1)  \succeq s(t-1)$. Furthermore, since $\psi_i(t) \le \phi_i$ for every $i \le N$, we can conclude that $j \ge i$. Now, notice that $x(t-1) +  e_i  \succeq s(t-1) + e_j$ is an immediate implication of the following claim of~\cite{wieder2007balanced}.
    \begin{claim}[Claim 2.4 in~\cite{wieder2007balanced}]
    For two sorted integer vectors $x, y$ such that $x \succeq y$, if $i \le j$, then $x +  e_i  \succeq y + e_j$, where the vectors $x + e_i, y + e_j$ are also sorted.
    \end{claim}
    Using \cref{claim:splitting majorization} on $s(t-1) + e_j$ and $s(t)$, and from transitivity of majorization we conclude that $x(t) \succeq s(t)$ (which concludes our induction). Finally, observe that since $x(t) \succeq s(t)$ and $\left\lVert x(t) \right \rVert_1 = \left\lVert s(t) \right \rVert_1$, from the definition of majorization we get that $\max_i x_i(t) \ge \max_i s_i(t)$ and $\min_i x_i(t) \le \min_i s_i(t)$. Thus, for our proposed coupling  $Gap(t) \le Gap_{1+\epsilon}(t)$ which implies that $Gap_{1+\epsilon}(t) \succeq Gap(t)$,  concluding our proof.
\end{proof}

Now we are ready to prove the main theorem of this section.

\begin{proof}[Proof of~\Cref{theorem:gapBound}]
\Cref{lemma:majorization} implies that we only need to bound the $(1+\epsilon)$-choice process with $N$ unit-sized bins.
Let $x(t)$ denote the sorted weight vector of bins at round $t$ for this process. Notice that
\[
Gap_{1+\epsilon}(t) = \max_{i} x_i(t) - \min_i x_i(t) =\max_{i} x_i(t) - t/N + t/N - \min_i x_i(t) \le 2 \max_i |x_i(t) - t/N|.
\]
This implies that 
\begin{equation}\label{eq: useful eq 1}
  \EX{}{e^{a \cdot Gap_{1+\epsilon}(t)}} \le \EX{}{e^{2 a\cdot \max_i |x_i(t) - t/N|}} \le \EX{}{\sum_{i=1}^N e^{2a| x_i(t) - t/N|}}.
\end{equation}

We use the following result of~\cite{yuval2015}:

\begin{theorem}[Theorem 2.2 in~\cite{yuval2015}, rephrased]\label{thm: yuval}
    For any $t \ge 0$, $\EX{}{\Gamma(t)} \le \frac{4c'}{a' \epsilon'}n$, where $\Gamma(t)=\sum_{i=1}^N e^{a'| x_i(t) - t/N|}$, where $a'$ is a hyperparameter of our choosing that is upper bounded by $\frac{\epsilon'}{10}$, $\epsilon'$ is a constant that depends on $\epsilon$, and $c'$ is a constant.
\end{theorem}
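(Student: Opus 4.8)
The plan is to prove the stated expectation bound for the $(1+\epsilon)$-choice process with $N$ unit-sized bins directly, via an exponential-potential drift argument in the spirit of~\cite{yuval2015} (and~\cite{talwar2014balanced}). Write $y_i(t) \coloneqq x_i(t) - t/N$ for the deviation of bin $i$ from the current average, so that $\sum_i y_i(t) = 0$ for every $t$. I would split the potential into an overload and an underload part, $\Phi(t) = \sum_i e^{a' y_i(t)}$ and $\Psi(t) = \sum_i e^{-a' y_i(t)}$; since $e^{a'|y|} \le e^{a'y} + e^{-a'y}$ we have $\Gamma(t) \le \Phi(t) + \Psi(t)$, so it suffices to bound $\E[\Phi(t)+\Psi(t)]$. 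The goal is a one-step contraction $\E[\Phi(t{+}1)+\Psi(t{+}1) \mid x(t)] \le (1-\delta)(\Phi(t)+\Psi(t)) + K$ with $\delta = \Theta(a'\epsilon'/N)$ and an additive constant $K$; unrolling such a recurrence yields a fixed-point bound that is uniform in $t$, hence in the number of balls $m$, which is exactly what~\Cref{thm: yuval} asserts.

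First I would compute the exact one-step change. When a ball arrives the average rises by $1/N$, so every bin's deviation drops by $1/N$ while the receiving bin's additionally rises by $1$. Sorting the bins in decreasing load and letting $p_k \coloneqq \phi_k - \phi_{k-1} = (k/N)^{1+\epsilon} - ((k{-}1)/N)^{1+\epsilon}$ be the probability that the $k$-th most loaded bin receives the ball, a direct computation gives
\[ \E[\Phi(t{+}1)\mid x(t)] = e^{-a'/N}\Big(\Phi(t) + (e^{a'}-1)\,R_+\Big), \qquad R_+ \coloneqq \sum_k p_k\, e^{a' y_{(k)}(t)}, \]
and symmetrically $\E[\Psi(t{+}1)\mid x(t)] = e^{a'/N}\big(\Psi(t) - (1-e^{-a'})R_-\big)$ with $R_- \coloneqq \sum_k p_k e^{-a' y_{(k)}(t)}$. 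Expanding $e^{\pm a'/N} = 1 \pm a'/N + O(a'^2/N^2)$ and $e^{\pm a'}-1 = \pm a' + O(a'^2)$, the drifts reduce up to lower-order terms to $\E[\Delta\Phi] \approx a'(R_+ - \Phi/N)$ and $\E[\Delta\Psi] \approx -a'(R_- - \Psi/N)$, so negative total drift requires $R_+$ below the uniform value $\Phi/N$ and $R_-$ above $\Psi/N$, by amounts proportional to the potential.

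The crux — and the step I expect to be the main obstacle — is extracting that proportional gap from the bias $\epsilon$. The receiving probabilities $p_k$ are increasing in $k$ because the convexity of $\phi_i = (i/N)^{1+\epsilon}$ shifts mass onto the less-loaded, high-rank bins, whereas $e^{a' y_{(k)}}$ is decreasing in $k$ (and $e^{-a' y_{(k)}}$ increasing); the Chebyshev sum inequality already gives $R_+ \le \Phi/N$ and $R_- \ge \Psi/N$, but with equality in the limit $\epsilon \to 0$. To get a genuine $\epsilon'$-proportional improvement I would split the sorted bins at the average and use that for overloaded bins (small $k$) the cumulative mass $\phi_k = (k/N)^{1+\epsilon} \le k/N$ falls strictly below uniform, the deficit being quantifiable in terms of the exponent $1+\epsilon$; this yields $R_+ \le \tfrac{1}{N}\Phi(t) - \tfrac{\epsilon'}{N}\Phi(t) + (\text{near-mean correction})$ for a constant $\epsilon' = \epsilon'(\epsilon) > 0$, and the mirror bound for $R_-$. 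The near-mean correction collects the bins with $|y_{(k)}| = O(1)$, where the restoring drift is weak; these are $O(1)$ terms of size $O(1)$ and are what produce the additive constant $K$. Controlling this correction cleanly, while ensuring the second-order $O(a'^2)$ Taylor errors do not swamp the linear $\Theta(a'\epsilon')$ drift, is precisely where the hypothesis $a' \le \epsilon'/10$ enters.

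Finally I would assemble the two estimates into $\E[\Phi(t{+}1)+\Psi(t{+}1)\mid x(t)] \le (1 - \tfrac{a'\epsilon'}{cN})(\Phi(t)+\Psi(t)) + K$ for absolute constants $c,K$. Taking total expectations and iterating gives $\E[\Phi(t)+\Psi(t)] \le (1-\tfrac{a'\epsilon'}{cN})^t(\Phi(0)+\Psi(0)) + \tfrac{cKN}{a'\epsilon'}$; since $\Phi(0)+\Psi(0)=2N$ and the geometric factor decays to $0$, the bound is uniform in $t$, and after fixing $c$ and $K$ it can be written as $\E[\Gamma(t)] \le \E[\Phi(t)+\Psi(t)] \le \frac{4c'}{a'\epsilon'}\,n$, as claimed. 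The only genuinely delicate part is the drift lemma of the previous paragraph; the recurrence-unrolling and the passage from $\Phi+\Psi$ back to $\Gamma$ are routine.
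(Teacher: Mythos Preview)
The paper does not prove this statement itself---it is quoted from~\cite{yuval2015} and used as a black box in the proof of \Cref{theorem:gapBound}. That said, the paper does reproduce the underlying argument in the more general weighted setting (\Cref{lemma:GammaBound} and its supporting Lemmas~\ref{lemma:Phi_1}--\ref{lemma: first bound on gamma}), so one can read off exactly how the drift argument must be structured. Compared to that structure, your sketch has a real gap.

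The step that fails is the claimed unconditional bound $R_+ \le \tfrac{1-\epsilon'}{N}\Phi(t) + (\text{near-mean correction})$ with an $O(1)$ correction. Consider the configuration with $N-1$ bins at deviation $L$ and one bin at deviation $-(N-1)L$, with $L$ large. No bin is near the mean, so your correction term is zero. A direct computation gives
\[
\frac{\Phi}{N}-R_+ \;=\; \bigl[(1-\tfrac{1}{N})-(1-\tfrac{1}{N})^{1+\epsilon}\bigr]\bigl(e^{a'L}-e^{-a'(N-1)L}\bigr)\;\approx\;\frac{\epsilon}{N}\,e^{a'L},
\]
while $\tfrac{\epsilon'}{N}\Phi \approx \epsilon'\,e^{a'L}$. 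For $N$ large the left side is smaller by a factor of $N$, so no fixed $\epsilon'>0$ works, and $\Phi$ by itself need not contract. The point is that when almost all bins are overloaded, the $(1+\epsilon)$-process has essentially nowhere better to put the ball, and the one-step drift of $\Phi$ can be nonnegative. What rescues the argument is that in exactly this regime $\Psi$ is enormous (the single deeply-underloaded bin contributes $e^{a'(N-1)L}$) and \emph{does} contract strongly, so $\Gamma=\Phi+\Psi$ still contracts. Your sketch, which bounds $\Phi$ and $\Psi$ separately and then adds, cannot exploit this.

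The fix is precisely the quantile-based case analysis the paper carries out for the weighted case: prove $\Phi$ contracts when $x^s_{3N/4}\le 0$ (\Cref{lemma:Phi_1}), prove $\Psi$ contracts when $x^s_{N/4}\ge 0$ (\Cref{lemma:Psi_1}), and in the off-diagonal cases show that either the ``wrong'' potential is already a small fraction of the other, or $\Gamma$ is below $cN$ (\Cref{lemma:Phi_2},~\Cref{lemma:Psi_2}). Only after this casework do you get the combined recursion $\E[\Gamma(t{+}1)\mid x(t)]\le (1-\zeta a'/N)\Gamma(t)+c^*$, and your final unrolling step then goes through verbatim. Your high-level outline (split into $\Phi+\Psi$, get a one-step contraction, iterate) is correct; what is missing is that the contraction cannot be obtained termwise with a bounded additive slack, and the quantile case split is not a cosmetic refinement but the mechanism that makes the drift inequality true.
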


We note that the $(1+\beta)$-process examined in \cite{yuval2015} is not the same as the $1+\epsilon$ choice algorithm presented here. Nevertheless, one key observation is that their results hold for processes that satisfy the following conditions. Let $\phi'_i$ be the probability that a ball is added to one of the $i$ most loaded bins. Then, if $\forall i \in [1, N-1], \phi'_{i}-\phi'_{i-1} \le \phi'_{i+1}-\phi'_{i}$ (where $\phi'_0=0$), $\phi'_{N/4} < 1/4$, and $\phi'_{3N/4} < 3/4$, their results go through. For any $\epsilon>0$, the $1+\epsilon$ choice process satisfies these constraints, and therefore~\Cref{thm: yuval} can be readily used.
Setting $c=(\frac{4c'}{\epsilon'})^{1/k}$ and $a = (a')^{1/k}/2$ for a constant $k$, we get that $\EX{}{\sum_{i=1}^N e^{2a| x_i(t) - t/N|}} \le \frac{c^k}{(2a)^k}N$. We have:
    \begin{align*}
        &\Prob{}{Gap(t) \ge \frac{k}{a} \cdot \log \left( \frac{c}{2a} N \right)} \\&\le \Prob{}{Gap_{1+\epsilon}(t) \ge \frac{k}{a} \cdot \log \left( \frac{c}{2a} N \right)} \tag{\cref{lemma:majorization}}\\
         &= \Prob{}{aGap_{1+\epsilon}(t) \ge \log \left( \frac{c^k}{(2a)^k} N^{k} \right)} \\
        &= \Prob{}{e^{aGap_{1+\epsilon}(t)} \ge N^{k-1} \frac{c^k}{(2a)^k} N} \tag{Monotonicity of $\exp(\cdot)$}\\
        &\le \Prob{}{e^{aGap_{1+\epsilon}(t)} \ge  N^{k-1} \: \EX{}{\sum_{i=1}^N e^{2a| x_i(t) - t/N|}}}\tag{\Cref{thm: yuval}}\\
        &\le \Prob{}{e^{aGap_{1+\epsilon}(t)} \ge  N^{k-1} \: \EX{}{e^{aGap_{1+\epsilon}(t)}}}\tag{\Cref{eq: useful eq 1}}\\
        &\le N^{-(k-1)}. \tag{Markov's inequality}
    \end{align*}

\end{proof}

\subsection{Weighted balls and the proof of~\cref{theorem:gapBoundWeightedBalls}}

The main issue with extending the above analysis to the case of weighted balls is that the majorization argument of~\Cref{lemma:majorization} does not go through. Consider a state where one process has equally distributed the balls so far among all $n$ bins, and one that has equally distributed the balls among $n-1$ bins. Obviously, the second vector majorizes the first one. Now, say that a very heavy ball arrives. The second process might allocate it to the empty bin, whereas the first process has to allocate it to some bin that already has a significant load. After that insertion, if the weight of the ball is big enough, the second process will not majorize the first one anymore. This example demonstrates why coupling and majorizing do not seem like viable options for weighted balls. Instead, we use a potential function argument, similar to the one introduced in~\cite{yuval2015}.

Throughout the analysis we assume that $\alpha < \sqrt{\frac{5}{4}}$ and that $\beta < \sqrt{\frac{4}{3}}$. Equivalently, there exists a constant $\mu$ such that $\frac{1}{\alpha^2} \ge \frac{4}{5} +\mu$ and $\beta^2 \le \frac{4}{3} - \mu$.
 
 Consider the vector $x(t)$ which is defined as $x_i(t) \coloneqq \frac{1}{N_i}\left(w_i(t) - \frac{N_i}{N}\sum_{t' \le t} w(t')\right) = v_i(t) - \frac{1}{N}\sum_{t' \le t} w(t')$, where $w(t')$ is the weight of the ball at round $t'$. Essentially $x_i(t)$ captures the distance of the value of a bin from its value if everything was perfectly distributed. For the remainder of the analysis, and without loss of generality, we assume that $x_i(t)$ is sorted. Define the following potential functions, where $0 < a < 1$ is a small constant:
 \begin{enumerate}
    \item $\Phi(x(t)) = \sum_{i=1}^n e^{a x_i(t)}$.
    \item $\Psi(x(t)) = \sum_{i=1}^n e^{-a x_i(t)}$.
    \item $\Gamma(x(t)) = \Phi(x(t)) + \Psi(x(t))$.
\end{enumerate}
For the sake of the analysis, we will again consider slots. Each bin $i$ of weight $N_i$ is composed of a total of $N_i$ unit-sized slots. Let $x^s(t)$ be the normalized slot vector. For a bin $i \in [n]$ and slot $j \in [N_i]$ we have that $x^s_{i,j}(t) \coloneqq v_i(t) - \frac{1}{N}\sum_{t' \le t} w(t') = x_i(t)$ which captures the distance of the weight of the slot from the weight it would have if everything was perfectly distributed.

Our final goal is to compute a bound on $\EX{}{\Gamma(x(t))}$. Specifically, the majority of our analysis is spent towards proving the following lemma:

\begin{restatable}{lemma}{lemmaGammabound}
\label{lemma:GammaBound}
    For any $t\ge 0$, $\EX{}{\Gamma(x(t))} \le \frac{c^*}{a \zeta} N$, where $\zeta = \min{\{\frac{\mu}{4}, \frac{1}{60}\}}$ and $c^*$ is a constant that only depends on $a$ and $\mu$.
\end{restatable}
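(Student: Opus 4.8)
The plan is to show that the combined potential $\Gamma(x(t)) = \Phi(x(t)) + \Psi(x(t))$ satisfies a one-step \emph{supermartingale-type} drift inequality of the form
\[
\EX{}{\Gamma(x(t+1)) \mid x(t)} \le (1 - a\zeta/N)\,\Gamma(x(t)) + c^* ,
\]
for a suitable constant $c^*$ depending only on $a$ and $\mu$, from which the claimed steady-state bound $\EX{}{\Gamma(x(t))} \le \frac{c^*}{a\zeta}N$ follows by unrolling the recursion (geometric series) and using $\Gamma(x(0)) = 2n \le 2N$. The heart of the argument is establishing this drift inequality, and as in~\cite{yuval2015} this is done by working at the level of \emph{slots}: because the weight of an arriving ball is spread uniformly across the $N_i$ slots of the chosen bin $i$, the normalized slot vector $x^s(t)$ has all entries of bin $i$ equal to $x_i(t)$, and the change in $x_i$ is exactly $w(t)/N_i - w(t)/N$ divided appropriately — but crucially, to get a clean per-slot accounting I would analyze the \emph{intermediate process} mentioned in the introduction that places the ball in the selected slot \emph{without} redistributing within the bin, bound the potential for that process, and then argue (via convexity of $e^{\pm a x}$ together with \Cref{claim:splitting majorization}, i.e. the redistribution step only majorizes and hence only decreases $\Phi$ and $\Psi$) that the true process's potential is dominated by it.

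The key steps, in order: (1) Fix a round and condition on $x(t)$; write $\Delta\Phi = \sum_i e^{a x_i(t)}(e^{a \delta_i} - 1)$ where $\delta_i$ is the change in $x_i$, and similarly for $\Psi$. (2) Split $\EX{}{e^{a\delta_i}-1}$ using $e^z \le 1 + z + z^2$ for $|z|$ small (legitimate because $\EX{}{e^{\lambda\calW}}<\infty$ controls all the moments of $w(t)$, and $a$ is chosen tiny) — this is where the moment generating function hypothesis on $\calW$ and the constant $S$ from the main-theorem proof enter. (3) The linear term $\EX{}{\delta_i}$ contributes the drift: a slot that is among the more loaded ones is \emph{less} likely to be chosen, so $\EX{}{\delta_i}$ is negative on the high-$\Phi$ coordinates and positive on the low ones; this is quantified by the bounds on $\psi_i(t)$ (the probability of selecting one of the $i$ most loaded slots), which for the $(\alpha,\beta)$-biased two-choice process lie strictly below/above the corresponding quantities for the idealized process, the slack being exactly where $\alpha<\sqrt{5/4}$ and $\beta<\sqrt{4/3}$ (equivalently $1/\alpha^2 \ge 4/5+\mu$, $\beta^2 \le 4/3-\mu$) are used. (4) Combine: on the part of the vector where $\Phi$ is ``large relative to $\Psi$'' the two-choice bias gives a net negative drift in $\Phi$ of order $-a\zeta\Phi/N$, and symmetrically for $\Psi$ where it dominates; on the remaining ``balanced'' region one bounds $\Gamma$ by a constant times $N$ outright, absorbing it into $c^*$. (5) Add the quadratic error terms, which are $O(a^2 \Gamma(x(t))/N)$ and hence, for $a$ small enough, at most half the linear drift — this fixes the final constant $\zeta = \min\{\mu/4, 1/60\}$.

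The main obstacle I anticipate is step (3)–(4): making the drift argument work \emph{simultaneously} with weighted bins (handled by the slot decomposition and the intermediate non-redistributing process) and non-uniform $(\alpha,\beta)$-biased selection (handled by plugging the $\psi_i$ bounds into the case analysis). Neither the coupling/majorization route of~\cite{wieder2007Heterogeneous,Berenbrink2014non-uniform} nor the pure potential argument of~\cite{yuval2015} applies off the shelf, as the authors note; the delicate point is that the $\psi_i$ bounds, which are only approximate and slot-dependent, must be inserted into the drift computation in such a way that the constant loss stays bounded — i.e.\ one needs the region where the two-choice advantage is provably at least a constant fraction to still cover enough of the potential mass, and this is precisely what forces the explicit numeric thresholds on $\alpha$ and $\beta$. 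I would also expect to need a careful treatment of the ``boundary'' slot index (the one where $\lambda$ straddles $\psi_{j-1}$ and $\psi_j$) so that fractional contributions are handled cleanly, analogous to the $N/4$ and $3N/4$ conditions invoked for \Cref{thm: yuval}.
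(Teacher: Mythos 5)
Your proposal follows essentially the same route as the paper: a one-step drift inequality $\EX{}{\Gamma(x^s(t+1))\mid x^s(t)} \le (1-\zeta a/N)\Gamma(x^s(t)) + c^*$ proved at the slot level via the non-redistributing intermediate process plus majorization/convexity, the Taylor bound on the moment generating function, the $(\alpha,\beta)$-biased selection-probability bounds, and a case analysis on whether $\Phi$ or $\Psi$ dominates, followed by unrolling the recursion (the paper phrases this last step as induction, which is equivalent). The plan is sound and matches the paper's argument in all essentials.
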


Once we have established~\Cref{lemma:GammaBound}, we can use arguments similar to the ones in the proof of~\Cref{theorem:gapBound} to get the desired results:

\begin{proof}[Proof of~\Cref{theorem:gapBoundWeightedBalls}]

As in the proof of \cref{theorem:gapBound} notice that $Gap(t) \le 2 \cdot \max_i |v_i(t) - t/N|$ and thus $e^{\frac{a}{2}Gap(t)} \le  \sum_{i=1}^n e^{a |x_i(t)|} \le \Gamma(x(t))$.
Setting $\tilde{c}= \left(\frac{c^*}{\zeta}\right)^{1/k}$ for some constant $k$ we have:
    \begin{align*}
        \Prob{}{Gap(t) \ge \frac{2k}{a} \cdot \log \left( \frac{\tilde{c}}{a^{1/k}} N \right)} &= \Prob{}{\frac{a}{2}Gap(t) \ge \log \left( \frac{\tilde{c}^k}{a} N^{k} \right)} \\
        &= \Prob{}{e^{\frac{a}{2}Gap(t)} \ge N^{k-1}\frac{\tilde{c}^k}{a} N} \\
        &\le \Prob{}{e^{\frac{a}{2}Gap(t)} \ge  N^{k-1} \: \EX{}{\Gamma(x(t))}} &(\cref{lemma:GammaBound})\\
        &\le \Prob{}{\Gamma(x(t)) \ge  N^{k-1} \: \EX{}{\Gamma(x(t))}}\\
        &\le N^{-(k-1)}. &(\text{Markov's Inequality})
    \end{align*}
\end{proof}

We are going to prove a series of lemmas that bound the expected increase of $\Phi(x(t))$ and $\Psi(x(t))$ under mutually exclusive scenarios. By combining these lemmas we will have a bound on the expected increase of $\Gamma(x(t))$ which, in turn, will allow us to establish~\Cref{lemma:GammaBound}. 
As opposed to previous results in the balls-into-bins process, our analysis needs to account for the weighted balls, the weighted bins, and the $(\alpha, \beta)$-biased distribution. To take care of the fact that balls are weighted, we borrow from the potential function approach of~\cite{yuval2015}. To handle weighted bins we again resort to slots: instead of bounding the potential functions for the bin vector, we are going to bound it for the slot vector. In turn, this will produce bounds on the desired quantity. To bound the expected increase of the potential function for the slot vector we are going to use a weaker intermediate process that adds the ball to the selected slot, but does not equally distribute the load among the slots of the corresponding bin. Finally, regarding the $(\alpha, \beta)$-biased sampling, we prove upper and lower bounds on the probability we select the $i^{th}$ most loaded slot; we can delicately plug these bounds in our analysis without occurring a super-constant loss.

Starting with defining the potential function for the slot vectors, let $x_i^{s}(t)$ denote the $i^{th}$ coordinates of the slot vector for all $i \in [N]$. It is not difficult to see that:
\[\Phi(x(t)) = \sum_{i=1}^n e^{a x_i(t)} \le \sum_{i=1}^n \sum_{j=1}^{N_i} e^{a x_i(t)} = \sum_{i=1}^n \sum_{j=1}^{N_i} e^{a x^s_{i,j}(t)} = \Phi(x^s(t)).\]

Similarly we can prove that $\Psi(x(t)) \le \Psi(x^s(t))$ and thus $\Gamma(x(t)) \le \Gamma(x^s(t))$. The following analysis aims to upper bound $\EX{}{\Gamma(x^s(t))}$ which will immediately give a bound on $\EX{}{\Gamma(x(t))}$.

Let $p_{i}(t)$ be the probability with which we select the $i^{th}$ most loaded slot in the normalized slot vector, and add the ball to the respective bin. For ease of notation, we omit dependence on time and simply write $p_{i}$ when $t$ is implied by the context. Recall that the probability distribution from which slots are sampled is $(\alpha, \beta)$-biased. If $(\alpha, \beta) = (1,1)$ (i.e. slots were sampled from the uniform distribution) it is not difficult to see that $p^{U}_{i} = \left( \frac{i}{N}\right)^2 - \left( \frac{i-1}{N} \right)^2 = \frac{2i-1}{N^2}$. We prove that $p_i$ can be approximated by $p^{U}_{i}$.

\begin{lemma}
\label{lemma:abBound}
    Let $\calD$ be an $(\alpha, \beta)$-biased distribution for sampling slots. Then for any $i \in [n]$:
    \[\frac{1}{\alpha^2}\cdot p^{U}_i \le p_i \le \beta^2 \cdot p^{U}_i.\]
\end{lemma}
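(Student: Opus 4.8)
The plan is to express $p_i$, the probability of selecting the $i$-th most loaded slot (in the sorted normalized slot vector) \emph{and} being the less-loaded of the two sampled slots, as a difference of ``prefix'' probabilities, and then bound each prefix probability using the $(\alpha,\beta)$-biased assumption. Concretely, let $q_j$ denote the probability that a single $(\alpha,\beta)$-biased draw lands on one of the $j$ most loaded slots; since each slot is sampled with probability in $[\tfrac{1}{\alpha N}, \tfrac{\beta}{N}]$, summing over $j$ slots gives $\tfrac{j}{\alpha N} \le q_j \le \tfrac{\beta j}{N}$. The two draws are independent, and the ball is placed in the $i$-th slot precisely when the minimum of the two sampled ranks equals $i$; by inclusion–exclusion on ``both draws land in the top $i$'' versus ``both land in the top $i-1$,'' we get $p_i = q_i^2 - q_{i-1}^2$. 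Note $p_i^U$ arises from exactly this identity with $q_j = j/N$.

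First I would establish the identity $p_i = q_i^2 - q_{i-1}^2$ rigorously, being careful that ties in slot loads are broken by a fixed rule so that ``the $i$ most loaded slots'' is a well-defined set of size $i$ and the events nest properly. Then, for the upper bound, I would factor $p_i = (q_i - q_{i-1})(q_i + q_{i-1})$ — but this factorization is awkward because $q_i - q_{i-1}$ (the probability of the single $i$-th slot) need not be monotone in the same direction as the bound wants. A cleaner route: write $q_i = q_{i-1} + r_i$ where $r_i \in [\tfrac{1}{\alpha N}, \tfrac{\beta}{N}]$ is the mass on the $i$-th slot alone, so $p_i = 2 q_{i-1} r_i + r_i^2$. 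Using $q_{i-1} \le \tfrac{\beta(i-1)}{N}$ and $r_i \le \tfrac{\beta}{N}$ gives $p_i \le \tfrac{2\beta^2(i-1)}{N^2} + \tfrac{\beta^2}{N^2} = \beta^2 \cdot \tfrac{2i-1}{N^2} = \beta^2 p_i^U$. Symmetrically, using $q_{i-1} \ge \tfrac{i-1}{\alpha N}$ and $r_i \ge \tfrac{1}{\alpha N}$ gives $p_i \ge \tfrac{1}{\alpha^2} p_i^U$. This handles both directions in one stroke.

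The main obstacle I anticipate is not the algebra but the bookkeeping around \emph{which} collection of slots constitutes ``the $i$ most loaded,'' since the sorted slot vector $x^s(t)$ groups together all $N_i$ slots of a given bin at equal height, so the top-$i$ prefix may split a bin's block of slots. One must check that this is harmless: the biased-sampling bound $[\tfrac{1}{\alpha N}, \tfrac{\beta}{N}]$ holds for \emph{every} slot individually (it was derived in the Preliminaries as $\tfrac{1}{N_i}\Pr[i^\ast \sim \calD]{i^\ast = i} \in [\tfrac{1}{\alpha N}, \tfrac{\beta}{N}]$), so the per-slot mass $r_j$ lies in that interval regardless of how ties are resolved, and the prefix sums $q_j$ obey the stated bounds for any fixed tie-breaking order. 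Once that observation is in place, the identity $p_i = q_i^2 - q_{i-1}^2$ and the two-term expansion above complete the proof with no further subtlety.
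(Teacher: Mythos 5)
Your proposal is correct and matches the paper's argument: the paper likewise decomposes the two-draw probability as $r_i^2 + 2\,r_i\sum_{j<i} r_j$ (writing the single-draw slot probabilities in place of your $r_i$ and $q_{i-1}$) and bounds each single-draw factor by $\beta/N$ from above and $1/(\alpha N)$ from below. Your derivation via $p_i = q_i^2 - q_{i-1}^2$ and the remark on tie-breaking are just a more careful write-up of the same one-line computation.
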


\begin{proof}
    We have:
\[ p_i = p_i^2 + 2 \, p_i \,\sum_{j < i} p_j  \le  \left(\beta \frac{1}{N}\right)^2 + 2 \left(\beta \frac{1}{N}\right) \sum_{j < i} \left(\beta \frac{1}{N}\right) = \beta^2 \cdot p^{U}_{i}.\]
A near identical argument shows that $p_i \geq \frac{1}{\alpha^2} \cdot p^{U}_{i}$.
\end{proof}

We have assumed that $M(\lambda) = \EX{ w \sim \calW }{e^{\lambda w}} < \infty$, which implies that for any $|z| < \lambda/2$, $M''(z) \le 2S$ for some constant $S \ge 1$. We first bound the difference in $\Phi(.)$ between consecutive steps.

\begin{lemma}\label{lemma: first phi bound}
$\EX{}{\Phi(x^s(t+1))- \Phi(x^s(t)) | x^s(t)} \le \sum_{i =1}^N \left( p_{i}(a + Sa^2)-(\frac{a}{N} - S\frac{a^2}{N^2}) \right) e^{a x^s_i(t)}$.
\end{lemma}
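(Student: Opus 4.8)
The plan is to fix a round $t$, condition on the slot vector $x^s(t)$, and analyze the change $\Phi(x^s(t+1)) - \Phi(x^s(t))$ by decomposing it slot by slot. Recall that $\Phi(x^s(t)) = \sum_{i=1}^N e^{a x^s_i(t)}$, where $x^s_i(t) = v_{\mathrm{bin}(i)}(t) - \frac{1}{N}\sum_{t' \le t} w(t')$. When the $(t+1)$-st ball of weight $w = w(t+1)$ arrives and is placed in the bin containing the currently $i$-th most loaded slot (an event of probability $p_i$), two things happen to the exponents: first, the term $-\frac{1}{N}\sum_{t' \le t+1} w(t')$ decreases \emph{every} coordinate by $\frac{w}{N}$; second, the value of the chosen bin increases by $\frac{w}{N_{\mathrm{bin}(i)}}$, which — because we are working with the slot vector and the weight is spread equally over the bin's $N_i$ slots — increases that one slot coordinate by $\frac{w}{N_i}$. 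But here we should invoke the \emph{intermediate process} described in the text, which adds the full weight $w$ to the single selected slot and does not redistribute; since that process only inflates $\Phi$ relative to ours (the potential of a bin vector is dominated by that of its slot vector, and redistributing weight within a bin can only decrease $\sum e^{a(\cdot)}$ by convexity/majorization), it suffices to upper-bound the change for the intermediate process. So I will analyze: with probability $p_i$, coordinate $i$ goes from $x^s_i$ to $x^s_i + w - \frac{w}{N}$, and every other coordinate $j$ goes from $x^s_j$ to $x^s_j - \frac{w}{N}$.

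The next step is to take expectations, first over $w \sim \calW$ and then over the slot choice. For the chosen slot $i$, the contribution to $\E[\Phi(x^s(t+1)) \mid x^s(t)]$ is $e^{a x^s_i(t)} \cdot \E_w[e^{a(w - w/N)}] = e^{a x^s_i(t)} \cdot \E_w[e^{a w(1 - 1/N)}]$, and for each unchosen slot $j$ it is $e^{a x^s_j(t)} \cdot \E_w[e^{-a w/N}]$. Summing and subtracting $\Phi(x^s(t))$, the change contributed through "slot $i$ is chosen" is
\[
p_i \, e^{a x^s_i(t)} \left( \E_w\!\left[e^{a w(1-1/N)}\right] - \E_w\!\left[e^{-a w/N}\right] \right),
\]
while the "background" decay affecting all slots regardless of which is chosen gives $\sum_j e^{a x^s_j(t)} \big( \E_w[e^{-aw/N}] - 1 \big)$. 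The key estimates are then: (i) $\E_w[e^{a w(1-1/N)}] - \E_w[e^{-aw/N}] \le a + S a^2$, using a second-order Taylor expansion of $M$ around $0$ with $M'(0) = \E[\calW] = 1$ and $M''(z) \le 2S$ for $|z| < \lambda/2$ (so $M(z) \le 1 + z + S z^2$), applied with $z = a(1-1/N)$ and $z = -a/N$; the difference of the two upper/lower bounds telescopes to roughly $a + S a^2$ once $a$ is small enough that both arguments lie in $(-\lambda/2, \lambda/2)$; and (ii) $\E_w[e^{-aw/N}] - 1 \le -\frac{a}{N} + S\frac{a^2}{N^2}$, again from the Taylor bound (here one also wants the matching lower-order term $-a/N$, which comes from $M(-a/N) \ge 1 - a/N$ combined with the quadratic upper correction — or more carefully from $M(z) \le 1 + z + Sz^2$ directly with $z = -a/N$). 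Plugging these in and collecting the coefficient of each $e^{a x^s_i(t)}$ yields exactly $\sum_{i=1}^N \big( p_i(a + S a^2) - (\frac{a}{N} - S\frac{a^2}{N^2}) \big) e^{a x^s_i(t)}$, as claimed.

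The main obstacle I anticipate is \textbf{justifying the reduction to the intermediate non-redistributing process cleanly}, i.e. arguing that $\E[\Phi(x^s(t+1)) \mid x^s(t)]$ for our actual (redistributing) process is at most that for the intermediate process. This needs the observation that spreading a weight increment of $w$ equally across the $N_i$ slots of a bin — versus dumping it all on one slot — leaves $\sum_{t' \le t} w(t')$ (hence the common shift) unchanged and only changes the $N_i$ slot coordinates within that bin, replacing a vector majorized configuration by a more balanced one; since $z \mapsto e^{az}$ is convex, $\sum e^{a(\cdot)}$ over those slots can only \emph{decrease}, so the inequality $\Phi_{\text{ours}} \le \Phi_{\text{intermediate}}$ holds pointwise (this is essentially the same majorization fact already invoked in the text, where it is noted that $\Phi(x(t)) \le \Phi(x^s(t))$). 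The remaining steps — the Taylor estimates on $M$ and the bookkeeping of coefficients — are routine once $a$ is taken small relative to $\lambda$, which is consistent with the standing assumption $0 < a < 1$ and can be tightened later when $a$ is finally pinned down in the proof of~\Cref{lemma:GammaBound}.
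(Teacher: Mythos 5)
Your proposal is correct and follows essentially the same route as the paper's proof: reduce to the intermediate process that dumps the full weight on the chosen slot (justified via the averaging/majorization fact of \cref{claim:splitting majorization} together with Schur-convexity of $z \mapsto e^{az}$), then apply a second-order Taylor bound on the moment generating function $M$ around $0$ using $M'(0)=1$ and $M''(z)\le 2S$, and collect coefficients. Your "chosen slot plus background decay" bookkeeping is algebraically identical to the paper's $p_i(\cdot) + (1-p_i)(\cdot)$ decomposition, so there is nothing substantive to flag.
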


\begin{proof}
Throughout the analysis, we assume an appropriately small, yet constant choice of $a$.
 \begin{align*}
     &\EX{w(t+1) \sim \calW}{e^{a\left(x_i(t)+w(t+1)-\frac{w(t+1)}{N}\right)} - e^{ax_i(t)}} = e^{ax_i(t)} \cdot M( a ( 1- 1/N )) - e^{ax_i(t)}\\
     &\qquad= e^{ax_i(t)} \left(M(0) +M'(0) a(1-1/N) + M''(\zeta) \left( a(1-1/N)\right)^2/2 \right) - e^{ax_i(t)} \\
     &\qquad \le e^{ax_i(t)} \left(1 + a(1-1/N) + 2S \left( a(1-1/N)\right)^2/2 -1 \right) \\
     &\qquad \le e^{ax_i(t)} \left((1-1/N)a+Sa^2 \right),
 \end{align*}

 where the second equality comes from the Lagrange form of Taylor’s remainder theorem on $M(\lambda)$ around $\lambda = 0$, and $\zeta$ is some constant in $[0, (1-1/N)a]$. Similarly, we can show that:

\[\EX{{w(t+1)} \sim \calW}{e^{a\left(x_i(t)-\frac{w(t+1)}{N}\right)} - e^{ax_i(t)}} \le e^{ax_i(t)} \left(-\frac{a}{N}+S\frac{a^2}{N^2} \right).\]

    Now, given a slot vector $x^s(t)$ and a ball at step $t+1$, consider two processes. 
    One chooses a slot (based on the given probability distribution $\calD$) and equally splits the weight of the ball among the slots of the same bin; let $x^s(t+1)$ be the corresponding slot vector. The other process chooses a slot (based on the given probability distribution $\calD$) and assigns the entire weight of the ball to that slot. Let $x^*(t+1)$ be the resulting slot vector, and note that, if $i$ was the chosen slot, then $x^*(t+1) = x^s(t)+ w(t+1)\cdot e_i - \frac{w(t+1)}{N} \cdot \mathbf{1}$, where $e_i$ is the vector with $1$ in the $i$-th coordinate and $0$ everywhere else, and $\mathbf{1}$ is the all ones vector. 
    
    We first show that $\Phi(x^*(t+1)) \succeq \Phi(x^s(t+1))$ given $x^s(t)$. We couple the two processes, by making both of them choose the same slot $i$. Let $j$ be the bin to which $i$ corresponds. Under this coupling, from \cref{claim:splitting majorization} it is evident that $x^*(t+1)$ majorizes $x^s(t+1)$ (as per~\Cref{dfn: majorization}).  Thus $x^*(t+1) \succeq x^s(t+1)$. It is also well known (\cite{marshall1979inequalities}) that if $x \succeq y$ and $g$ is a convex function, it holds that $\sum_i g(x_i) \ge \sum_i g(y_i)$. Since $g(x) = e^{ax}$ is a convex function it is clear that $\Phi(x^*(t+1)) \ge \Phi(x^s(t+1))$ given $x^s(t)$. 
    Thus:
    \[\EX{}{\Phi(x^s(t+1))- \Phi(x^s(t)) | x^s(t)} \le \EX{}{\Phi(x^*(t+1))- \Phi(x^s(t)) | x^s(t)}.\]
    Now, observe that if we add a ball at slot $i$ during round $t+1$ then $x^*_i(t+1) = x^s_i(t) + w(t+1)-\frac{w(t+1)}{N}$, where $w(t+1)$ is the weight of the ball at round $t+1$ and is sampled from $\calW$. Otherwise, $x^*_i(t+1) = x^s_i(t) -\frac{w(t+1)}{N}$. Combining all of the above, we have:

    \begin{align*}
        &\EX{}{\Phi(x^s(t+1))- \Phi(x^s(t)) | x^s(t)} \le \EX{}{\Phi(x^*(t+1))- \Phi(x^s(t)) | x^s(t)} \\ &\le \sum_{i =1}^N  p_i\EX{w(t+1) \sim \calW}{e^{a\left(x^s_i(t)+w(t+1)-\frac{w(t+1)}{N}\right)} - e^{ax^s_i(t)}} + (1-p_i)\EX{w(t+1) \sim \calW}{e^{a\left(x^s_i(t)-\frac{w(t+1)}{N}\right)} - e^{ax^s_i(t)}}\\
        &\le \sum_{i =1}^N  \left(p_i\left( \left( 1 - \frac{1}{N} \right)a+Sa^2 \right) + (1-p_i)\left(-\frac{a}{N}+S\frac{a^2}{N^2} \right) \right)e^{ax^s_i(t)}\\
        &\le \sum_{i =1}^N \left( p_i(a + Sa^2)-\left(\frac{a}{N} - S\frac{a^2}{N^2}\right) \right) e^{a x_i(t)}. \qedhere
    \end{align*}
\end{proof}

The proof of the following corollary is deferred to~\Cref{app: missing proofs}.

\begin{corollary} \label{corollary:Phi}
    $\EX{}{\Phi(x^s(t+1))- \Phi(x^s(t)) | x^s(t)} \le \beta^2 \frac{2a}{N}\Phi(x^s(t))$.
\end{corollary}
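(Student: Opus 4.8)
The plan is to start from the per-slot bound already established in \Cref{lemma: first phi bound}, namely
\[
\EX{}{\Phi(x^s(t+1))- \Phi(x^s(t)) \mid x^s(t)} \;\le\; \sum_{i=1}^N \left( p_i(a+Sa^2) - \Bigl(\tfrac{a}{N} - S\tfrac{a^2}{N^2}\Bigr) \right) e^{a x^s_i(t)},
\]
and to turn the coefficient of $e^{a x^s_i(t)}$ into a quantity of order $a/N$ that is the \emph{same} for every slot; since each term $e^{a x^s_i(t)}$ is nonnegative, the sum then collapses to that common coefficient times $\Phi(x^s(t)) = \sum_{i=1}^N e^{a x^s_i(t)}$.

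The first step is to invoke \Cref{lemma:abBound} to replace $p_i$ by the uniform‑selection probability, $p_i \le \beta^2 p^{U}_i = \beta^2\tfrac{2i-1}{N^2}$, followed by the crude estimate $\tfrac{2i-1}{N^2} \le \tfrac{2}{N}$ (valid for every $i\in[N]$ since $i\le N$). This yields, for each slot $i$,
\[
p_i(a+Sa^2) - \Bigl(\tfrac{a}{N} - S\tfrac{a^2}{N^2}\Bigr) \;\le\; \beta^2\tfrac{2}{N}(a+Sa^2) - \tfrac{a}{N} + S\tfrac{a^2}{N^2} \;=\; \beta^2\tfrac{2a}{N} + \Bigl(2S\beta^2\tfrac{a^2}{N} - \tfrac{a}{N} + S\tfrac{a^2}{N^2}\Bigr).
\]
The second step is to check that the bracketed correction term is nonpositive once $a$ is chosen a sufficiently small constant. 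Dividing by $a/N>0$, this amounts to $Sa\bigl(2\beta^2 + 1/N\bigr) \le 1$; using $\beta^2 \le \tfrac{4}{3}$ and $N\ge 1$ we get $2\beta^2 + 1/N \le \tfrac{11}{3}$, so it suffices to take $a \le \tfrac{3}{11S}$, which is compatible with the ``appropriately small, yet constant'' choice of $a$ already made in \Cref{lemma: first phi bound} (and incidentally also makes $\tfrac{a}{N} - S\tfrac{a^2}{N^2}\ge 0$, though that is not needed). Summing the resulting per‑slot inequality $p_i(a+Sa^2) - (\tfrac{a}{N} - S\tfrac{a^2}{N^2}) \le \beta^2\tfrac{2a}{N}$ over $i$ and factoring out gives exactly $\EX{}{\Phi(x^s(t+1))- \Phi(x^s(t)) \mid x^s(t)} \le \beta^2\tfrac{2a}{N}\Phi(x^s(t))$.

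There is essentially no obstacle here: the entire content is the substitution $p_i \le \beta^2 p^U_i \le 2\beta^2/N$ together with an elementary one‑variable inequality. The only point worth noting is that the bound $p_i \le 2\beta^2/N$ is wasteful but harmless — we only ever need a coefficient of order $a/N$ — and that the second‑order terms $Sa^2$ coming from the moment‑generating‑function (Taylor remainder) expansion are absorbed by taking the constant $a$ small, which is precisely what the constraint $Sa(2\beta^2+1/N)\le 1$ records.
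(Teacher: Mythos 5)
Your proof is correct, and it starts from the same two ingredients as the paper's (\Cref{lemma: first phi bound} followed by \Cref{lemma:abBound}), but it finishes the summation differently. The paper first discards the negative drift term $-(\tfrac{a}{N}-S\tfrac{a^2}{N^2})$ and bounds $(a+Sa^2)\le 2a$ via $Sa\le 1$, then controls $\sum_i p^U_i e^{ax^s_i(t)}$ by a rearrangement/Chebyshev-sum argument: since $p^U_i=\tfrac{2i-1}{N^2}$ is increasing in $i$ while $e^{ax^s_i(t)}$ is decreasing (the slot vector is sorted), the weighted sum is at most the uniform average $\tfrac{1}{N}\Phi(x^s(t))$. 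You instead keep the negative drift term, use the crude pointwise bound $p^U_i\le\tfrac{2}{N}$ (worst case a factor of $2$ above the average), and let the retained $-\tfrac{a}{N}$ absorb the resulting second-order slack under the constraint $Sa(2\beta^2+1/N)\le 1$. Your route is more elementary --- it never uses the sortedness of $x^s(t)$ --- at the cost of a slightly more restrictive (but still constant, and $\beta$-dependent) choice of $a$; the paper's route needs only $Sa\le 1$ but leans on the ordering of the slot vector. Both land on the same constant $\beta^2\tfrac{2a}{N}$, so nothing downstream is affected.
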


The proof of the following lemma is identical to the proof of \cref{lemma:Phi_1}.

\begin{lemma}
$\EX{}{\Psi(x^s(t+1))- \Psi(x^s(t)) | x^s(t)} \le \sum_{i =1}^N \left( p_i(-a + Sa^2)+(\frac{a}{N} + S\frac{a^2}{N^2}) \right) e^{-a x^s_i(t)}$.
\end{lemma}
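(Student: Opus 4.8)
The plan is to mirror the proof of \Cref{lemma: first phi bound} almost verbatim, tracking the sign changes introduced by the exponent $-a x^s_i(t)$ instead of $a x^s_i(t)$. First I would record the two one-step moment estimates, now for the function $g(x) = e^{-ax}$. If slot $i$ receives the ball, its normalized value increases by $w(t+1)(1 - 1/N)$, so we need $\EX{w \sim \calW}{e^{-a(x_i(t) + w(1-1/N))} - e^{-a x_i(t)}} = e^{-a x_i(t)}\bigl(M(-a(1-1/N)) - 1\bigr)$; if slot $i$ does not receive the ball, its value drops by $w(t+1)/N$, giving $e^{-a x_i(t)}\bigl(M(a/N) - 1\bigr)$. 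Applying the Lagrange form of Taylor's theorem to $M$ around $0$, using $M(0)=1$, $M'(0)=\EX{}{\calW}=1$, and $M''(\zeta) \le 2S$ for $|\zeta| < \lambda/2$ (valid for our small constant $a$), the first quantity is at most $e^{-a x_i(t)}\bigl(-a(1-1/N) + Sa^2\bigr)$ and the second is at most $e^{-a x_i(t)}\bigl(\tfrac{a}{N} + S\tfrac{a^2}{N^2}\bigr)$.

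Next I would set up the same coupling between the ``equal-split'' process that produces $x^s(t+1)$ and the ``dump-on-one-slot'' process that produces $x^*(t+1)$, forcing both to pick the same slot $i$. By \Cref{claim:splitting majorization}, $x^*(t+1) \succeq x^s(t+1)$, and since $g(x) = e^{-ax}$ is convex, the standard majorization fact (\cite{marshall1979inequalities}) gives $\Psi(x^*(t+1)) \ge \Psi(x^s(t+1))$ conditioned on $x^s(t)$. Hence $\EX{}{\Psi(x^s(t+1)) - \Psi(x^s(t)) \mid x^s(t)} \le \EX{}{\Psi(x^*(t+1)) - \Psi(x^s(t)) \mid x^s(t)}$.

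Finally I would combine: conditioning on which slot is chosen, slot $i$ is selected with probability $p_i$ (in which case it gains weight) and is not selected with probability $1 - p_i$ (in which case it loses $w/N$), so
\begin{align*}
&\EX{}{\Psi(x^s(t+1)) - \Psi(x^s(t)) \mid x^s(t)} \\
&\quad\le \sum_{i=1}^N \Bigl( p_i\bigl(-a(1-\tfrac1N) + Sa^2\bigr) + (1-p_i)\bigl(\tfrac{a}{N} + S\tfrac{a^2}{N^2}\bigr) \Bigr) e^{-a x^s_i(t)} \\
&\quad\le \sum_{i=1}^N \Bigl( p_i(-a + Sa^2) + \bigl(\tfrac{a}{N} + S\tfrac{a^2}{N^2}\bigr) \Bigr) e^{-a x^s_i(t)},
\end{align*}
where the last step uses $p_i \cdot \tfrac{a}{N} \ge 0$ to absorb the $(1-p_i)$ into $1$ and drops the $-p_i \tfrac{a}{N}$ term (it only helps). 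This is exactly the claimed bound. There is essentially no obstacle here beyond bookkeeping: the only subtlety is getting the signs right (the ``loss'' term $+\tfrac{a}{N}$ is now the dangerous one rather than the helpful one), and making sure the Taylor estimate is applied on the correct side, i.e. at the argument $-a(1-1/N) < 0$, which is still within the radius $|z| < \lambda/2$ where $M''$ is controlled. The convexity-plus-majorization step and the conditioning decomposition are identical to the $\Phi$ case, which is why the paper simply says the proof is identical.
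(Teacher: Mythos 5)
Your proof is correct and takes the same route the paper intends: the paper simply declares this proof identical to that of \Cref{lemma: first phi bound}, and your sign-tracked rerun of that argument (Taylor bounds on $M$ at the arguments $-a(1-1/N)$ and $a/N$, the same slot coupling with convexity of $e^{-ax}$ giving $\Psi(x^*(t+1)) \ge \Psi(x^s(t+1))$, and the conditioning decomposition over the chosen slot) is exactly that. The final algebraic step is valid because the $p_i\, a/N$ contributions cancel exactly and the leftover $-p_i S a^2/N^2$ term is nonpositive, so dropping it only weakens the bound in the right direction.
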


\begin{corollary}
\label{corollary:Psi}
    $\EX{}{\Psi(x^s(t+1))- \Psi(x^s(t)) | x^s(t)} \le \frac{3a}{2N}\Psi(x^s(t))$.
\end{corollary}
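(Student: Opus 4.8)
The plan is to mirror the proof of \Cref{corollary:Phi}, starting from the per-step estimate
\[
\EX{}{\Psi(x^s(t+1))- \Psi(x^s(t)) \mid x^s(t)} \le \sum_{i =1}^N \left( p_i(-a + Sa^2)+\left(\tfrac{a}{N} + S\tfrac{a^2}{N^2}\right) \right) e^{-a x^s_i(t)}
\]
established in the preceding lemma, and converting the right-hand side into a multiple of $\Psi(x^s(t)) = \sum_{i=1}^N e^{-a x^s_i(t)}$. The key observation is that, in contrast to the $\Phi$ case, the $p_i$-dependent contribution now carries a \emph{negative} linear coefficient, so it only helps us: the role previously played by the upper bound $p_i \le \beta^2 p^U_i$ (hence the $\beta$-dependence in \Cref{corollary:Phi}) disappears here because the sign of the $p_i$-term is favorable.

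Concretely, I would first fix $a$ to be a sufficiently small positive constant --- small enough that all constraints on $a$ imposed by the earlier lemmas hold and, in addition, $Sa \le 1/2$ (recall $S \ge 1$ is the constant with $M''(z) \le 2S$ for $|z| < \lambda/2$). Then $-a + Sa^2 = -a(1-Sa) \le 0$, and since $p_i \ge 0$ we get $p_i(-a+Sa^2) \le 0$ for every $i$, so dropping these terms can only increase the bound. After discarding them, the coefficient of each $e^{-a x^s_i(t)}$ is the \emph{constant} $\tfrac{a}{N} + S\tfrac{a^2}{N^2} = \tfrac{a}{N}\big(1 + \tfrac{Sa}{N}\big) \le \tfrac{a}{N}(1+Sa) \le \tfrac{3a}{2N}$, using $N \ge 1$ and $Sa \le 1/2$. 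Since this bound is uniform in $i$ and every $e^{-a x^s_i(t)}$ is nonnegative, I pull it out of the sum to obtain
\[
\EX{}{\Psi(x^s(t+1))- \Psi(x^s(t)) \mid x^s(t)} \le \frac{3a}{2N}\sum_{i=1}^N e^{-a x^s_i(t)} = \frac{3a}{2N}\,\Psi(x^s(t)),
\]
which is exactly the claim. (If a slightly tighter estimate were ever needed, one could instead retain part of the negative term via the lower bound $p_i \ge \tfrac{1}{\alpha^2}p^U_i \ge \tfrac{1}{\alpha^2 N^2}$ from \Cref{lemma:abBound}, but for the stated corollary discarding it suffices.)

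There is essentially no substantive obstacle here: the argument is a one-line simplification of the preceding lemma, made possible by the favorable sign. The only point requiring care is bookkeeping around the constant $a$ --- it must be a single global constant chosen small enough to satisfy simultaneously the hypotheses of \Cref{lemma: first phi bound}, \Cref{corollary:Phi}, the analogous $\Psi$ lemma, and the bound $Sa \le 1/2$ used above; since all of these are of the form ``$a$ is an appropriately small constant,'' such a choice exists, and I would state this explicitly once at the start of the section rather than repeatedly.
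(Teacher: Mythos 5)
Your proposal is correct and matches the paper's own (one-line) proof exactly: both drop the $p_i(-a+Sa^2)$ terms using $-a+Sa^2<0$ and $p_i\ge 0$, then bound the remaining constant coefficient $\tfrac{a}{N}+S\tfrac{a^2}{N^2}$ by $\tfrac{3a}{2N}$ for sufficiently small $a$. Your write-up simply makes the bookkeeping on $a$ more explicit than the paper does.
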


\begin{proof}
    Simply observe that $(-a + Sa^2) < 0$ and $p_i \ge 0$.
\end{proof}

Continuing, our next task is to prove bounds on the potential functions $\Phi(x^s(t))$ and $\Psi(x^s(t))$, as well as the difference in potential between two consecutive time steps, for the cases when $x^s_{\frac{3N}{4}}(t)$ is small and $x^s_{\frac{N}{4}}(t)$ is large. The next few lemmas, whose proofs are deferred to~\Cref{app: missing proofs}, establish these bounds.
Recall that $\mu$ is a constant such that $\frac{1}{\alpha^2} \ge \frac{4}{5} +\mu$ and $\beta^2 \le \frac{4}{3} - \mu$.

\begin{lemma} \label{lemma:Phi_1}
    If $x^s_{\frac{3N}{4}}(t) \le 0$, then $\EX{}{\Phi(x^s(t+1))|x^s(t)} \le (1-\frac{\mu a}{2N})\Phi(x^s(t))+1$.
\end{lemma}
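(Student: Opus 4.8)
\textbf{Proof proposal for Lemma~\ref{lemma:Phi_1}.} The plan is to start from \Cref{lemma: first phi bound} (the ``first phi bound'') and carefully exploit the hypothesis $x^s_{\frac{3N}{4}}(t) \le 0$ together with the slot-selection probability bounds from \Cref{lemma:abBound}. Write $x_i = x^s_i(t)$ and $p_i = p_i(t)$. From \Cref{lemma: first phi bound} the one-step increment of $\Phi$ is at most $\sum_{i=1}^N \big(p_i(a+Sa^2) - (\tfrac{a}{N} - S\tfrac{a^2}{N^2})\big) e^{a x_i}$, so I need to show this sum plus $\Phi(x^s(t))$ is at most $(1 - \tfrac{\mu a}{2N})\Phi(x^s(t)) + 1$, i.e.\ that $\sum_i \big(p_i(a+Sa^2) - (\tfrac{a}{N}-S\tfrac{a^2}{N^2})\big)e^{ax_i} \le -\tfrac{\mu a}{2N}\Phi(x^s(t)) + 1$. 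The first move is to split the index set at $i = 3N/4$: for $i > 3N/4$ the term $e^{a x_i}$ could be large, but there are only $N/4$ such slots and each contributes at most $p_i(a+Sa^2)e^{ax_i}$ where $p_i \le \beta^2 p^U_i = \beta^2\frac{2i-1}{N^2}$; since $\sum_{i>3N/4} \beta^2 \frac{2i-1}{N^2} \cdot (a+Sa^2)\cdot e^{a x_i}$ needs to be absorbed, I should bound $e^{a x_i}$ for these heavy slots crudely and observe that (for $a$ a small enough constant) the whole tail contributes at most a constant, which is where the additive ``$+1$'' on the right-hand side comes from. Actually the cleaner route: bound the tail sum over $i > 3N/4$ using that there are few such slots and that each coefficient $p_i(a+Sa^2)$ is $O(a/N)$, giving a tail contribution of $O(a \max_i e^{ax_i}/\text{(something)})$; but since $\max_i e^{a x_i} \le \Phi$, I want to be careful to route that back into the $-\frac{\mu a}{2N}\Phi$ slack rather than into the constant. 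I expect the intended split is: heavy tail $\to$ constant $1$, light head $\to$ the negative drift.

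For the head $i \le 3N/4$, the hypothesis gives $x_i \le x_{3N/4} \le 0$, hence $e^{a x_i} \le 1$ and also $e^{a x_i} \ge$ nothing useful directly — but the key structural fact is that for these slots $p_i$ is \emph{small}. Concretely, for $i \le 3N/4$ we have $p_i \le \beta^2 p^U_i \le \beta^2 \frac{2i-1}{N^2} \le \beta^2 \frac{3}{2N}$ (using $2i - 1 \le 3N/2$), so the coefficient of $e^{ax_i}$ is at most $\frac{a}{N}\big(\beta^2 \tfrac{3}{2}(1+Sa) - (1 - Sa/N)\big)$. Using $\beta^2 \le \tfrac{4}{3} - \mu$, we get $\beta^2 \cdot \tfrac{3}{2} \le 2 - \tfrac{3\mu}{2}$, so for $a$ small enough the bracket is at most $1 - \mu$ (say), making the coefficient at most $-\tfrac{\mu a}{N}$ wait — I need $\le 2 - \tfrac{3\mu}{2} + (\text{lower order}) - 1 = 1 - \tfrac{3\mu}{2} + o(1)$, which is positive, not negative. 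So the naive pointwise bound on the head is \emph{not} negative; the real source of the drift must be the comparison between the ``$+p_i a$'' allocation terms and the uniform ``$-\tfrac{a}{N}$'' drain, summed against the \emph{sorted, monotone} weights $e^{a x_i}$. The correct argument uses $\sum_i p_i = 1$ and an Abel-summation / rearrangement: because $e^{a x_i}$ is non-increasing in $i$ (as $x_i$ is sorted decreasing — or increasing; I must fix the convention) and $p_i$ puts more mass on the more-loaded slots than uniform does on the relevant range, $\sum_i p_i e^{a x_i} - \tfrac{1}{N}\sum_i e^{a x_i}$ is controlled. I will therefore do a summation-by-parts comparing $\sum_i (p_i - \tfrac 1N) e^{a x_i}$ to $\sum_i (p^U_i - \tfrac 1N) e^{a x_i}$ via \Cref{lemma:abBound}, then bound the latter using the explicit form $p^U_i = \frac{2i-1}{N^2}$, which is exactly the overloaded-bias needed to generate a $-\Theta(a/N)\Phi$ term on the restricted range $i \le 3N/4$ where $x_i \le 0$.

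Putting it together, the key steps in order are: (1) invoke \Cref{lemma: first phi bound} to reduce to bounding $\sum_i (p_i(a+Sa^2) - \tfrac{a}{N} + S\tfrac{a^2}{N^2}) e^{a x_i}$; (2) split at $i = 3N/4$, bounding the heavy tail $i > 3N/4$ by a constant using $|$tail$| \le N/4$, $p_i = O(1/N)$, and choosing $a$ small — this produces the ``$+1$''; (3) for the head $i \le 3N/4$, use \Cref{lemma:abBound} to replace $p_i$ by $\beta^2 p^U_i$ in the positive part and by $\tfrac{1}{\alpha^2}p^U_i$ where a lower bound helps, and use $x_i \le 0 \Rightarrow e^{a x_i} \le 1$ wherever an unconditional bound on the weight is needed; (4) compare the resulting $\sum_{i \le 3N/4}(\beta^2 p^U_i - \tfrac 1N)e^{a x_i}$ against $0$ via the exact formula $p^U_i = \frac{2i-1}{N^2}$ and the constraint $\beta^2 \le \tfrac 43 - \mu$, extracting a $-\tfrac{\mu a}{2N}\Phi(x^s(t))$ term after absorbing the $Sa^2$ corrections into the slack by taking $a$ sufficiently small. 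The main obstacle, as the discussion above shows, is step (4): the negative drift is not pointwise but comes from the global interplay of the overloaded-bias of the selection distribution with the monotonicity of $e^{a x_i}$ on the range where the hypothesis forces $x_i \le 0$; getting the constant $\mu/2$ (rather than some worse constant) requires tracking the $\beta^2 \le 4/3 - \mu$ slack precisely through the summation-by-parts, and making sure the $Sa^2$ second-order terms and the heavy-tail constant do not eat into it. I would handle this by choosing $a = a(\mu, S)$ at the very end, after all the inequalities are in place, so that every lower-order term is dominated by a fixed fraction of $\tfrac{\mu a}{2N}\Phi$.
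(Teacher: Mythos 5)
Your plan has the right ingredients (start from \Cref{lemma: first phi bound}, split the sum at $i=3N/4$, invoke \Cref{lemma:abBound}, and recognize that a pointwise bound on the head cannot be negative so a rearrangement argument is needed), and that last observation is genuinely the crux of the lemma. But the proposal has the decomposition backwards, and the unresolved sorting convention you flag is not cosmetic --- it is exactly where the argument lives or dies. The vector $x^s(t)$ is sorted in \emph{decreasing} order of load (this is forced by $p^U_i=\frac{2i-1}{N^2}$ being the probability of selecting the $i^{\text{th}}$ \emph{most} loaded slot: the two-choice rule makes $p^U_i$ increasing in $i$, i.e.\ it favors the less loaded slots). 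Under that convention the hypothesis $x^s_{3N/4}(t)\le 0$ gives $e^{ax^s_i(t)}\le 1$ only for $i\ge 3N/4$, the \emph{less} loaded quarter --- not for $i\le 3N/4$ as you assert in step (3). Consequently the roles are reversed relative to your sketch: the tail $i\ge 3N/4$ is the part that becomes the additive ``$+1$'' (bounded simply by $\sum_{i\ge 3N/4}p_i(a+Sa^2)\cdot e^0\le a+Sa^2\le 1$, using $\sum_i p_i=1$, not by counting $N/4$ slots), and the head $i<3N/4$, where $e^{ax^s_i(t)}$ can be as large as $\Phi$, is the part that must produce the negative drift. Your alternative for the tail --- $N/4$ slots each with coefficient $O(a/N)$ --- does not yield a constant and cannot be routed into the slack either: on that range the coefficient is up to $\beta^2\frac{2i-1}{N^2}a\approx \frac{2\beta^2 a}{N}$ per slot against a uniform drain of only $\frac{a}{N}$, which is net \emph{positive}, so if the mass of $\Phi$ sat there the argument would collapse.

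The correct completion, which the paper carries out, is the rearrangement you gesture at in step (4) but applied to the head: since $p^U_i$ is increasing on $i<3N/4$ and $e^{ax^s_i(t)}$ is non-increasing, the oppositely-sorted sum satisfies $\sum_{i<3N/4}p^U_i e^{ax^s_i(t)}\le \bigl(\sum_{i<3N/4}p^U_i\bigr)\cdot\frac{1}{3N/4}\sum_{i<3N/4}e^{ax^s_i(t)}\le \frac{9}{16}\cdot\frac{4}{3N}\,\Phi(x^s(t))=\frac{3}{4N}\Phi(x^s(t))$ (the paper phrases this as a small LP whose maximum is at the uniform vector). Combined with the full drain $-\frac{a}{N}\Phi(x^s(t))$ this gives a leading coefficient $\frac{3}{4}a\bigl(\beta^2-\frac{4}{3}\bigr)\le -\frac{3\mu a}{4}$, and the $Sa^2$ corrections are absorbed by taking $a$ small, leaving $-\frac{\mu a}{2N}\Phi(x^s(t))+1$. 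So the missing piece is not the technique but the orientation: fix the sort order, identify $i\ge 3N/4$ as the range where the hypothesis bounds $e^{ax^s_i(t)}$ by $1$, and run the Chebyshev-sum/rearrangement step on the complementary range where the potential is actually concentrated.
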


\begin{lemma} \label{lemma:Psi_1}
    If $x^s_{\frac{N}{4}}(t) \ge 0$, then $\EX{}{\Psi(x^s(t+1))|x^s(t)} \le (1-\frac{3 \mu a}{4N})\Psi(x^s(t))+1$.
\end{lemma}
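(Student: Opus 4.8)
The plan is to mirror the proof of Lemma~\ref{lemma:Phi_1}, but working with the potential $\Psi$ instead of $\Phi$, and exploiting the hypothesis $x^s_{N/4}(t) \ge 0$ in place of $x^s_{3N/4}(t) \le 0$. The starting point is the per-step bound from the lemma preceding this one, namely
\[
\EX{}{\Psi(x^s(t+1))- \Psi(x^s(t)) \mid x^s(t)} \le \sum_{i =1}^N \left( p_i(-a + Sa^2)+\left(\tfrac{a}{N} + S\tfrac{a^2}{N^2}\right) \right) e^{-a x^s_i(t)}.
\]
Here the slots are sorted so that $x^s_1(t) \ge x^s_2(t) \ge \cdots \ge x^s_N(t)$; in particular $e^{-a x^s_i(t)}$ is increasing in $i$, so the largest weights sit on the high-index (lightly loaded) slots. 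The coefficient $p_i(-a+Sa^2) + (a/N + Sa^2/N^2)$ is negative exactly when $p_i$ is sufficiently large relative to $1/N$, i.e. on the high-index slots, and the goal is to show the negative contribution from those slots dominates.

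The key quantitative step is to feed in the bounds on $p_i$. By Lemma~\ref{lemma:abBound}, $p_i \le \beta^2 p^U_i = \beta^2 (2i-1)/N^2$ and $p_i \ge \tfrac{1}{\alpha^2} p^U_i = \tfrac{1}{\alpha^2}(2i-1)/N^2$. For the high-index slots, say $i \ge N/4$, we have $p_i \ge \tfrac{1}{\alpha^2}\cdot \tfrac{2(N/4)-1}{N^2} \ge \tfrac{1}{\alpha^2}\cdot\tfrac{1}{2N}(1-o(1))$, and using $\tfrac{1}{\alpha^2} \ge \tfrac{4}{5}+\mu$ this is at least roughly $\tfrac{1}{2N}(\tfrac45+\mu)$. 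Plugging this into the coefficient: for $i \ge N/4$ and $a$ a small enough constant so that the $Sa^2$ terms are absorbed, the coefficient of $e^{-a x^s_i(t)}$ is at most $-p_i a(1-o(1)) + \tfrac{a}{N}(1+o(1)) \le \tfrac{a}{N}\bigl(1 - \tfrac12(\tfrac45+\mu)\bigr) + (\text{lower order}) = \tfrac{a}{N}(\tfrac35 - \tfrac{\mu}{2}) + \cdots$. Hmm — that is still positive, so the pointwise bound on a single slot is not enough; instead I would argue in aggregate, as in Lemma~\ref{lemma:Phi_1}: split the sum at index $N/4$, bound the contribution of the low-index slots $i < N/4$ crudely (there $x^s_i(t) \ge x^s_{N/4}(t) \ge 0$ by the hypothesis, so $e^{-ax^s_i(t)} \le 1$, giving a total contribution $O(a/N)\cdot (N/4) = O(a)$, the "$+1$" term), and for the high-index block use that $\sum_{i \ge N/4} p_i e^{-ax^s_i(t)}$ is a constant fraction of $\sum_{i} p_i e^{-ax^s_i(t)}$, which in turn — since the $p_i$ are bounded below by a constant times $p^U_i$ and $e^{-ax^s_i}$ is increasing — is comparable to $\tfrac{1}{N}\Psi(x^s(t))$ up to the $\mu$-dependent constant. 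Carefully tracking the constants, the net multiplicative factor on $\Psi(x^s(t))$ comes out to $1 - \tfrac{3\mu a}{4N}$, with an additive $+1$ absorbing the low-index slots; the fraction $3/4$ (rather than $1/2$ as in the $\Phi$ case) reflects the asymmetry between the thresholds $N/4$ and $3N/4$ used in the two lemmas.

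The main obstacle is the bookkeeping: one must choose the split point and the constant $a$ so that (i) the low-index slots, where the coefficient may be positive, contribute only $O(1)$ thanks to the sign hypothesis $x^s_{N/4}(t)\ge 0$; (ii) on the high-index slots the lower bound $p_i \ge \tfrac{1}{\alpha^2}p^U_i$ together with monotonicity of $e^{-ax^s_i}$ yields $\sum_{i\ge N/4} p_i e^{-ax^s_i(t)} \ge c_\mu \tfrac{1}{N}\Psi(x^s(t))$ for an explicit $c_\mu$; and (iii) the $Sa^2$ second-order terms are genuinely lower-order, which needs $a = O(\mu)$ and $a = O(1/S)$. Assembling (i)–(iii) and simplifying gives the claimed $(1-\tfrac{3\mu a}{4N})\Psi(x^s(t)) + 1$. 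I expect this to be essentially a transcription of the Lemma~\ref{lemma:Phi_1} argument with signs flipped and the roles of $N/4$ and $3N/4$ swapped, so the real work is making sure the constant in front of $\mu a/N$ is correct and that nothing in the $\Phi$-proof used the direction of the inequality in a way that does not dualize.
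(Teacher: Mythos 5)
Your plan is correct and matches the paper's proof: the paper likewise drops the (negative-coefficient) terms for $i<N/4$, applies Lemma~\ref{lemma:abBound} to replace $p_i$ by $\frac{1}{\alpha^2}p^U_i$ on the high-index block, and then uses the joint monotonicity of $p^U_i$ and $e^{-ax^s_i(t)}$ (phrased there as a small LP) together with $\sum_{i\ge N/4}e^{-ax^s_i(t)}\ge \Psi(x^s(t))-\tfrac{N}{4}$, which is exactly where the hypothesis $x^s_{N/4}(t)\ge 0$ enters, to get $\sum_{i\ge N/4}p^U_i e^{-ax^s_i(t)}\ge \tfrac{5}{4N}\bigl(\Psi(x^s(t))-\tfrac{N}{4}\bigr)$ and hence the factor $\tfrac{5}{4\alpha^2}-1\ge\tfrac{5\mu}{4}$, degraded to $\tfrac{3\mu}{4}$ by the $Sa$ corrections for small $a$. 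Your self-correction away from the pointwise per-slot bound to the aggregate argument is exactly the right move, and the remaining work is only the constant-tracking you describe.
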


\begin{lemma} \label{lemma:Phi_2}
    Suppose that $x^s_{\frac{3N}{4}}(t)>0$ and $\EX{}{\Phi(x^s(t+1)) - \Phi(x^s(t))| x^s(t) } \ge -\frac{1}{36} \frac{a}{N}\Phi(x^s(t))$. Then either $\Phi(x^s(t)) < \frac{3 \epsilon}{32} \Psi(x^s(t))$ or $\Gamma(x^s(t)) < cN$ where $c$ is a constant that depends on $a$ and $\mu$.
\end{lemma}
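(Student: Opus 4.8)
The plan is to exploit the assumption $x^s_{3N/4}(t) > 0$ together with the assumed lower bound on the expected one-step change of $\Phi$, and play it off against the general upper bound from \Cref{lemma: first phi bound}. Concretely, \Cref{lemma: first phi bound} gives
\[
\EX{}{\Phi(x^s(t+1)) - \Phi(x^s(t)) \mid x^s(t)} \le \sum_{i=1}^N \left( p_i (a + S a^2) - \left(\tfrac{a}{N} - S\tfrac{a^2}{N^2}\right)\right) e^{a x^s_i(t)}.
\]
So if the expected change is at least $-\tfrac{1}{36}\tfrac{a}{N}\Phi(x^s(t))$, then
\[
\sum_{i=1}^N \left( p_i (a + S a^2) - \left(\tfrac{a}{N} - S\tfrac{a^2}{N^2}\right)\right) e^{a x^s_i(t)} \ge -\tfrac{1}{36}\tfrac{a}{N}\Phi(x^s(t)).
\]
Dividing through by $a$ and absorbing the $O(a^2)$ terms into small constants (legitimate for $a$ a sufficiently small constant), this says roughly $\sum_i \big(p_i(1+Sa) - \tfrac1N\big) e^{a x^s_i} \gtrsim -\tfrac1{36}\tfrac1N \Phi$, i.e. the ``overloaded'' slots (those with $p_i$ noticeably bigger than the uniform $\tfrac1N$, which by \Cref{lemma:abBound} means $p_i \le \beta^2 p_i^U = \beta^2\tfrac{2i-1}{N^2}$, so large $i$) must carry a substantial fraction of the mass of $\Phi$.

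First I would split the index set at $i = 3N/4$. For $i \le 3N/4$ we have $p_i \le \beta^2 \tfrac{2i-1}{N^2} \le \beta^2 \tfrac{3}{2N} \le (\tfrac43 - \mu)\tfrac{3}{2N} = \tfrac{2}{N} - \tfrac{3\mu}{2N}$, so the coefficient $p_i(1+Sa) - \tfrac1N$ is bounded by a constant over $N$ but importantly is not hugely positive — more to the point, for the \emph{lightly loaded} part $i \le N/4$ (say) the coefficient is negative, so those terms only help the inequality go the wrong way and can be dropped. Thus the bulk of the positive contribution to $\sum_i (\cdots) e^{a x^s_i}$ must come from $i$ near the top, $i \in (N/4, N]$, and in particular the slots with $i > 3N/4$ contribute with the largest possible coefficient. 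Using $\sum_{i > 3N/4} e^{a x^s_i} \le \tfrac14 e^{a x^s_N} \le \tfrac14 \Phi$ trivially is too lossy; instead I would argue: either $\Phi(x^s(t))$ is dominated (up to the factor $\tfrac{3\epsilon}{32}$) by the contribution of the \emph{bottom} slots — but the bottom slots $i \le N/4$ have $x^s_i \le x^s_{N/4}$, and if also $x^s_{N/4} \le 0$ then their $\Phi$-contribution is $\le N/4$, forcing $\Phi < \tfrac{something}\cdot N$, i.e. the $\Gamma < cN$ branch. Otherwise $x^s_{N/4} > 0$, meaning at least $3N/4$ slots are nonnegative, so $\Psi(x^s(t)) \ge \sum_{i : x^s_i \ge 0} e^{-a x^s_i}$; and comparing the ``smooth'' tail behaviour we can show $\Phi < \tfrac{3\epsilon}{32}\Psi$, the first branch. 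The mechanism for this last comparison is the standard convexity/monotonicity trick from \cite{yuval2015}: when the expected drift of $\Phi$ fails to be sufficiently negative, the distribution of the $x^s_i$ must be ``flat'' near the top, which bounds $\Phi$ in terms of $\Psi$ (large negative coordinates force $\Psi$ to be large) or bounds $\Gamma$ absolutely.

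The dichotomy is therefore: (i) if $x^s_{N/4}(t) \le 0$, then the bottom $N/4$ coordinates are nonpositive and, combined with the drift hypothesis forcing the top not to be too spread out, we get $\Phi(x^s(t)) + \Psi(x^s(t)) < cN$ for an explicit constant $c = c(a,\mu)$; (ii) if $x^s_{N/4}(t) > 0$, then at least $3N/4$ slots have $x^s_i > 0$, so $\Psi$ already has $\ge 3N/4$ terms each $< 1$... no — rather, here we use that a failure of negative drift for $\Phi$ forces $\Phi$'s mass to be so concentrated at the top that it is small relative to the mass that $\Psi$ necessarily has from the (few but very negative) bottom coordinates, giving $\Phi(x^s(t)) < \tfrac{3\epsilon}{32}\Psi(x^s(t))$. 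The main obstacle I anticipate is making branch (ii) quantitatively tight: one has to convert ``the weighted coefficient sum $\sum_i(p_i(1+Sa)-\tfrac1N)e^{ax^s_i}$ is $\gtrsim -\tfrac{1}{36}\tfrac1N\Phi$'' into a genuine upper bound on the ratio $\Phi/\Psi$, and this requires carefully partitioning $[N]$ into the $i \le 3N/4$ part (where the coefficient is $\le -\tfrac{3\mu}{2N} + O(a/N) < 0$ by the choice of $\mu$, hence \emph{subtracts}) and the $i > 3N/4$ part (where the coefficient is $\le (\beta^2 - 1)\tfrac1N + O(a/N)$), then showing the only way the total stays above $-\tfrac{1}{36}\tfrac1N\Phi$ is for $\sum_{i \le 3N/4} e^{a x^s_i}$ to be small relative to $\Phi$ — i.e. $\Phi$ is concentrated on the top quarter — and finally that this concentration, with $x^s_{3N/4} > 0$ fixed, forces either $\Gamma < cN$ (if the top values themselves are bounded) or $\Phi \ll \Psi$ (if they are large, since then the normalization $\sum x^s_i = $ [what it must be, governed by the total weight thrown] drags many coordinates very negative). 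I would close by choosing the constant $\tfrac{1}{36}$ and the thresholds $N/4, 3N/4$ so that the algebra balances, mirroring the corresponding step in \cite{yuval2015}, and folding all $a$- and $\mu$-dependence into the single constant $c$.
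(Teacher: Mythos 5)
Your high-level strategy is the right one---it is the Peres--Talwar--Wieder mechanism that the paper also follows---but there is a genuine gap: the quantitative engine that actually produces the dichotomy is missing, and several intermediate claims are wrong because you have the sort order reversed. In the paper's convention $x^s(t)$ is sorted in \emph{decreasing} order, so small indices are the heavily loaded slots (large $x^s_i$, small selection probability $p_i$) and large indices are the lightly loaded ones (large $p_i$, small $e^{ax^s_i}$). Your claim that the coefficient $p_i(a+Sa^2)-\left(\frac{a}{N}-S\frac{a^2}{N^2}\right)$ is negative for all $i\le 3N/4$ is false: since $p_i\le \beta^2\frac{2i-1}{N^2}$, at $i=3N/4$ this bound is roughly $\frac{3\beta^2}{2N}>\frac{1}{N}$ even for $\beta=1$. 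The paper instead splits at $i=N/3$, where $p_i\le\beta^2\frac{2}{3N}\le\frac{8}{9N}<\frac{1}{N}$, and it is this split (together with $\beta^2\le\frac43-\mu$) that converts the drift hypothesis into the concrete inequality $\frac{a}{36N}\Phi(x^s(t))\le\frac{3a}{N}\Phi_{>N/3}(x^s(t))$, i.e.\ a constant fraction of $\Phi$ must live on the indices $i>N/3$.

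The missing idea is the quantity $B=\sum_i\max(0,x^s_i(t))$ and the mismatch of exponents it creates. Because the coordinates sum to zero and are sorted, $x^s_{N/3}(t)\le 3B/N$, so $\Phi_{>N/3}(x^s(t))\le\frac{2N}{3}e^{3aB/N}$ and hence $\Phi(x^s(t))\le 72\,N e^{3aB/N}$; on the other hand, $x^s_{3N/4}(t)>0$ forces the entire negative mass $-B$ onto at most $N/4$ coordinates, whence $\Psi(x^s(t))\ge\frac{N}{4}e^{4aB/N}$ by convexity. If $\Phi(x^s(t))\ge\frac{3\mu}{32}\Psi(x^s(t))$, comparing $e^{3aB/N}$ against $e^{4aB/N}$ bounds $e^{aB/N}$ by a constant depending on $\mu$, which in turn bounds $\Gamma(x^s(t))$ by $cN$. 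Without some substitute for this $3$-versus-$4$ exponent gap your branch (ii) cannot be closed---you flag exactly this step as ``the main obstacle'' but do not resolve it---and the auxiliary case split on the sign of $x^s_{N/4}(t)$ is a red herring, since the hypothesis $x^s_{3N/4}(t)>0$ already makes the top $3N/4$ coordinates positive.
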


\begin{lemma} \label{lemma:Psi_2}
    Suppose that $x^s_{\frac{N}{4}}(t)<0$ and $\EX{}{\Psi(x^s(t+1)) - \Psi(x^s(t))| x^s(t)} \ge - \frac{a}{60N}\Psi(x^s(t))$. Then either $\Psi(x^s(t)) < \frac{\mu}{6} \Phi(x^s(t))$ or $\Gamma(x^s(t)) < c'N$ where $c'$ is a constant that depends on $a$ and $\mu$.
\end{lemma}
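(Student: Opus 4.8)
\textbf{Proof proposal for Lemma~\ref{lemma:Psi_2}.}

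The plan is to mirror the structure of the (deferred) proof of Lemma~\ref{lemma:Phi_2}, but for the ``lower tail'' potential $\Psi$, exploiting the lower bound $p_i \ge \frac{1}{\alpha^2} p_i^U$ from Lemma~\ref{lemma:abBound} together with the assumption $\frac{1}{\alpha^2} \ge \frac45 + \mu$. Start from the per-step bound
\[
\EX{}{\Psi(x^s(t+1)) - \Psi(x^s(t)) \mid x^s(t)} \le \sum_{i=1}^N \left( p_i(-a + Sa^2) + \tfrac{a}{N} + S\tfrac{a^2}{N^2} \right) e^{-a x^s_i(t)},
\]
and, assuming $a$ is small enough that $-a + Sa^2 \le -\tfrac{a}{2}$, substitute $p_i \ge \frac{1}{\alpha^2}\,\frac{2i-1}{N^2}$ in the negative term while keeping $p_i$ in the positive $\tfrac{a}{N}$ term bounded trivially. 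This yields a bound of the form $\tfrac{a}{N}\sum_i \big( 1 - \tfrac{1}{2\alpha^2}(2i-1)/N \cdot (\text{const}) + o(1)\big) e^{-a x^s_i(t)}$. Grouping the slots by whether $i \le N/4$ or $i > N/4$ (so that, since the slot vector is sorted in decreasing order of $x^s$, the smallest-value slots are those with large index), and using $x^s_{N/4}(t) < 0$, one can show that the ``heavy negative'' slots contribute enough of a decrease.

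The key step, exactly as in the $\Phi$ case, is the following dichotomy. Split the sum $\sum_i (\cdots) e^{-a x^s_i(t)}$ at index $i = N/4$. For $i > N/4$ the coefficient $1 - \frac{2i-1}{2\alpha^2 N}\cdot c_0$ (for the appropriate absolute constant $c_0$ coming from bounding the uniform-process computation) is bounded above by some $1 - \mu'$ with $\mu' = \Theta(\mu)$, giving a term $\le (1-\mu')\,\tfrac{a}{N}\sum_{i > N/4} e^{-a x^s_i(t)}$. For $i \le N/4$ the coefficient is at most $1$, giving $\le \tfrac{a}{N}\sum_{i \le N/4} e^{-a x^s_i(t)}$. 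If the hypothesis $\EX{}{\Psi(x^s(t+1)) - \Psi(x^s(t)) \mid x^s(t)} \ge -\tfrac{a}{60N}\Psi(x^s(t))$ holds, then combining gives
\[
-\tfrac{1}{60}\Psi(x^s(t)) \le -\mu' \sum_{i > N/4} e^{-a x^s_i(t)} + \big(\text{lower-order corrections}\big),
\]
so $\sum_{i > N/4} e^{-a x^s_i(t)} \le \tfrac{C}{\mu'}\,\Psi(x^s(t))$ for a constant $C$; equivalently a constant fraction (the ``$1/60$'' is engineered for this) of $\Psi$'s mass must sit on the top $N/4$ slots $i \le N/4$. Each such slot has $x^s_i(t) \ge x^s_{N/4}(t)$ is \emph{not} what we want --- rather, for $i \le N/4$ these are the \emph{largest} coordinates, so $e^{-a x^s_i(t)}$ is small unless those coordinates are themselves very negative, which forces many coordinates to be very negative; then the complementary sum $\Phi(x^s(t)) = \sum_i e^{a x^s_i(t)}$ on those same top slots is large (since if the top $N/4$ coordinates are all $\le$ some value, the bottom $3N/4$ ... ) --- the cleanest way is: either $\Psi(x^s(t))$ is dominated by its top-$N/4$ part, in which case using $\sum_{i\le N/4} e^{-a x^s_i(t)} \le \tfrac{N}{4} e^{-a x^s_{N/4}(t)}$ and $x^s_{N/4}(t) < 0$ one concludes $\Psi(x^s(t)) \le c' N$ directly after also controlling $e^{-a x^s_{N/4}(t)}$ via a smoothness/neighboring-coordinate argument (bounding $x^s_{N/4}(t)$ from below in terms of $\Phi$); or that fails, and then $\Psi(x^s(t)) < \tfrac{\mu}{6}\Phi(x^s(t))$.

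I expect the main obstacle to be the bookkeeping that converts ``a constant fraction of $\Psi$ sits on slots $i \le N/4$, all of which have $x^s_i \ge x^s_{N/4} < 0$'' into one of the two stated alternatives --- in particular pinning down the constants so that the threshold is exactly $\tfrac{a}{60N}$ and the resulting comparison constant is exactly $\tfrac{\mu}{6}$. This requires (i) the smoothness estimate that consecutive sorted slot values cannot be too far apart after a single round (so that $x^s_{N/4}(t)$ being very negative forces a comparable number of the bottom slots to also be very negative, inflating $\Phi$), and (ii) carefully tracking the $o(1)$ and $S a^2$ error terms so they are absorbed into the slack between $\tfrac{1}{60}$ and $\mu' = \Theta(\mu)$; here the hypotheses $\alpha^2 < 5/4$, $\beta^2 < 4/3$ are exactly what guarantee a positive such slack. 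Everything else is a direct transcription of the $\Phi$-side argument with signs reversed and $\beta^2 \le 4/3 - \mu$ replaced by $1/\alpha^2 \ge 4/5 + \mu$.
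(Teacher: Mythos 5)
Your overall strategy --- mirror Lemma~\ref{lemma:Phi_2} with signs reversed, use $p_i \ge \frac{1}{\alpha^2}p^U_i$ together with $\frac{1}{\alpha^2}\ge\frac45+\mu$, deduce that a constant fraction of $\Psi$'s mass sits on the high-ranked slots, and play that off against the lower bound on $\Phi$ coming from $x^s_{N/4}(t)<0$ --- is the right one, and it is the paper's. But the central quantitative step is broken. The coefficient of $e^{-ax^s_i(t)}$ is $p_i(-a+Sa^2)+\frac{a}{N}+S\frac{a^2}{N^2}$, which is negative only when $p_i$ exceeds roughly $\frac1N$; since the best available lower bound is $p_i\ge\frac{1}{\alpha^2}\cdot\frac{2i-1}{N^2}$ with $\frac{1}{\alpha^2}\ge\frac45+\mu$, this happens only for $i\gtrsim\frac{5N}{8}$. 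Splitting at $N/4$ therefore leaves a positive coefficient (of order $+\frac{a}{N}$) on the entire range $N/4<i\lesssim 5N/8$, so your displayed inequality $-\frac{1}{60}\Psi\le-\mu'\sum_{i>N/4}e^{-ax^s_i(t)}+(\text{l.o.t.})$ does not follow: the $+\frac{a}{N}$ part of the coefficient is of the same order as $\Psi$ itself and is not a lower-order correction, and what you actually obtain, $\mu'\,\Psi_{>N/4}\le(1+\frac{1}{60})\Psi$, is vacuous. Worse, your relaxation $-a+Sa^2\le-\frac a2$ gives away a factor of $2$ that the argument cannot afford: with it the normalized coefficient $1-\frac{2i-1}{2\alpha^2N}$ is positive even at $i=N$. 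The margin here is razor-thin, which is exactly why the paper splits at $2N/3$ (so that $\frac43\cdot\frac{1}{\alpha^2}\ge\frac{16}{15}+\frac{4\mu}{3}>1$ makes the coefficient on the block $i>2N/3$ at most $-\frac{a}{30N}$ once $a$ is small) and keeps the full $-a+Sa^2$; this yields the non-vacuous conclusion $\frac{a}{60N}\Psi\le\frac{3a}{2N}\Psi_{\le 2N/3}$, i.e.\ $\Psi\le 90\,\Psi_{\le 2N/3}$, where $\Psi_{\le 2N/3}$ is the partial sum over $i\le 2N/3$.

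The concluding dichotomy is also left open in your sketch, though the missing steps are simpler than you anticipate: no ``smoothness/neighboring-coordinate'' estimate is needed. Set $B=\sum_i\max(0,x^s_i(t))$ and use that the normalized slot vector sums to zero: the bottom $N/3$ slots contribute at least $\frac N3(-x^s_{2N/3}(t))$ to the negative mass, so $e^{-ax^s_{2N/3}(t)}\le e^{3aB/N}$ and hence $\Psi\le 90\cdot\frac{2N}{3}e^{3aB/N}=60Ne^{3aB/N}$; meanwhile $x^s_{N/4}(t)<0$ confines all positive mass to the top $N/4$ coordinates, giving $\Phi\ge\frac N4e^{4aB/N}$. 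If $\Psi\ge\frac{\mu}{6}\Phi$, the two bounds force $e^{aB/N}=O(1/\mu)$, whence $\Psi=O(N)$ and $\Gamma\le(1+\frac6\mu)\Psi\le c'N$; otherwise the first alternative of the lemma holds. Your version of the dichotomy (``either $\Psi$ is dominated by its top part or $\Psi<\frac{\mu}{6}\Phi$'') never connects the failure of the first branch to the second, and the exponent comparison $e^{3aB/N}$ versus $e^{4aB/N}$, which is what actually produces the constant $c'$, is absent.
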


Given these lemmas, we can bound $\Gamma(.)$.

\begin{lemma}\label{lemma: first bound on gamma}
   $ \EX{}{\Gamma(x^s(t+1))|x^s(t)} \le (1-\zeta \frac{a}{N})\Gamma(x^s(t)) + c^*$ where $\zeta = \min{\{\frac{\mu}{4}, \frac{1}{60}\}}$ and $c^* = \max{\{2, c, c'\}}$, where $c$ and $c'$ are the constants guaranteed in Lemmas~\ref{lemma:Phi_2} and~\ref{lemma:Psi_2}.
\end{lemma}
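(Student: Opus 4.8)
The plan is to perform a case analysis on $x^s(t)$ based on the signs of $x^s_{3N/4}(t)$ and $x^s_{N/4}(t)$, combining the already-established one-step bounds (Corollaries~\ref{corollary:Phi},~\ref{corollary:Psi} and Lemmas~\ref{lemma:Phi_1}--\ref{lemma:Psi_2}) to control $\EX{}{\Gamma(x^s(t+1))\mid x^s(t)} = \EX{}{\Phi(x^s(t+1))\mid x^s(t)} + \EX{}{\Psi(x^s(t+1))\mid x^s(t)}$. The idea is that in each case, at least one of the two potentials is ``contracting'' (its expected value shrinks by a $(1-\Theta(a/N))$ factor plus a constant), while the other potential's expected growth — which is at most $\beta^2 \frac{2a}{N}\Phi$ or $\frac{3a}{2N}\Psi$ by the corollaries — is dominated by the contraction of the first, because the non-contracting potential is provably a small multiple of the contracting one (or else $\Gamma$ is already $O(N)$, in which case the bound holds trivially by absorbing everything into $c^*$).

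The key steps, in order: First, I split into the case $x^s_{3N/4}(t)\le 0$ and $x^s_{N/4}(t)\ge 0$. Here Lemma~\ref{lemma:Phi_1} gives $\EX{}{\Phi(x^s(t+1))\mid x^s(t)} \le (1-\frac{\mu a}{2N})\Phi(x^s(t))+1$ and Lemma~\ref{lemma:Psi_1} gives $\EX{}{\Psi(x^s(t+1))\mid x^s(t)} \le (1-\frac{3\mu a}{4N})\Psi(x^s(t))+1$; adding these and using $\zeta \le \mu/4 \le \mu/2$ yields $\EX{}{\Gamma(x^s(t+1))\mid x^s(t)} \le (1-\zeta\frac{a}{N})\Gamma(x^s(t)) + 2$. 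Second, the case $x^s_{3N/4}(t)>0$: now $x^s_{N/4}(t)\ge x^s_{3N/4}(t)>0$, so Lemma~\ref{lemma:Psi_1} still applies to $\Psi$. For $\Phi$: if $\EX{}{\Phi(x^s(t+1))-\Phi(x^s(t))\mid x^s(t)} < -\frac{1}{36}\frac{a}{N}\Phi(x^s(t))$, then $\Phi$ contracts directly and we combine with the $\Psi$ contraction as before. Otherwise Lemma~\ref{lemma:Phi_2} applies, so either $\Gamma(x^s(t)) < cN$ (and then $\EX{}{\Gamma(x^s(t+1))\mid x^s(t)} \le (1-\zeta\frac{a}{N})\Gamma(x^s(t)) + \zeta\frac{a}{N}\Gamma(x^s(t)) + (\text{growth}) \le (1-\zeta\frac{a}{N})\Gamma(x^s(t)) + c^*$ after absorbing the small growth terms, using that $\Gamma(x^s(t+1)) = \Gamma(x^s(t)) + \EX{}{\Delta\Phi} + \EX{}{\Delta\Psi}$ and both $\EX{}{\Delta\Phi} \le \beta^2\frac{2a}{N}\Phi \le O(a)\cdot cN$ and $\EX{}{\Delta\Psi}$ shrinks), or $\Phi(x^s(t)) < \frac{3\epsilon}{32}\Psi(x^s(t))$, in which case the potential growth $\beta^2\frac{2a}{N}\Phi(x^s(t))$ of $\Phi$ is a small constant fraction of the contraction $\frac{3\mu a}{4N}\Psi(x^s(t))$ of $\Psi$, so their sum still contracts at rate $\zeta\frac{a}{N}$. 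Third, the symmetric case $x^s_{N/4}(t)<0$ (so $x^s_{3N/4}(t)\le 0$ and Lemma~\ref{lemma:Phi_1} applies to $\Phi$), handled identically using Lemma~\ref{lemma:Psi_2} in place of Lemma~\ref{lemma:Phi_2}. Since one of $\{x^s_{3N/4}(t)\le 0,\ x^s_{N/4}(t)>0\}$ and one of $\{x^s_{N/4}(t)\ge 0,\ x^s_{3N/4}(t)<0\}$ always holds, and the sorted order forces $x^s_{N/4}(t)\ge x^s_{3N/4}(t)$, these cases are exhaustive.

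The main obstacle is the bookkeeping in the ``mixed'' cases — e.g. when $x^s_{3N/4}(t)>0$ but $\Phi$ does not contract on its own: one must carefully verify that $\beta^2\frac{2a}{N}\Phi(x^s(t)) \le \frac{3\mu a}{4N}\Psi(x^s(t)) - \zeta\frac{a}{N}(\Phi(x^s(t))+\Psi(x^s(t)))$, i.e. that the constants line up so that the net drift on $\Gamma$ is at most $-\zeta\frac{a}{N}\Gamma(x^s(t))$. This reduces to a numerical inequality among $\mu$, $\beta$, $\epsilon$, and $\zeta = \min\{\mu/4, 1/60\}$, which holds because $\Phi < \frac{3\epsilon}{32}\Psi$ makes $\Phi$ negligible; I would also need to double-check that the additive constant is controlled by $c^* = \max\{2,c,c'\}$ after all the absorptions. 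The other cases are routine since two genuine contractions simply add. Throughout, $a$ is a sufficiently small constant so that the $O(a^2)$ terms in Corollaries~\ref{corollary:Phi} and~\ref{corollary:Psi} are already folded into the stated bounds.
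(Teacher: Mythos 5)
Your proposal is correct and follows essentially the same route as the paper: the identical three-way case split on the signs of $x^s_{N/4}(t)$ and $x^s_{3N/4}(t)$, with the same sub-cases driven by Lemmas~\ref{lemma:Phi_1}--\ref{lemma:Psi_2} and the growth bounds of Corollaries~\ref{corollary:Phi} and~\ref{corollary:Psi}, absorbing the $\Gamma(x^s(t)) < cN$ (resp. $c'N$) sub-cases into the additive constant $c^*$. The numerical bookkeeping you flag in the mixed sub-cases is exactly what the paper carries out in its appendix (Cases 2.1, 2.2, 3.1, 3.2), and it goes through as you anticipate.
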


\begin{proof}

We consider various mutually exclusive cases, depending on the values of $x^s_{\frac{3N}{4}}(t)$ and $x^s_{\frac{N}{4}}(t)$.
For ease of notation, let $\Delta \Phi = \Phi(x^s(t+1)) - \Phi(x^s(t))$, $\Delta \Psi = \Psi(x^s(t+1)) - \Psi(x^s(t))$, and $\Delta \Gamma = \Gamma(x^s(t+1)) - \Gamma(x^s(t))$.

\textbf{Case 1: $x^s_{N/4}(t) \ge 0 \geq x^s_{\frac{3N}{4}}(t)$.}  From~\Cref{lemma:Phi_1} and~\Cref{lemma:Psi_1} we have that $\EX{}{\Delta \Gamma|x^s(t)} = \EX{}{\Delta \Phi|x^s(t)} + \EX{}{\Delta \Psi|x^s(t)} \le -\frac{\mu a}{2N}\Phi(x^s(t)) - \frac{3\epsilon a}{4N}\Psi(x^s(t)) +2 \le -\frac{\epsilon a}{2N}\Gamma(x^s(t)) +2$.

\textbf{Case 2: $x^s_{N/4}(t) \ge x^s_{\frac{3N}{4}}(t) \ge 0$.} If $\EX{}{\Delta \Phi|x^s} \le -\frac{1}{36} \frac{a}{N} \Phi(x^s(t))$, then~\cref{lemma:Psi_1} implies that $\EX{}{\Delta \Gamma|x^s} \le -\min{\{\frac{1}{36}, \frac{3\mu}{4}\}}\frac{a}{N}\Gamma+1$. Otherwise,~\Cref{lemma:Phi_2} implies that we have two distinct, exhaustive cases: (i) $\Phi(x^s(t)) < \frac{3 \cdot \mu}{32} \Psi(x^s(t))$, and (ii) $\Gamma(x^s(t)) < cN$. We consider those cases, as well as the remaining case that $0 \geq x^s_{N/4}(t) \ge x^s_{\frac{3N}{4}}(t)$, in~\Cref{app: missing proofs}.
\end{proof}

Finally, we are ready to prove~\Cref{lemma:GammaBound}.

\lemmaGammabound*
\begin{proof}[Proof  of~\Cref{lemma:GammaBound}]
    We will prove this by induction. For $t=0$ the statement is immediate.
\begin{align*}
    \EX{}{\Gamma(x^s(t+1))} &= \EX{\Gamma(x^s(t))}{\EX{}{\Gamma(x^s(t+1)) |\Gamma(x^s(t))}}\\
    &\leq \EX{}{\left(1-\zeta \frac{a}{N} \right) \Gamma(x^s(t)) + c^*} &(\Cref{lemma: first bound on gamma})\\
    &\le \left(1-\zeta \frac{a}{N} \right)\frac{c^*}{a \zeta} N + c^* & (\text{Induction Hypothesis})\\
    &=\frac{c^*}{a \zeta} N.
\end{align*}
Since $\Phi(x(t)) \le \Phi(x^s(t))$ and  $\Psi(x(t)) \le \Psi(x^s(t))$, $\Gamma(x(t)) \le \Gamma(x^s(t))$; the lemma follows.
\end{proof}

\section{Experiments}\label{sec: experiments}

In this section, we describe our experiments. We compare~\Cref{ourAlgorithm} against a parameterized family of benchmarks. Our benchmarks are formally described below.

\begin{enumerate}[leftmargin=*]
    \item The driver optimal algorithm: given a food donation $\ell = (x,y,w)$, match it to the food bank $f$ that minimizes the driver's travel distance, i.e. $f \in \argmin_{f' \in \calF} \{ d(x,f') + d(f',y) \}$.
    \item The greedy algorithm: given a food donation $\ell = (x,y,w)$, allocate it to the food bank $f$ with the smallest normalized load i.e., $f \in \argmin_{f' \in \calF}{\frac{w_{f'}(t)}{N_{f'}}}$, irrespective of its location. Since food banks have identical preferences, it is easy to see that this algorithm always allocates a donation to an ``unenvied'' food bank, and no envy-cycle ever arises. Thus, the resulting allocation satisfies the (weighted) \emph{envy freeness up to one item} ($\mathrm{EF}1$) guarantee, a well-established fairness notion.
    \item Greedy with a cutoff of $c$ miles: given a food donation $\ell = (x,y,w)$, consider all the food banks $f$ such that the driver has to travel at most $c$ additional miles for the delivery compared to the ``driver optimal'' food bank. Let $\calF^c = \{f \in \calF: (d(x,f)+d(f, y))-\min_{f' \in \calF}{\{d(x,f')+d(f', y)\}} \le c \}$). Greedy with a cutoff of $c$ picks the food bank in $\calF^c$ that has the smallest normalized load. This procedure smoothly interpolates between the greedy algorithm and the driver optimal algorithm: greedy with a cutoff of $0$ miles is exactly the driver optimal algorithm, and greedy with a cutoff of $c$ miles, for a large enough $c$, is the standard greedy algorithm.
\end{enumerate}

\paragraph{Data sources and Experiment Details.} In our experiments we construct the underlying graph of each area using the Bing maps API. In particular, each county in a state is represented by a node on a graph, and the edge weights, i.e., the distances between the counties, are computed via the map data. For each county, we use Feeding America's Map the Meal Gap 2023 dataset~\cite{MapMeal23} to get the food insecure and the total populations. Additionally, from the same dataset, we use the location of Feeding America's partner food banks~\cite{FeedAmerica24}. In each experiment, we sample the source and destination counties proportionally to the total population of the county (not the food insecure population) and the value of a donation is sampled from an exponential distribution with a mean of roughly $10,447/30 = 348$, to match the average amount of food (in pounds) that the Food Drop program receives daily, based on its current monthly average. In each of our experiments, we sample $50,000$ donations and our final results are obtained by performing $100$ such experiments and averaging the results. We compare the statistics of multiplicative envy and driver efficiency. Our code can be found in the supplementary material of this submission.

 We have tested~\Cref{ourAlgorithm} and benchmarks for all states with a sufficiently large number of food banks, and our results are consistent across the US. Here we present a representative set of four scenarios: the state of Indiana, Indiana++ (the area comprised of Indiana, Illinois, and Kentucky), the state of  California, and the state of Virginia. The first two scenarios (in~\Cref{subsec: indiana experiments} and~\Cref{app: missing experiments}, respectively) capture the performance of the Food drop program and possible future expansions of it. The other two scenarios (in~\Cref{subsec: california experiments} and~\Cref{app: missing experiments}, respectively) are chosen from the West and the South to increase the diversity of the presented results.



\subsection{Indiana experiments}\label{subsec: indiana experiments}

\begin{table}[h]
    \small
    \centering
    \begin{tabular}{l|c|c|c|c|c}
        \hline
        & \textbf{Metric} & \textbf{Alg. 1} & \textbf{Driver Optimal} & \textbf{Greedy} & \textbf{Greedy with cutoff} \\
        \hline
        \multirow{2}{*}{Fairness} & Max m-Envy & $1.0015$ & 2.04 & $1.0007$ & $1.0012$ \\
        & Mean m-Envy & $1.00025$ & 1.175 & $1.00012$ & $1.00020$ \\
        \hline
        \multirow{2}{*}{Driver Efficiency} & Max rel. distance & 2.93 & 1 & 34.3 & 2.5 \\
        & Mean rel. distance & 1.12 & 1 & 2.34 & 1.11 \\
        \hline
    \end{tabular}
    \caption{Fairness and Efficiency statistics for Indiana.}
    \label{tab:indiana}
\end{table}

In~\Cref{tab:indiana}, we see the performance of all algorithms across all metrics in Indiana, and in~\Cref{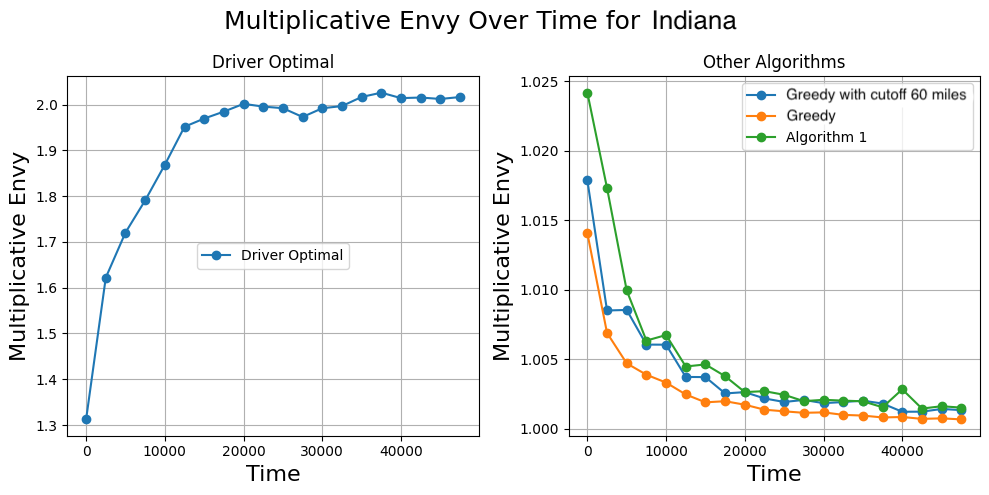} (\Cref{app: missing experiments}) we see how the multiplicative envy evolves over time. 
Regarding~\Cref{ourAlgorithm}, all metrics align with or outperform the theoretical guarantees. First, the maximum (and therefore the mean) multiplicative envy is very small. Second,
the driver that is worse off is asked to drive 2.93 times more compared to their optimal route, which is slightly below our theoretical upper bound of 3. On average, drivers only travel 1.12 times more. That is, on average, drivers increase their travel distance by only 12\% compared to their optimal route. 
The driver optimal algorithm achieves, by definition, optimal driver efficiency. However, its maximum and average multiplicative envy are orders of magnitude worse. On the other extreme, the greedy algorithm asks a driver to travel 34.3 times more than the optimal route, while it only provides a marginal improvement on multiplicative envy compared to~\Cref{ourAlgorithm}. 
Finally, greedy with an \emph{optimal} cutoff of 60 miles (that is, we tried different cut-off values and show here the one that achieves the best fairness-efficiency trade-off) shows a marginal improvement over~\Cref{ourAlgorithm} in both metrics. Of course, the lack of theoretical guarantees raises some concerns regarding the robustness of this superior performance under more adversarial settings. These concerns, as we see in the next section, are valid, as substantiated in other states such as California.


\subsection{California experiments}\label{subsec: california experiments}

\begin{table}[h]
    \small
    \centering
    \begin{tabular}{l|c|c|c|c|c}
        \hline
        & \textbf{Metric} & \textbf{Alg. 1} & \textbf{Driver Optimal} & \textbf{Greedy} & \textbf{Greedy with cutoff} \\
        \hline
        \multirow{2}{*}{Fairness} & Max m-Envy & $1.0045$ & 3.94 & $1.0035$ & $1.2$ \\
        & Mean m-Envy & $1.00054$ & 1.3 & $1.00041$ & $1.013$ \\
        \hline
        \multirow{2}{*}{Driver Efficiency} & Max rel. distance & 2.92 & 1 & 79.7 & 5.06 \\
        & Mean rel. distance & 1.06 & 1 & 3.55 & 1.098 \\
        \hline
    \end{tabular}
    \caption{Fairness and Efficiency statistics for California.}
    \label{tab:California}
\end{table}

In~\Cref{tab:California} we see that~\Cref{ourAlgorithm} is again near-optimal across metrics. The driver optimal and greedy algorithm are consistent with respect to their corresponding advantages and disadvantages, which are now even more exacerbated (the envy of the driver optimal algorithm is worse than in Indiana, and so is the driver efficiency of greedy). A key observation is the under-performance of the greedy with cutoff across all metrics. In~\Cref{tab:California} we show a single, specific cutoff value, 60 miles.

\begin{figure}[h]
  \centering
  \includegraphics[width=0.4\linewidth]{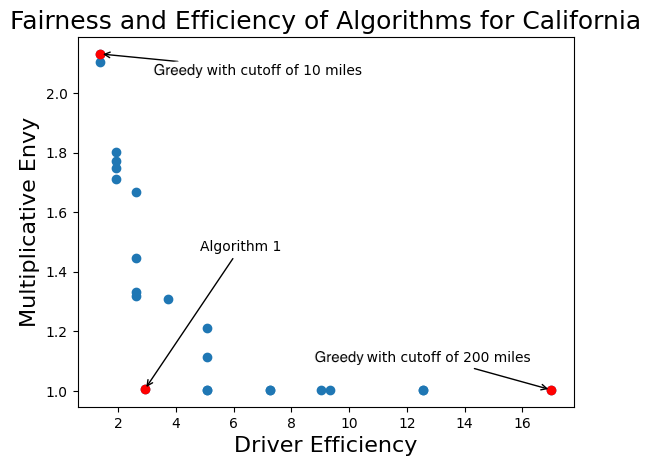} 
  \caption{Fairness-Efficiency trade-offs for the greedy with cutoff family and~\Cref{ourAlgorithm} in California.}
  \label{fig:paretoFrontierOfCutoff}
\end{figure}

\Cref{fig:paretoFrontierOfCutoff} illustrates the trade-off between the maximum multiplicative envy and the maximum/worst driver efficiency for various cutoffs. Clearly,~\Cref{ourAlgorithm}, positioned in the bottom left corner, is not Pareto dominated by any cutoff. See also~\Cref{California_Envy_Time_updated.png} (\Cref{app: missing experiments}) for the behavior of maximum envy over time for the different algorithms.



In~\Cref{fig:CaliforniaMap} we see how food is distributed across California. The leftmost image illustrates the distribution of the food-insecure population across all counties. For the remaining images, green color shades indicate that a county receives food near-proportionately to its food-insecure population; red color shades indicate that a county is under-served, while brown and black color shades indicate that a county is over-served. The driver optimal algorithm has widely different behavior across counties. For the greedy algorithm with a cutoff of 60 miles\footnote{As illustrated in \cref{fig:paretoFrontierOfCutoff}, certain cutoffs enable the greedy-with-cutoff algorithm to perform well in terms of fairness. However, at these cutoffs, the algorithm significantly lacks in truck driver efficiency. In \cref{fig:CaliforniaMap}, we selected a cutoff where greedy-with-cutoff shows comparable truck driver efficiency to Algorithm 1, thus highlighting its shortcomings in the fairness metric.} a particular region in the north-east is under-served.~\Cref{ourAlgorithm} achieves near-perfect coverage across the state.

\begin{figure}[H]
  \centering
  \includegraphics[width=0.95\linewidth]{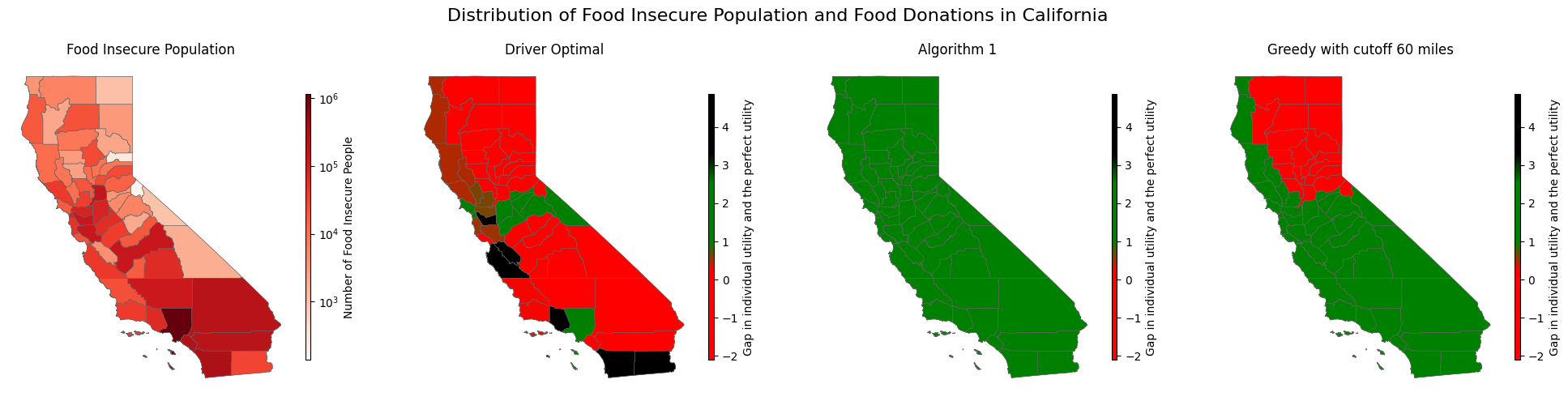} %
  \caption{The first image displays the distribution of the food-insecure population in California. In the subsequent images, the color indicates whether a county receives food near-proportionately (green), more-than-proportionately (brown and black), or under-proportionately (red) to its food-insecure population, under different algorithms.}
  \label{fig:CaliforniaMap}
\end{figure}


\subsection{Testing the $(\alpha, \beta)$-biased hypothesis in practice}
\label{section:alphaBeta}

Recall that in our platform we expect drivers to have an origin and destination at large grocery stores and distribution centers; the number of these is proportional to the total population of a county, which, in turn, correlates strongly with the food insecure population. Let $\calD'$ be the probability distribution that samples a food bank proportionally to the food insecure population it serves, and assume that $\calD'$ is $(\alpha, \beta)$-biased with respect to the actual population of the area each food bank serves. We note that our theoretical results only need bounds on the bias of $\calD'$, not on the bias of the distribution $\calD$ that samples at the county level. For different areas of interest we provide exact calculations of $\alpha$ and $\beta$.

In \cref{tab:alphabeta} we see the values of $\alpha$, $\beta$, as well as $f(\alpha,\beta) = 2 \ln\left( \frac{\alpha \beta -1}{\alpha \beta - \beta} \right)/ \ln\left( \frac{\alpha \beta -1}{\alpha -1} \right)$, the value that controls the error in~\Cref{theorem:gapBound}. As we can see, for all areas, $\alpha$ and $\beta$ are bounded by very small constants. And, in fact, $f(\alpha,\beta)$ is significantly bigger than $1$, which allows for extra slack in the probability distribution from which we assume the origin and destinations are sampled.
However, $\alpha$ and $\beta$ are not as small as $\sqrt{\frac{5}{4}} \approx 1.118$ and $\sqrt{\frac{4}{3}} \approx 1.155$ as~\Cref{theorem:gapBoundWeightedBalls} needs. Nevertheless, our experiments show that \cref{ourAlgorithm} performs exceptionally well in these areas, thus suggesting that further improvements in the tightness of our theoretical guarantees are possible.

\begin{table}[H]
    \centering
    \begin{tabular}{lccccc}
        \hline
        & \textbf{Metric} & \textbf{Indiana} & \textbf{Indiana++} & \textbf{California} & \textbf{Virginia} \\
        \hline
        \multirow{3}{*} & $\alpha$ & $1.12$ & $1.38$ & 1.43 & $1.32$ \\
        & $\beta$ & $1.26$ & 1.39 & $1.35$ & $1.44$ \\
        & $f(\alpha,\beta)$ & $1.62$ & 1.26 & $1.22$ & $1.29$ \\
        \hline
    \end{tabular}
    \caption{Comparison of $(\alpha, \beta)$ values across areas.}
    \label{tab:alphabeta}
\end{table}




\section*{Acknowledgements}

Marios Mertzanidis, Alexandros Psomas, and Paritosh Verma are supported in part by an NSF CAREER award CCF-2144208, a Google AI for Social Good award, and research awards from Google and Supra.

\bibliographystyle{alpha}
\bibliography{refs.bib}

\appendix

\section{Missing proofs}\label{app: missing proofs}

\begin{proof}[Proof of~\Cref{corollary:Phi}]
    \begin{align*}
        \EX{}{\Phi(x^s(t+1))- \Phi(x^s(t)) | x^s(t)} &\le \sum_{i=1}^n 2ap_i e^{ax^s_i(t)} \\
        &\le \beta^2 \sum_{i=1}^n 2ap^U_i e^{ax^s_i(t)} \\
        &\le \beta^2 \frac{2a}{n}\Phi(x^s(t)).
    \end{align*}
    The first inequality is implied by~\Cref{lemma: first phi bound} and the fact that for sufficiently small $a$ we have that $Sa \le 1$. The second inequality follows from~\Cref{lemma:abBound}. The last inequality is due to the fact that $p^U_i$'s are increasing with $i$ and $x_i$'s are decreasing.\footnote{Therefore, replacing $p^U_i$'s with $\frac{1}{n}$ gives an upper bound.}
\end{proof}

\begin{proof}[Proof of~\Cref{lemma:Phi_1}]
    First, we upper bound $\sum_{i =1}^N \left( p_i(a + Sa^2)\right)e^{ax^s_i(t)}$:
    \begin{align*}
        \sum_{i =1}^N \left( p_i\left(a + Sa^2\right)\right)e^{ax^s_i(t)} &\le \sum_{i < \frac{3N}{4}} \left( p_i\left(a + Sa^2\right)\right)e^{ax^s_i(t)} + \sum_{i \ge \frac{3N}{4}}^N \left( p_i\left(a + Sa^2\right)\right)e^{0} \\
        &\le \sum_{i < \frac{3N}{4}} \left( p_i\left(a + Sa^2\right)\right)e^{ax^s_i(t)} + 1 \\
        &\le \beta^2 \sum_{i < \frac{3N}{4}} p^U_i\left(a + Sa^2\right)e^{ax^s_i(t)} + 1,
    \end{align*}
    where the first line follows from the condition in the lemma statement ($x^s_{\frac{3N}{4}}(t) \le 0$), the second line follows from picking $a$ sufficiently small ($a + Sa^2 \leq 1$), and the last line follows from~\Cref{lemma:abBound}.
    Where the last inequality holds for a sufficiently small choice of $a$. Let $y_i = e^{ax^s_i(t)}$. 
    
    Finding an upper bound for our expression is equivalent to solving the following LP
    \begin{equation*}
\begin{array}{ll@{}ll}
\text{maximize}  & (a+Sa^2)\beta^2\displaystyle\sum\limits_{i < \frac{3N}{4}} p^U_i \cdot y_i\\
\\
\text{s.t.}& \sum_{i < \frac{3N}{4} }y_i \le \Phi(x^s(t)) \\ 
\\
&  y_{i-1} \ge y_i, \forall 1<i<\frac{3N}{4} 
\end{array}
\end{equation*}

Since $y_i$ is non-increasing and $p^U_i$ is increasing, we have that the maximum is achieved when $y_i = \frac{4\Phi(x^s(t))}{3N}$, $\forall 1 \le i \le \frac{3N}{4} $. In which case the maximum would be $(a+Sa^2)\beta^2\left(\frac{3}{4}\right)^2\frac{4\Phi(x^s(t))}{3N}$. Combining all of the above we have that:

\begin{align*}
    \EX{}{\Phi(x^s(t+1))- \Phi(x^s(t)) | x^s(t)} &\le \left(a+Sa^2 \right)\beta^2\frac{3}{4N}\Phi(x^s(t)) +1 - \left(\frac{a}{N}-S\frac{a^2}{W^2} \right)\Phi(x^s(t)) \\
    &= \left( \left(1+Sa \right)\beta^2\frac{3}{4} - 1+S\frac{a}{N}\right)\frac{a}{N} \Phi(x^s(t)) +1 \\
    &= \left( \beta^2\frac{3}{4}+ \beta^2\frac{3}{4}Sa+ - 1+S\frac{a}{N}\right)\frac{a}{N} \Phi(x^s(t)) +1 \\
    &= \left( \frac{3}{4}\left(\beta^2 -\frac{4}{3} \right)+ \left(\beta^2\frac{3}{4}+\frac{1}{N} \right)aS\right)\frac{a}{N} \Phi(x^s(t)) +1 \\
    &\le \left( -\frac{3}{4}\mu+ \left(1-\frac{3}{4}\mu+\frac{1}{N} \right)aS\right)\frac{a}{N} \Phi(x^s(t)) +1\\
    &\le -\frac{\mu a}{2N} \Phi(x^s(t)) +1,
\end{align*}
where the last inequality follows by picking $a$ small enough.
\end{proof} 

\begin{proof}[Proof of~\Cref{lemma:Psi_1}]
    First, we upper bound $\sum_{i =1}^N \left( p_i(-a + Sa^2)\right)e^{-ax^s_i(t)}$:
    \begin{align*}
        \sum_{i =1}^N \left( p_i\left(-a + Sa^2\right)\right)e^{-ax^s_i(t)} &\le \left(-a + Sa^2\right)\sum_{i \ge \frac{N}{4}} p_i e^{-ax^s_i(t)} \\
        &\le \left(-a + Sa^2\right)\frac{1}{\alpha^2}\sum_{i \ge \frac{N}{4}} p^U_i e^{-ax^s_i(t)},
    \end{align*}
    where the last inequality follows from~\Cref{lemma:abBound}. Now, set $z_i = e^{-ax^s_i(t)}$. Finding an upper bound for our expression is equivalent to solving the following LP:

    \begin{equation*}
\begin{array}{ll@{}ll}
\text{minimize}  & \displaystyle\sum\limits_{i \ge \frac{N}{4}} p^U_i \cdot z_i\\
\\
\text{s.t.}& \sum_{i \ge \frac{N}{4} }z_i \ge \Psi(x^s(t)) -\frac{N}{4} \\ 
\\
&  z_{i-1} \le z_i, \forall i>\frac{N}{4} 
\end{array}
\end{equation*}

Since $z_i$s are increasing we have that the minimum is achieved when $z_i = \frac{4\left(\Psi(x^s(t)) -\frac{N}{4}\right)}{3N}$, in which case the minimum of our original expression would be $(-a+Sa^2)\frac{1}{\alpha^2}\left(1-\frac{1}{4^2}\right)\frac{4\left(\Psi(x^s(t)) -\frac{N}{4}\right)}{3N}$. Combining all of the above we have that:
\begin{align*}
    \EX{}{\Psi(x^s(t+1))- \Psi(x^s(t)) | x^s(t)} &\le \left(-a+Sa^2 \right)\frac{1}{\alpha^2}\frac{15}{16}\left(\frac{4\left(\Psi(x^s(t)) -\frac{N}{4}\right)}{3N} \right) + \left(\frac{a}{N}+S\frac{a^2}{N^2}\right)\Psi(x^s(t)) \\
    &= \left( \left(-1+Sa \right)\frac{1}{\alpha^2}\frac{15}{16}\left(\frac{4}{3}\right) + \left(1+S\frac{a}{N}\right)\right)\frac{a}{N}\Psi(x^s(t)) + \left(a+Sa^2 \right)\frac{1}{\alpha^2}\frac{5}{16} \\
    &\le \left( -\frac{5}{4}\frac{1}{\alpha^2} + \frac{5}{4}\frac{1}{\alpha^2}Sa + 1 +S\frac{a}{N} \right)\frac{a}{N} \Psi(x^s(t)) + 1 \\
    &\le \left(-\frac{5}{4}\mu + \mu\frac{1}{2}   \right)\frac{a}{N} \Psi(x^s(t)) +1 \\
    &\le -\frac{3}{4}\mu \frac{a}{N} \Psi(x^s(t)) +1. \qedhere
\end{align*}
\end{proof}

\begin{proof}[Proof of~\Cref{lemma:Phi_2}]

\begin{align*}
    &\EX{}{\Phi(x^s(t+1))- \Phi(x^s(t)) | x^s(t)} \le \sum_{i =1}^N \left( p_i\left(a + Sa^2\right)-\left(\frac{a}{N} - S\frac{a^2}{N^2}\right) \right) e^{a x^s_i(t)} \\
    &\le \sum_{i \le N/3} \left( p_i\left(a + Sa^2\right)-\left(\frac{a}{N} - S\frac{a^2}{N^2}\right) \right) e^{a x^s_i(t)} + \left(a + Sa^2\right)\sum_{i>N/3} p_i e^{ax^s_i(t)} \\
    &\le \left(\left(a + Sa^2\right)\beta^2\frac{2}{3N} - \left(\frac{a}{N} - S\frac{a^2}{N^2}\right) \right)\Phi_{\le N/3}(x^s(t)) + \left(a + Sa^2\right)\frac{3}{2N}\beta^2 \Phi_{>N/3}(x^s(t)) \\
    &= \left(\frac{2}{3}\left(\beta^2 - \frac{3}{2}\right) +\left(\beta^2\cdot \frac{2}{3}+\frac{1}{N}\right)Sa \right)\frac{a}{N}\Phi_{\le N/3}(x^s(t)) + \left( \frac{3}{2}\beta^2 + \frac{3}{2}\beta^2Sa \right)\frac{a}{N}\Phi_{>N/3}(x^s(t))\\
    &=  -\left(\frac{2}{3}\left(\frac{3}{2}-\beta^2 \right) -\left(\beta^2\cdot \frac{2}{3}+\frac{1}{N}\right)Sa \right)\frac{a}{N}\Phi(x^s(t)) + \left(\frac{5}{6}\beta^2 +1 +\left(\frac{5}{6}\beta^2+\frac{1}{N}\right)Sa\right)\frac{a}{N}\Phi_{>N/3}(x^s(t))\\
    &\le -\frac{1}{18}\frac{a}{N}\Phi(x^s(t)) + \frac{3a}{N}\Phi_{>N/3}(x^s(t)),
\end{align*}
where the third inequality is due to the fact that for $i \le n/3$ we have that $p_i \le \beta^2 \cdot p_i^U\le \beta^2\frac{2}{3N}$ and for a given decreasing $x^s$, $\sum_{i>N/3} p^u_i e^{ax^s_i(t)}$ is maximized for uniform $p_i$.

Thus given that $\EX{}{\Phi(x^s(t+1)) - \Phi(x^s(t))| x^s(t)} \ge -\frac{1}{36} \frac{a}{N}\Phi(x^s(t))$ we have that:

\[\frac{a}{36 N}\Phi(x^s(t)) \le \frac{3a}{N}\Phi_{>N/3}(x^s(t)).\]

Now let $B = \sum_i \max{(0,x^s_i)}$. Then $\Phi_{\ge N/3}(x^s(t)) \le \frac{2N}{3}e^{\frac{3aB}{N}}$ (simply recall that $x^s_i$ are decreasing). Thus:
\[\frac{a}{36 N}\Phi(x^s(t)) \le 2ae^{\frac{3aB}{N}}.\]
On the other hand $x_{\frac{3N}{4}} > 0$ implies that $\Psi(x^s(t)) \ge \frac{N}{4} e^{\frac{4aB}{N}}$. Now if $\Phi(x^s(t)) < \frac{3 \cdot \mu}{32} \Psi(x^s(t))$ then we are done. Otherwise:

\begin{align*}
    2ae^{\frac{3aB}{N}} &\ge \frac{a}{36 N}\Phi(x^s(t)) \ge \frac{3 \cdot \epsilon}{32} \frac{a}{36 N}\Psi(x^s(t)) \ge \frac{3 \cdot \mu}{32} \frac{a}{144}e^{\frac{4aB}{N}}.
\end{align*}

This implies that $e^{\frac{aB}{N}} \le \frac{288}{\frac{3 \cdot \epsilon}{32}}$ It follows that:

\[\Gamma \le (1+\frac{3 \cdot \mu}{32}) \Phi(x^s(t)) \le  (1+\frac{3 \cdot \mu}{32}) 72\left(\frac{288}{\frac{3 \cdot \mu}{32}}\right)^3 N. \]
\end{proof}

\begin{proof}[Proof of~\Cref{lemma:Psi_2}]

\begin{align*}
    &\EX{}{\Psi(x^s(t+1))- \Psi(x^s(t)) | x^s(t)} \le \sum_{i =1}^N \left( p_i\left(-a + Sa^2\right)+\left(\frac{a}{N} + S\frac{a^2}{N^2}\right) \right) e^{-a x^s_i(t)} \\
    &\le \sum_{i > \frac{2N}{3}} \left( p_i\left(-a + Sa^2\right)+\left(\frac{a}{N} + S\frac{a^2}{ N^2}\right) \right) e^{-a x^s_i(t)} + \left(\frac{a}{N} + S\frac{a^2}{ N^2}\right)\sum_{i \le \frac{2N}{3}} e^{-ax^s_i(t)} \\
    &\le \left(\left(-a + Sa^2\right)\left(\frac{4}{3N} - \frac{1}{N^2} \right)\frac{1}{\alpha^2} + \left(\frac{a}{N} + S\frac{a^2}{N^2}\right) \right)\Psi_{> \frac{2N}{3}}(x^s(t)) + \left(\frac{a}{N} + S\frac{a^2}{ N^2}\right) \Psi_{\le \frac{2N}{3}}(x^s(t))\\
    &\le -\frac{a}{30N}\Psi_{> \frac{2N}{3}}(x^s(t)) + \frac{5a}{4N}\Psi_{\le \frac{2N}{3}}(x^s(t))\\
    &\le  -\frac{a}{30N}\Psi(x^s(t)) + \frac{3a}{2N}\Psi_{\le \frac{2N}{3}}(x^s(t)).
\end{align*}

Thus given that $\EX{}{\Psi(x^s(t+1)) - \Psi(x^s(t))| x^s(t)} \ge - \frac{a}{60N}\Psi(x^s(t))$ we have that:

\[\frac{a}{60N}\Psi(x^s(t)) \le  \frac{3a}{2N}\Psi_{\le \frac{2N}{3}}(x^s(t)).\]

Now let $B = \sum_i \max{(0,x^s_i(t))}$. Then $\Psi_{\le \frac{2N}{3}}(x^s(t)) \le \frac{2N}{3}e^{\frac{3aB}{N}}(x^s(t))$ (simply recall that $x^s_i$ are decreasing). Thus:
\[\frac{a}{60N}\Psi(x^s(t)) \le a  e^{\frac{3aB}{N}}.\]
On the other hand $x_{\frac{N}{4}} < 0$ implies that $\Phi(x^s(t)) \ge \frac{N}{4} e^{\frac{4aB}{N}}$. Now, if $\Psi(x^s(t)) < \frac{\mu}{6} \Phi(x^s(t))$ then we are done. Otherwise:

\[a  e^{\frac{3aB}{N}} \ge \frac{a}{60N}\Psi(x^s(t)) \ge \frac{a}{60N} \frac{\mu}{6} \Phi(x^s(t)) \ge \frac{a}{240} \frac{\mu}{6} e^{\frac{4aB}{n}}.\]

This implies that $e^{\frac{aB}{N}} \le \frac{240}{\frac{\mu}{6}}$, and therefore:

\[\Gamma \le (1+\frac{\mu}{6}) \Psi(x^s(t)) \le  (1+\frac{\mu}{6}) 60 \left(\frac{240}{\frac{\mu}{6}}\right)^3 N.\]
    
\end{proof}

\begin{proof}[Missing cases from the proof of~\Cref{lemma: first bound on gamma}]

Here, we consider the missing cases from the case analysis of the proof of~\Cref{lemma: first bound on gamma}.

\textbf{Case 2.1: $\Phi(x^s(t)) < \frac{3 \cdot \mu}{32} \Psi(x^s(t))$.}  In this case, from~\cref{corollary:Phi} and~\cref{lemma:Psi_1} we get:
\begin{align*}
    \EX{}{\Delta \Gamma|x^s(t)} &\le \EX{}{\Delta \Phi|x^s(t)} + \EX{}{\Delta \Psi|x^s(t)}\\
    &\le \frac{8a}{3N}\Phi(x^s(t)) - \frac{3\mu a}{4N} \Psi(x^s(t)) + 1\\ 
    &\le \left(\frac{1}{4}\mu -\frac{3}{4}\mu \right)\frac{a}{N}\Psi(x^s(t)) + 1  \\
    &\le -\frac{a}{2N}\Gamma(x^s(t)) + 1.
\end{align*}
\textbf{Case 2.2: $\Gamma(x^s(t)) < cN$.} In this case, we can combine~\cref{corollary:Phi} and~\cref{corollary:Psi} and get $\EX{}{\Delta \Gamma|x^s(t)} \le \frac{8a}{3N}\Gamma(x^s(t)) \le 6ac \le c -(1-6a)c \le c - \frac{(1-6a)}{N}\Gamma(x^s(t))$.

\textbf{Case 3: $0 \geq x^s_{N/4}(t) \ge x^s_{\frac{3N}{4}}(t)$.} If $\EX{}{\Delta \Psi|x^s(t)} \le -\frac{a}{60N} \Phi(x^s(t))$,~\cref{lemma:Phi_1} implies $\EX{}{\Delta \Gamma|x^s(t)} \le -\min{\{\frac{1}{60}, \frac{\mu}{2}\}}\frac{a}{N}\Gamma(x^s(t))+1$. Otherwise, by~\Cref{lemma:Psi_2}, we have two distinct cases:

\textbf{Case 3.1: $\Psi(x^s(t)) < \frac{\mu}{6} \Phi(x^s(t))$.} In this case, from \cref{corollary:Psi} and \cref{lemma:Phi_1} we get that:
\begin{align*}
    \EX{}{\Delta \Gamma|x^s(t)} &\le \EX{}{\Delta \Phi|x^s(t)} + \EX{}{\Delta \Psi|x^s(t)}\\
    &\le \frac{3a}{2N}\Psi(x^s(t)) - \frac{\mu a}{2N} \Phi(x^s(t)) + 1\\ 
    &\le \left(\frac{3}{2}\frac{\mu}{6} -\frac{\mu}{2} \right)\frac{a}{N}\Phi(x^s(t)) + 1 \\
    &\le -\frac{a}{4N}\Gamma(x^s(t)) + 1.
\end{align*}
\textbf{Case 3.2: $\Gamma(x^s(t)) < c' N$.} In this case we can combine \cref{corollary:Phi} and \cref{corollary:Psi} and get $\EX{}{\Delta \Gamma|x^s(t)}  \le c' - \frac{(1-6a)}{N}\Gamma(x^s(t))$.
\end{proof}

\section{Missing experiments and figures}\label{app: missing experiments}

\subsection{Missing figures}

\begin{figure}[H]
  \centering
  \includegraphics[width=0.7\linewidth]{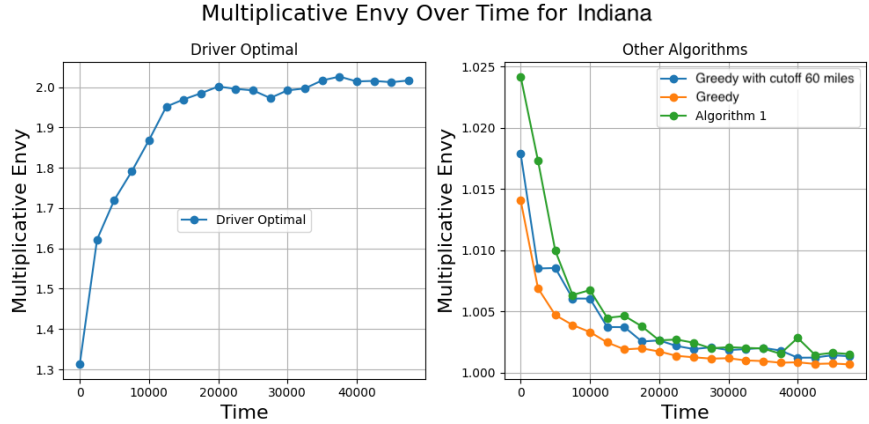} 
  \caption{Maximum multiplicative envy quickly surpasses $1.5$ and converges to $2$ for the driver optimal algorithm, whereas it monotonically and quickly converges to $1$ for the other algorithms.}
  \label{Indy_Envy_Time.png}
\end{figure}

\begin{figure}[H]
  \centering
  \includegraphics[width=0.75\linewidth]{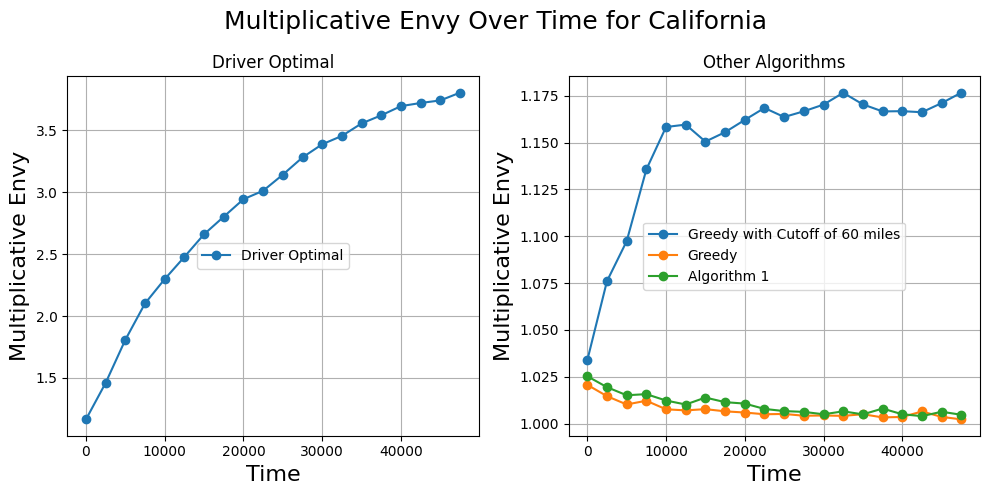} 
  \caption{Maximum envy increases with time for the driver optimal, and greedy with cutoff algorithm, whereas it doesn't grow with time for~\Cref{ourAlgorithm} and greedy.}
  \label{California_Envy_Time_updated.png}
\end{figure} 

\subsection{Indiana++ experiments}\label{subsec: indiana plus experiments}

\begin{table}[H]
    \centering
    \begin{tabular}{l|c|c|c|c|c}
        \hline
        & \textbf{Metric} & \textbf{Alg. 1} & \textbf{Driver Opt.} & \textbf{Greedy} & \textbf{Greedy with cutoff} \\
        \hline
        \multirow{2}{*}{Fairness} & Max m-Envy & $1.004$ & 9.93 & $1.0017$ & $1.0037$ \\
        & Mean m-Envy & $1.00052$ & 1.57 & $1.00024$ & $1.00048$ \\
        \hline
        \multirow{2}{*}{Driver Efficiency} & Max rel. distance & 2.93 & 1 & 130.46 & 3.38 \\
        & Mean rel. distance & 1.1 & 1 & 2.8 & 1.08 \\
        \hline
    \end{tabular}
    \caption{Fairness and Efficiency comparison for Indiana++ (Indiana, Kentucky, and Illinois combined).}
    \label{tab:midwest}
\end{table}

In \cref{tab:midwest} we see that our results in ``Indiana++'' are similar to the ones presented for the state of Indiana; for ``Greedy with Cutoff,'' we show the metrics for a cutoff of 60 miles. We can see that the guarantees of the benchmarks are worse, indicating that they do not scale well as we increase the distances, the number of nodes in the underlying graph, and the number of food banks that need to be served.

\begin{figure}[H]
  \centering
  \includegraphics[width=0.75\linewidth]{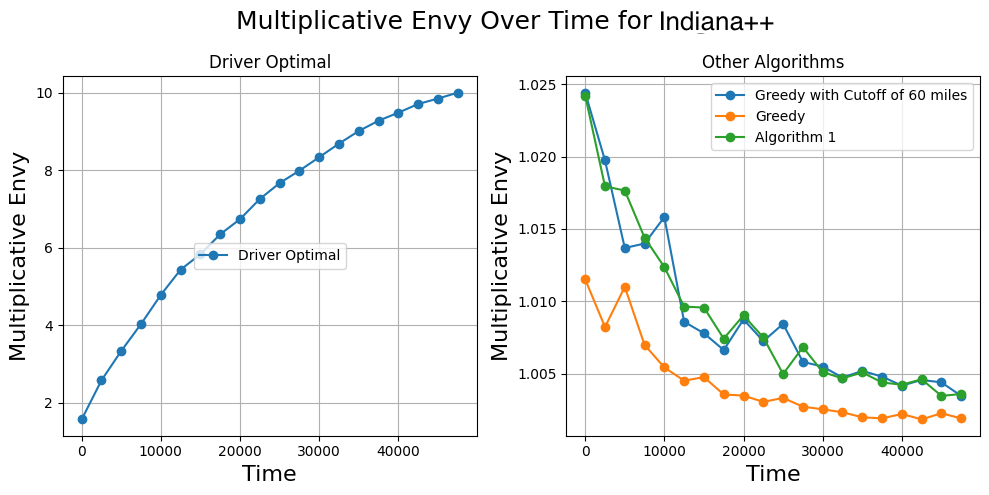} 
  \caption{Maximum envy increases over time for the driver optimal algorithm, whereas it doesn't grow with time for~\Cref{ourAlgorithm}, Greedy, and Greedy with a cutoff.}
  \label{Indy_Envy_Time_updated.png}
\end{figure}

\subsection{Virginia experiments}\label{subsec: virginia experiments}
Table for statistics of fairness and driver efficiency. 

\begin{table}[H]
    \centering
    \begin{tabular}{l|c|c|c|c|c}
        \hline
        & \textbf{Metric} & \textbf{Alg. 1} & \textbf{Driver Opt.} & \textbf{Greedy} & \textbf{Greedy with cutoff} \\
        \hline
        \multirow{2}{*}{Fairness} & Max m-Envy & $1.00079$ & 3.86 & $1.00035$ & $1.1$ \\
        & Mean m-Envy & $1.00017$ & 1.43 & $1.000076$ & $1.017$ \\
        \hline
        \multirow{2}{*}{Driver Efficiency} & Max rel. distance & 2.92 & 1 & 190.2 & 13.99 \\
        & Mean rel. distance & 1.09 & 1 & 2.5 & 1.048 \\
        \hline
    \end{tabular}
    \caption{Fairness and Efficiency comparison for Virginia.}
    \label{tab:example}
\end{table}

In Virginia, we see a similar pattern as the one in California.~\Cref{ourAlgorithm} is not dominated by any other algorithm on the desired metrics. For ``Greedy with Cutoff,'' we show the metrics for a cutoff of 60 miles. The performance for different cutoffs can be seen in~\Cref{fig: Virginia}.

\begin{figure}[H]
  \centering
  \includegraphics[width=0.5\linewidth]{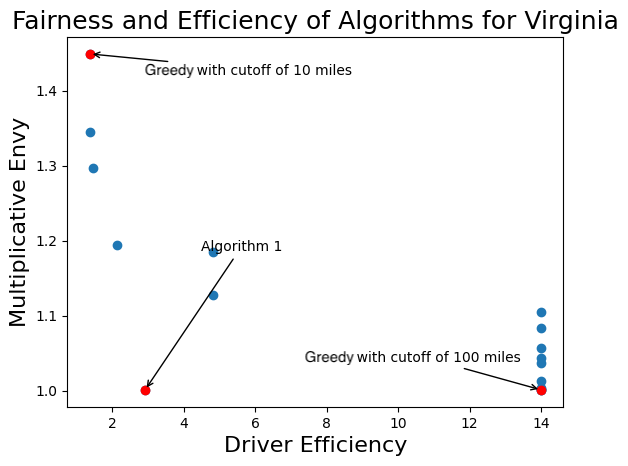} 
  \caption{Comparing~\Cref{ourAlgorithm} to Greedy with various cutoffs for Virginia; no cutoff Pareto dominates~\Cref{ourAlgorithm}.}
  \label{fig: Virginia}
\end{figure}

\end{document}